\DeclareRobustCommand{\mklyxadded}[1]{\textcolor{lyxadded}\bgroup#1\egroup}
\DeclareRobustCommand{\mklyxdeleted}[1]{\textcolor{lyxdeleted}\bgroup\mklyxsout{#1}\egroup}
\DeclareRobustCommand{\mklyxsout}[1]{\ifx\\#1\else\sout{#1}\fi}
\numberwithin{equation}{section}
\numberwithin{figure}{section}
\tikzset{
  schraffiert/.style={pattern=horizontal lines,pattern color=#1},
  schraffiert/.default=black
}
\tikzset{
    ultra thin/.style= {line width=0.1pt},
    very thin/.style=  {line width=0.2pt},
    thin/.style=       {line width=0.4pt},
    semithick/.style=  {line width=0.6pt},
    thick/.style=      {line width=0.8pt},
    very thick/.style= {line width=1.2pt},
    ultra thick/.style={line width=2.4pt}
}
\definecolor{hellgrau}{rgb}{0.93,0.93,0.93}
\definecolor{hellergrau}{rgb}{0.97,0.97,0.97}
\definecolor{hellgruen}{rgb}{0.6,1.35,0.5}
\definecolor{grau}{rgb}{0.93,0.93,0.93}
\definecolor{hellblau}{rgb}{0.8,0.8,2.0}
\definecolor{blau}{rgb}{0.3,0.5,2.0}
\definecolor{hellrot}{rgb}{2.0,0.6,0.6}
\definecolor{gruen}{rgb}{0.3,0.75,0.2}
\definecolor{rot}{rgb}{0.9,0.1,0.1}
\definecolor{orang}{rgb}{1.3,0.65,0}
\DeclareMathAlphabet{\mathpzc}{OT1}{pzc}{m}{it}
\newcommand{\Aa}{{\mathcal A}}
\newcommand{\precd}{\prec\!\!\prec}					
\newcommand{\strict}{\subseteq_{\text{strict}}} 	
\newcommand{\ind}{\mbox{ind}}						
\newcommand{\indB}{\mbox{ind}_B}						
\newcommand{\motw}{=_2}
\newcommand{\Hmm}[1]{\leavevmode{\marginpar{\tiny%
$\hbox to 0mm{\hspace*{-0.5mm}$\leftarrow$\hss}%
\vcenter{\vrule depth 0.1mm height 0.1mm width \the\marginparwidth}%
\hbox to 0mm{\hss$\rightarrow$\hspace*{-0.5mm}}$\\\relax\raggedright #1}}}
\definecolor{green}{RGB}{0, 180, 0}
\definecolor{cyan}{RGB}{0, 180, 180}
\definecolor{yellow}{RGB}{211,211,0}
\theoremstyle{plain}
\newtheorem{thm}{\protect\theoremname}[section]
\theoremstyle{remark}
\newtheorem*{rem*}{\protect\remarkname}
\theoremstyle{plain}
\newtheorem{prop}[thm]{\protect\propositionname}
\theoremstyle{definition}
\newtheorem{defn}[thm]{\protect\definitionname}
\newtheorem{example}[thm]{\protect\examplename}
\theoremstyle{plain}
\newtheorem{lem}[thm]{\protect\lemmaname}
\theoremstyle{remark}
\newtheorem{rem}[thm]{\protect\remarkname}
\providecommand{\definitionname}{Definition}
\providecommand{\examplename}{Example}
\providecommand{\lemmaname}{Lemma}
\providecommand{\propositionname}{Proposition}
\providecommand{\remarkname}{Remark}
\providecommand{\theoremname}{Theorem}
\begin{document}

\global\long\def\theenumi{\alph{enumi}}%

\global\long\def\ui{\mathbf{\textrm{i}}}%

\global\long\def\ue{\mathbf{\textrm{e}}}%

\global\long\def\ud{\mathbf{\textrm{d}}}%

\global\long\def\sgn{\mathrm{sign}}%

\global\long\def\id{\mathbf{1}}%

\global\long\def\C{\mathbb{C}}%

\global\long\def\R{\mathbb{R}}%

\global\long\def\Q{\mathbb{Q}}%

\global\long\def\N{\mathbb{N}}%

\global\long\def\Z{\mathbb{Z}}%

\global\long\def\Id{\mathbf{\mathbb{I}}}%

\global\long\def\V{\mathcal{V}}%


\global\long\def\Aa{\mathcal{A}}%

\global\long\def\tree{\mathcal{T}_{\alpha}}%

\global\long\def\graph{\mathcal{G}_{\alpha}}%

\global\long\def\paths{\Gamma_{\alpha}}%

\global\long\def\emap{E_{\alpha}}%

\global\long\def\Cen{Z}%

\global\long\def\co{\textbf{c}}%

\global\long\def\cop{\textbf{\ensuremath{\widetilde{\co}}}}%

\global\long\def\cm{[\textbf{c},m]}%

\global\long\def\cmn{[\textbf{c},m,n]}%

\global\long\def\Co{\mathscr{C}}%

\global\long\def\emp{\emptyset}%

\global\long\def\tr{\textit{tr}}%

\global\long\def\Nz{\N_{0}}%

\global\long\def\Nmo{\N_{-1}}%


\global\long\def\precd{\prec_{\mathrm{str}}}%

\global\long\def\strict{\subseteq_{\mathrm{str}}}%

\global\long\def\Mult{\textrm{\ensuremath{\mathcal{M}}}}%

\global\long\def\relA{\mbox{ind}^{A}_{\textrm{rel}}}%

\global\long\def\ind{\mbox{ind}}%

\global\long\def\indA{\mbox{ind}_{A}}%

\global\long\def\indB{\mbox{ind}_{B}}%

\global\long\def\indC{\mbox{ind}_{C}}%

\global\long\def\dR{\delta_{R}}%

\global\long\def\motw{=_{2}}%

\global\long\def\spec{\mathrm{spec}}%

\global\long\def\tA{\mathit{A}}%

\global\long\def\tB{\mathit{B}}%

\global\long\def\tC{\mathit{G}}%

\global\long\def\sigc{\sigma_{\co}}%

\global\long\def\sigcm{\sigma_{[\co,m]}}%

\global\long\def\sigcmn{\sigma_{[\co,m,n]}}%

\global\long\def\sigk{\sigma_{k}}%

\global\long\def\sigkm{\sigma_{k-1}}%

\global\long\def\sigkmm{\sigma_{k-2}}%

\global\long\def\sigkp{\sigma_{k+1}}%

\global\long\def\sigkpp{\sigma_{k+2}}%

\global\long\def\alk{\alpha_{k}}%

\global\long\def\Ham{H_{\alpha,V}}%

\global\long\def\Hrat{H_{\frac{p}{q},V}}%

\global\long\def\Halk{H_{\alk,V}}%

\global\long\def\nHcV{H^{\times n}_{\co,V}}%

\global\long\def\nHc{H^{\times n}_{\co}}%

\global\long\def\nHcm{H^{\times n}_{[\co,m]}}%

\global\long\def\nHcmV{H^{\times n}_{[\co,m],V}}%

\global\long\def\Sturm{\omega_{\alpha}}%

\global\long\def\IDS{N_{\alpha,V}}%

\global\long\def\Hc{H_{\co}}%

\global\long\def\HcV{H_{\co,V}}%

\global\long\def\Hcm{H_{[\co,m]}}%

\global\long\def\HcmV{H_{[\co,m],V}}%

\global\long\def\Hcmn{H_{[\co,m,n]}}%

\global\long\def\HcmnV{H_{[\co,m,n],V}}%

\global\long\def\Hcmo{H_{[\co,m,1]}}%

\global\long\def\Hcz{H_{[\co,0]}}%

\global\long\def\Hco{H_{[\co,1]}}%

\global\long\def\thc{\theta_{\co}}%

\global\long\def\thcm{\theta_{[\co,m]}}%

\global\long\def\thcmn{\theta_{[\co,m,n]}}%

\global\long\def\pc{p_{\co}}%

\global\long\def\pcm{p_{[\co,m]}}%

\global\long\def\pcmn{p_{[\co,m,n]}}%

\global\long\def\qc{q_{\co}}%

\global\long\def\qcm{q_{[\co,m]}}%

\global\long\def\qcmn{q_{[\co,m,n]}}%

\global\long\def\tc{t_{\co}}%

\global\long\def\tcm{t_{[\co,m]}}%

\global\long\def\tcmn{t_{[\co,m,n]}}%

\global\long\def\tco{t_{[\co,1]}}%

\global\long\def\tcmo{t_{[\co,-1]}}%

\global\long\def\tcz{t_{[\co,0]}}%

\global\long\def\lc{\lambda_{\co}}%

\global\long\def\lcm{\lambda_{[\co,m]}}%

\global\long\def\lcmn{\lambda_{[\co,m,n]}}%

\global\long\def\lo{\lambda_{\mathbf{o}}}%

\global\long\def\mo{\mu_{\mathbf{o}}}%

\global\long\def\Nc{N_{\co}}%

\global\long\def\Ncm{N_{[\co,m]}}%

\global\long\def\Ncmn{N_{[\co,m,n]}}%

\global\long\def\Ic{I_{\co}}%

\global\long\def\Ico{I^{1}_{[\co,1]}}%

\global\long\def\Icmno{I^{1}_{[\co,m,n]}}%

\global\long\def\IcmnN{I^{M+1}_{[\co,m,n]}}%

\global\long\def\Icz{I_{[\co,0]}}%

\global\long\def\Jcz{J_{[\co,0]}}%

\global\long\def\Kcz{K_{[\co,0]}}%

\global\long\def\cz{[\co,0]}%

\global\long\def\sigcz{\sigma_{[\co,0]}}%

\global\long\def\Jcm{J_{[\co,m]}}%

\global\long\def\Kcm{K_{[\co,m]}}%

\global\long\def\Icm{I_{[\co,m]}}%

\global\long\def\Icmn{I_{[\co,m,n]}}%

\global\long\def\Icmo{I_{[\co,m,1]}}%

\global\long\def\Icmoo{I^{1}_{[\co,m,1]}}%

\global\long\def\Icmoi{I^{i}_{[\co,m,1]}}%

\global\long\def\Icmi{I^{i}_{[\co,m]}}%

\global\long\def\Icmni{I^{i}_{[\co,m,n]}}%

\global\long\def\Icmnj{I^{j}_{[\co,m,n]}}%

\global\long\def\Icmii{I^{i+1}_{[\co,m]}}%

\global\long\def\Icmnii{I^{i+1}_{[\co,m,n]}}%

\global\long\def\ohn{\omega_{\frac{p_{k}}{q_{k}}}}%

\global\long\def\ohnm{\omega_{\frac{p_{k-1}}{q_{k-1}}}}%

\global\long\def\ohnp{\omega_{\frac{p_{k+1}}{q_{k+1}}}}%

\global\long\def\fl#1{\emph{\ensuremath{\left\lfloor #1\right\rfloor }}}%

\global\long\def\set#1#2{\left\{  #1\thinspace:\thinspace#2\right\}  }%

\global\long\def\newmacroname{\{\}}%

\global\long\def\crit{V_{\mathrm{crit}}}%

\global\long\def\crito{V^{quasi}_{\mathrm{crit}}}%

\title{The dry ten Martini problem for Sturmian Hamiltonians}
\author{Ram Band, Siegfried Beckus, Raphael Loewy}
\address{Department of Mathematics\\
Technion - Israel Institute of Technology\\
Haifa, Israel}
\email{ramband@technion.ac.il}
\address{Institute of Mathematics\\
University of Potsdam\\
Potsdam, Germany}
\email{beckus@uni-potsdam.de}
\address{Department of Mathematics\\
Technion - Israel Institute of Technology\\
Haifa, Israel}
\email{loewy@technion.ac.il}
\begin{abstract}
The dry ten Martini problem for Sturmian Hamiltonians is solved. Concretely,
we prove that all the predicted spectral gaps ``are there'' for
all the Schrödinger operators with Sturmian potentials and non-vanishing
coupling constant. A key approach towards the solution is a representation
of the spectrum as the boundary of an infinite tree. This tree is
constructed using periodic approximations and encodes substantial
spectral characteristics.
\end{abstract}

\maketitle

\section{Introduction and main results \label{sec: introduction and main results}}

For $\alpha\in[0,1]$ and $V\in\R$, consider the self-adjoint operator
$H_{\alpha,V}:\ell^{2}(\Z)\to\ell^{2}(\Z)$ defined by 
\begin{align}
(\Ham\psi)(n) & :=\psi(n+1)+\psi(n-1)+V\chi_{\left[1-\alpha,1\right)}(n\alpha\mod 1)\thinspace\psi(n),\label{eq: Hamiltonian defined}
\end{align}
where $\chi_{\left[1-\alpha,1\right)}$ is the characteristic function
of the interval $\left[1-\alpha,1\right)$ and $V\in\R$ is the strength
of the potential, which is called the \emph{coupling constant}. When
$\alpha\notin\Q$, this operator $\Ham$ is called a \emph{Sturmian
Hamiltonian,} since the sequence $\chi_{\left[1-\alpha,1\right)}(\xi+n\alpha\mod 1)$
is called a \emph{Sturmian sequence} for $\xi\in[0,1]$. The parameter
$\xi$ may be set to zero for the purpose of the current paper, see
Appendix~\ref{App: Sturmian dynamical systems}.

Let $\Ham|_{[0,n-1]}$ be the restriction of the operator to $\ell^{2}(\{0,\ldots,n-1\})$.
Then $\Ham|_{[0,n-1]}$ is a hermitian $n\times n$ matrix with $\sigma\left(\Ham|_{[0,n-1]}\right)$
denoting its multiset of $n$ eigenvalues (repeated according to their
multiplicities). The limit 
\begin{equation}
\IDS(E):=\lim_{n\to\infty}\frac{\#\set{\lambda\in\sigma\left(\Ham|_{[0,n-1]}\right)}{\lambda\leq E}}{n}\label{eq: definition of DOS}
\end{equation}
is known to exist for all $\alpha\in[0,1]$, $V\in\R$ and $E\in\R$,
see e.g. \cite{Hof93,DaFi22-book_1}. The function $E\mapsto N_{\alpha,V}(E)$
is called the \emph{integrated density of states (IDS)} of $H_{\alpha,V}$.
We denote the spectrum of $\Ham$ by $\sigma(\Ham)$, and mention
two fundamental properties of the IDS:
\begin{enumerate}[label=(IDS\arabic*)]
\item \begin{flushleft}
\label{enu: IDS-property-1} The IDS, $\IDS:\R\rightarrow\left[0,1\right]$
is a monotone, non-decreasing and continuous function.
\par\end{flushleft}
\item \begin{flushleft}
\label{enu: IDS-property-2} We have $E\in\R\backslash\sigma(\Ham)$
if and only if there exists an $\varepsilon>0$ such that the restriction
$\IDS$ is constant on $\left(E-\varepsilon,E+\varepsilon\right)$.
\par\end{flushleft}

\end{enumerate}
The connected components of $\R\backslash\sigma(\Ham)$ are called
\emph{spectral gaps} (or just gaps). Since the IDS is constant on
the spectral gaps, and attains different values at different gaps,
these values are commonly called \emph{gap labels}. Our main theorem
determines the set of appearing gap labels and thus solves the dry
ten Martini problem for Sturmian Hamiltonians.
\begin{thm}
[All gaps are there]\label{thm: all gaps are open} For all $\alpha\in[0,1]\setminus\Q$
and $V\in\R\setminus\{0\}$,
\begin{equation}
\set{\IDS(E)}{E\in\R\backslash\sigma(\Ham)}=\set{l\alpha\mod 1}{l\in\Z}\cup\{1\}.\label{eq: thm-all gaps are open}
\end{equation}
\end{thm}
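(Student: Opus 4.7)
The plan is to split the equality in (\ref{eq: thm-all gaps are open}) into two inclusions and handle them separately. The inclusion $\subseteq$ is a direct consequence of the classical \emph{Gap Labelling Theorem} (Bellissard, Johnson--Moser): the Sturmian potential arises from a minimal, uniquely ergodic circle rotation of rotation number $\alpha$, so any value $\IDS(E)$ on a spectral gap must lie in $(\alpha\Z + \Z)\cap[0,1] = \{l\alpha \mod 1 : l\in\Z\}\cup\{1\}$. I would cite the standard reference for this and devote the rest of the argument to the reverse inclusion, which is the genuine Dry Ten Martini content.

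For $\supseteq$ I would combine periodic approximation with the tree structure announced in the abstract. Let $p_n/q_n$ denote the continued-fraction convergents of $\alpha$, so that $lp_n/q_n \mod 1 \to l\alpha \mod 1$ for every $l\in\Z$. For each $n$, the operator $H_{p_n/q_n,V}$ is $q_n$-periodic, and a Floquet--Bloch analysis (using $V\neq 0$) shows that $\sigma(H_{p_n/q_n,V})$ is a disjoint union of $q_n$ closed bands, with the periodic IDS taking the value $k/q_n$ on its $k$-th internal gap, $k=1,\ldots,q_n-1$. Moreover $\sigma(H_{p_n/q_n,V})$ converges to $\sigma(\Ham)$ in Hausdorff distance and the periodic IDS converges uniformly to $\IDS$. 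Building on this, I would construct a rooted tree $\mathcal{T}_\alpha$ whose depth-$n$ vertices are the $q_n$ bands of $H_{p_n/q_n,V}$, with an edge from each level-$(n+1)$ band to the unique level-$n$ band containing it (the required nesting being supplied by the three-block / trace-map structure relating successive periodic approximants). The set of infinite branches of $\mathcal{T}_\alpha$ then parametrises $\sigma(\Ham)$.

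Each candidate label $l\alpha \mod 1$ singles out, at every sufficiently large level $n$, a specific internal gap of $H_{p_n/q_n,V}$, or equivalently a pair of adjacent subtrees of $\mathcal{T}_\alpha$ that must be separated in the limit by an actual gap of $\sigma(\Ham)$. My plan is: (i) show that from some depth $n_0(l)$ onward, the two candidate bands are distinct vertices of $\mathcal{T}_\alpha$ at every subsequent level; (ii) prove a lower bound, persistent as $n\to\infty$, on the distance between the corresponding level-$n$ bands -- i.e.\ that their separation does not close in the limit. The labels $l\alpha\mod 1$ with arbitrarily small $|l|$ should stabilise at very shallow depth, those with large $|l|$ only after a correspondingly deeper descent.

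Step (ii) is the main obstacle, and the place where the tree combinatorics should do the real work. I expect the structure of $\mathcal{T}_\alpha$ to reduce the analysis to finitely many local configurations indexed by the continued-fraction coefficient $a_{n+1}$, and for the width of the relevant band splitting to be controlled quantitatively by the trace-map / Chebyshev-polynomial recursion governing $H_{p_n/q_n,V}\to H_{p_{n+1}/q_{n+1},V}$. The delicate regime is $|V|$ small, where bands of early approximants overlap heavily and many ``tentative'' gaps close again at the next level; there one must argue that for each target label the relevant gap ultimately opens once the depth passes an $l$-dependent threshold, and then remains open at all greater depths, yielding a strictly positive gap of $\sigma(\Ham)$ with IDS value $l\alpha \mod 1$.
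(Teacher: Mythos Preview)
Your handling of the inclusion $\subseteq$ via the gap labelling theorem matches the paper exactly. The reverse inclusion, however, diverges from the paper's argument in a way that leaves a genuine gap.

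First, your tree is not quite the right object. You propose an edge from each level-$(n+1)$ band to ``the unique level-$n$ band containing it'', but this containment need not hold: the monotonicity relation is only $\sigma_{\alpha_{n+1}}\subseteq\sigma_{\alpha_n}\cup\sigma_{\alpha_{n-1}}$ (Proposition~\ref{prop: Monotonicity and limit of spectral approximants}), so some level-$(n+1)$ bands sit inside a band two levels down rather than one. The paper's tree $\tree$ encodes exactly this via the $A$/$B$ labelling, with $B$-edges skipping a level. More seriously, for small $|V|$ the bands of consecutive approximants overlap heavily, so even the weaker containment statement requires a nontrivial argument; establishing that the \emph{same} combinatorial tree governs the nesting for all $V>0$ is the paper's main technical result (Theorem~\ref{thm: Every band is A or B}), occupying Sections~\ref{sec: Admissibility, index relations}--\ref{sec: Proof_AB_Type}.

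Second, and more importantly, you miss the paper's key shortcut. Rather than proving directly that the gap with label $l\alpha\bmod 1$ has positive width for each $V$ (your step~(ii), which you correctly flag as the hard part but leave as a hope about ``finitely many local configurations''), the paper \emph{transfers} the result from large coupling. Raymond already proved all gaps are open for $V>4$. The paper shows that the map $\emap(\,\cdot\,;V):\partial\tree\to\sigma(\Ham)$ is a bijection for \emph{every} $V>0$ (Theorem~\ref{thm: paths and spectra}\,(\ref{enu: thm-paths and spectra - bijection})) and that the composed IDS $\IDS\circ\emap(\,\cdot\,;V)=N_\alpha$ is $V$-independent (Theorem~\ref{thm: paths and spectra}\,(\ref{enu: thm-paths and spectra - IDOS})). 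Given a label, pick $\widetilde V>4$, use Raymond to find two distinct spectral points with that IDS value, pull them back to two distinct paths $\gamma_1,\gamma_2\in\partial\tree$, push forward to $V$ via $\emap(\gamma_i;V)$, and use injectivity at $V$ to conclude the images are distinct --- hence bound a genuine gap. Your step~(ii) is thus replaced by injectivity of $\emap(\,\cdot\,;V)$, which the paper proves (Lemma~\ref{lem: map boundary-spectrum is injective}) via a trace-map/Fricke--Vogt argument showing that two neighbouring infinite branches cannot collapse to the same point.

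In short: your outline is reasonable but underspecifies the tree and does not identify the $V$-independence transfer that makes the argument go through without a direct small-$V$ gap-width estimate.
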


In the next subsection, we provide a brief historical account of the
dry ten Martini problem. Afterwards, we provide two additional main
theorems, and immediately use them to prove Theorem~\ref{thm: all gaps are open}.

\subsection{The dry ten Martini problem}

``Are all gaps there?'', asked Kac in 1981 during a talk at the
AMS annual meeting, and offered ten Martinis for the solution. This
led Simon \cite{Sim82-review} to coin the names the \emph{Ten Martini
Problem (}\emph{TMP}\emph{)} and the \emph{dry ten Martini problem
(}\emph{DTMP}\emph{)} for two related questions concerning the almost
Mathieu operator (AMO). The first problem, TMP, is whether the AMO
has Cantor spectrum for all irrational frequencies and non-zero coupling
constants. An affirmative answer for the TMP was given by Avila and
Jitomirskaya \cite{AviJit_analsmath09}. Further remarkable results
on Cantor spectrum for generic quasiperiodic Schrödinger operators
are found in \cite{BeSi82,El92,Puig06,AvBoDa09,GolSch11,EicFilGwaLuk_JFA_2022,GeJitYouZho_arXiv23,GeJitYou_arxiv23,DamFilWan_MatNach_2023,DamLen_ETDS_2025}.
Historical overviews on this problem, the route to its resolution
and further important results appear in \cite{MarJit_etds17,Jitomirskaya2019,DaFi24-book_2}.

The DTMP deals with the values that the IDS attains at the spectral
gaps. The gap labelling theorem \cite{Bellis82_Overview,John82,Bell92-Gap,BelBovGhe92,DamFill23-GapLabel,DamFilZha_jst23}
predicts the possible set of values, which the IDS may attain at the
spectral gaps. The predicted gap labels for the AMO are exactly the
ones as for the Sturmian Hamiltonians, see the right hand side of
(\ref{eq: thm-all gaps are open}). The DTMP is whether or not all
these values are attained, or quoting Kac, ``Are all gaps there?''.
We do not exhaustively cover here the literature on the DTMP for the
AMO. A substantial progress towards its solution was achieved in \cite{ChElYu90,Puig04,AvJit10-JEMS,AviBocDam_JEMS_2012,LiYu15}.
The most up to date result appears in \cite{AviYouZho23}, where Avila,
You and Zhou solve the DTMP for the non-critical AMO. A more thorough
historical account on the DTMP for the AMO can be found there. We
refer to \cite{Han_tams18,DamLi_arx24,DamEmiFil_arx24,CedLi_arx25,GeWanXu_arx25}
for more results about existence of spectral gaps for models other
than the AMO.

In the current work, we treat a different class of operators, the
Sturmian Hamiltonians (\ref{eq: Hamiltonian defined}). This model
was introduced and studied in \cite{KohKadTan_prl83,OstKim_phys85}
being a guiding model for one-dimensional quasicrystals. We now describe
the state of the art results for TMP and DTMP for these operators.
A first mathematical study of the spectral properties of the Sturmian
Hamiltonians can be found in Casdagli's paper \cite{Casdagli1986}
that influenced many of the forthcoming works. In \cite{Sut89,BIST89}
it was shown that the spectrum of the Sturmian Hamiltonians is a Cantor
set of Lebesgue measure zero, thus solving the TMP. This was generalized
in \cite{Len02,DaLe06_Boshernitzan,DaLe06_ZeroMeasure} by Damanik
and Lenz for aperiodic Schrödinger operators satisfying the so-called
Boshernitzan condition \cite{Boshernitzan_jam84}. This was also extended
to Jacobi operators in \cite{BePo13}. A significant step towards
the DTMP solution was done by Raymond \cite{Raym95}, who proved (\ref{eq: thm-all gaps are open})
for all $\alpha\not\in\Q$ under the additional assumption that $V>4$.
This unpublished result is part of his thesis \cite{Raym95-thesis}
and will appear in a revised version in \cite{Raym-AperiodicOrder}.
The reader is also referred to \cite{BaBeBiTh22} for a review of
\cite{Raym95}, which is adapted to the conventions of the current
paper. Damanik and Gorodetski \cite{DamanikGorodetski2011} showed
(\ref{eq: thm-all gaps are open}) for the Fibonacci Hamiltonian,
i.e. $\alpha=\frac{\sqrt{5}-1}{2}$, if the coupling constant $V$
is small enough. Mei \cite{Mei14} extended the previous result proving
(\ref{eq: thm-all gaps are open}) for $\alpha\not\in\Q$ with eventually
periodic continued fraction expansion, also in the small coupling
regime. The most recent substantial result was achieved in 2016 by
Damanik, Gorodetski and Yessen. In an extensive study \cite{DaGoYe16},
covering many aspects of the Fibonacci Hamiltonian, they proved that
(\ref{eq: thm-all gaps are open}) holds for $\alpha=\frac{\sqrt{5}-1}{2}$.
The current paper provides the complete affirmative solution of the
DTMP for Sturmian Hamiltonians -- Theorem~\ref{thm: all gaps are open}.
\begin{rem*}
The proof of the DTMP presented here does not rely on the solution
of the TMP. In fact, Theorem~\ref{thm: all gaps are open} together
with standard arguments (see, e.g., \cite[Rem. 5.2]{ChElYu90}) and
the density of the gap labels in $[0,1]$ implies that the spectrum
of the Sturmian Hamiltonian is a Cantor set. Hence, our result also
provides an alternative proof of the Ten Martini Problem for Sturmian
Hamiltonians. 
\end{rem*}

\subsection{The spectra of the periodic (rational) approximations of $\protect\Ham$\label{subsec: Spectra of periodic approximations}}

The first step towards the proof of Theorem~\ref{thm: all gaps are open}
is done by considering the spectra of the periodic (also known as
rational) approximations of $\Ham$, which are introduced next.

The periodic approximations of $\Ham$ are defined via Diophantine
approximations of $\alpha\in\left[0,1\right]\backslash\Q$. Each $\alpha\in\left[0,1\right]\backslash\Q$
is uniquely presented in terms of its continued fraction expansion,
\begin{equation}
\alpha=c_{0}+\frac{1}{c_{1}+\frac{1}{c_{2}+\frac{1}{\ddots}}},\label{eq: infinite continued fraction expansion}
\end{equation}
where $c_{0}=0$ in our case and $c_{n}\in\N$ for all $n\in\N$.
Truncating the expansion above gives finite continued fraction expansions,
\begin{equation}
\alpha_{k}:=c_{0}+\frac{1}{c_{1}+\frac{1}{\ddots+\frac{1}{c_{k}}}}=\frac{p_{k}}{q_{k}},\qquad k\in\Nz,\label{eq: finite continued fraction expansion}
\end{equation}
where $\Nz:=\N\cup\left\{ 0\right\} $ and for $k\in\N$, $p_{k},q_{k}\in\N$
are chosen to be coprime, and by convention we set $\alpha_{0}=\frac{p_{0}}{q_{0}}=\frac{0}{1}$
(as $c_{0}=0$).

This allows to approximate the spectrum of $\Ham$ in terms of spectra
of periodic operators of the form $\Hrat$ (where $p,q$ are coprime).
Such an operator $\Hrat$ is $q$-periodic and hence its spectral
properties are given by the Floquet-Bloch theory.
\begin{prop}
\label{prop: Basic spectral prop periodic} Let $V\in\mathbb{R}\backslash\left\{ 0\right\} $
and $\frac{p}{q}\in[0,1]$ such that $p$ and $q$ are coprime. Then
$\Hrat$ has absolutely continuous spectrum and the spectrum $\sigma(H_{\frac{p}{q},V})$
consists of exactly $q$ connected components, each being a closed
interval.
\end{prop}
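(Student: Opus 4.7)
The plan is to prove the proposition in three stages: confirm that the potential has period exactly $q$, invoke the Floquet--Bloch decomposition for periodic discrete Schr\"odinger operators to obtain the band structure, and then rule out band coalescence using the specific Sturmian structure.

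First, I would check that $v(n) := V\chi_{[1-p/q,1)}(np/q \bmod 1)$ has period $q$. Since $(n+q)p/q - np/q = p \in \Z$, the period divides $q$. Coprimality of $p,q$ makes $n \mapsto np \bmod q$ a bijection on $\Z/q\Z$, so within one period exactly $p$ sites carry the value $V$; for $1 \leq p \leq q-1$ this arithmetic pattern is not invariant under any proper divisor of $q$, and the boundary case $q = 1$ is immediate.

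Standard Floquet--Bloch theory then yields a unitary equivalence
\[
\Hrat \;\cong\; \int_{[0,2\pi)}^{\oplus} H(\theta)\,\frac{d\theta}{2\pi},
\]
where each $H(\theta)$ is a self-adjoint $q \times q$ matrix depending real-analytically on $\theta$, and
\[
\sigma(\Hrat) \;=\; \{E \in \R : |\Delta(E)| \leq 2\},
\]
with the Hill discriminant $\Delta(E) := \tr M_{q}(E)$ defined as the trace of the product of $q$ one-step transfer matrices over one period. Since $\Delta$ is a polynomial of degree $q$ in $E$, the spectrum is closed and consists of at most $q$ intervals. Absolute continuity follows because the $q$ band functions $\theta \mapsto E_{j}(\theta)$, determined by $\Delta(E_{j}(\theta)) = 2\cos\theta$, are real-analytic and strictly monotone on the interior of each band by the implicit function theorem wherever $\Delta' \neq 0$; hence there are no eigenvalues and no singular continuous part.

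The main obstacle is to show that exactly $q$ bands appear, i.e.\ none are degenerate or touching. Two adjacent bands touch at $E_{0}$ precisely when $\Delta(E_{0}) = \pm 2$ and $\Delta'(E_{0}) = 0$, equivalently when $E_{0}$ is a multiple eigenvalue of the $q \times q$ periodic or antiperiodic restriction of $\Hrat$. The plan is to exclude such coincidences under the hypothesis $V \neq 0$ by an induction on the continued fraction expansion of $p/q$: writing $p/q = p_{k}/q_{k}$ and $t_{k}(E) := \tr M_{q_{k}}(E)$, these trace polynomials satisfy a Sturmian three-term recursion whose base cases $k = 0, 1$ can be verified by direct computation. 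The inductive step must propagate the strict inequality $|\Delta| > 2$ at every interior critical point as one passes from the pair $(q_{k-1}, q_{k})$ to $q_{k+1}$. The delicate part is the combinatorial control along this recursion --- tracking band edges, critical points, and the sign of $\Delta$ at them, uniformly in $(p,q)$ --- and this is precisely where the infinite tree of periodic approximations promised in the introduction should enter.
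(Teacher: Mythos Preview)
The paper does not prove this proposition at all: immediately after the statement it writes ``These are well-known properties of the periodic Schr\"odinger operators $H_{\frac{p}{q},V}$ with $p,q$ coprime, see e.g.\ \cite{Tes00,Raym95,DaFi24-book_2,BaBeBiTh22}.'' So there is nothing to compare your argument against; the result is imported from the literature as a black box.

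Your first two stages (exact period $q$, Floquet--Bloch decomposition, absolute continuity, at most $q$ bands) are the standard argument and are fine. The third stage, however, contains a logical misstep. You write that the combinatorial control ruling out band coalescence ``is precisely where the infinite tree of periodic approximations promised in the introduction should enter.'' This would be circular: the entire construction of the spectral $\alpha$-tree in Section~\ref{subsec: Spectral Approximants Tree} and the bijection $\Psi$ of Theorem~\ref{thm: vertices are spectral bands} presuppose Proposition~\ref{prop: Basic spectral prop periodic} --- the very fact that $\sigma_{\alpha_k}(V)$ has exactly $q_k$ bands is what makes Definition~\ref{def: A spectral band is continuous} and the level-by-level vertex count meaningful. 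The tree is built on top of this proposition, not the other way around.

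The cited references (notably Raymond \cite{Raym95}) do prove the ``all periodic gaps open'' statement via the Sturmian trace recursion $t_{k+1} = t_k t_{k-1}^{\,c_{k+1}}$-type identities together with the Fricke--Vogt invariant, i.e.\ essentially the machinery you gesture at, but the argument is self-contained at the level of trace polynomials and does not require the tree. If you want to complete your sketch, that is the direction to look; the inductive step is carried out there without any appeal to the $\partial\tree$ picture.
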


These are well-known properties of periodic Schrödinger operators,
see e.g. \cite{Tes00,DaFi24-book_2}. Nevertheless, not every $q$-periodic
operator has a spectrum consisting of exactly $q$ connected components
(in general this is only an upper bound). This is a specific property
of $\Hrat$, see e.g. \cite[Prop. 3.1]{Raym95}, \cite[Prop. 4.1]{BaBeBiTh22}.

We introduce the notation $\left\{ \sigk\right\} _{k\in\Nmo}$, with
$\Nmo:=\N\cup\{-1,0\}$, for the spectra of the periodic approximants,
\begin{equation}
\sigma_{-1}(V):=\R\quad\mathrm{and}\quad\sigma_{k}(V):=\sigma\left(H_{\frac{p_{k}}{q_{k}},V}\right).\label{eq: sigma_rational}
\end{equation}
The auxiliary spectrum $\sigma_{-1}(V)$ seems artificial at first
sight, but its role becomes clearer in the next subsection (see e.g.,
Theorem~\ref{thm: vertices are spectral bands} and the beginning
of its proof).

The following shows that indeed the spectra of the operators $H_{\alpha_{k},V}$
approximate the spectrum of the Sturmian Hamiltonian $\Ham$.
\begin{prop}
\label{prop: Monotonicity and limit of spectral approximants}\cite{Sut87,BIST89,BIT91}
For all $k\in\N$, and $V\in\R$, the following monotonicity property
holds
\[
\sigkp(V)\subseteq\sigk(V)\cup\sigkm(V).
\]
In addition, 
\[
\lim_{k\rightarrow\infty}\left(\sigk(V)\cup\sigkp(V)\right)=\bigcap_{k\in\N}\left(\sigk(V)\cup\sigkp(V)\right)=\sigma(\Ham),
\]
with the limit taken with respect to the Hausdorff metric on compact
subsets of $\R$.
\end{prop}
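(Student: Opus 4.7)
The plan is to work throughout in the transfer-matrix framework. By Proposition~\ref{prop: Basic spectral prop periodic} and classical Floquet--Bloch theory, for coprime $p,q$
\[
\sigma_{p/q}(V)=\{E\in\R:|\tr M_{p/q}(E)|\leq 2\},
\]
where $M_{p/q}(E)$ is the product of the one-step transfer matrices of $\Hrat$ over one period. Setting $M_k(E):=M_{p_k/q_k}(E)$ and $x_k(E):=\tfrac12\tr M_k(E)$, I would first derive, from the substitutive hierarchical structure of Sturmian sequences together with the Diophantine recursion $q_{k+1}=c_{k+1}q_k+q_{k-1}$, the matrix recursion
\[
M_{k+1}(E)=M_{k-1}(E)\,M_k(E)^{c_{k+1}}.
\]
Combined with $\det M_k=1$ and the trace identity $\tr(AB)+\tr(AB^{-1})=\tr(A)\tr(B)$ for $\mathrm{SL}_2$ matrices, this yields a polynomial trace recursion involving the Chebyshev polynomials $T_{c_{k+1}},U_{c_{k+1}-1}$ evaluated at $x_k$, as well as a conserved Sut\H{o}--Fricke invariant
\[
I(E)=x_{k-1}^{2}+x_k^{2}+x_{k+1}^{2}-2x_{k-1}x_k x_{k+1}-1
\]
that does not depend on $k$.

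For the monotonicity inclusion $\sigma_{\alpha_{k+1}}(V)\subseteq\sigma_{\alpha_k}(V)\cup\sigma_{\alpha_{k-1}}(V)$, I would argue by contraposition: assume $|x_k(E)|>1$ and $|x_{k-1}(E)|>1$, so both $M_k(E)$ and $M_{k-1}(E)$ are hyperbolic. Using the explicit Chebyshev form of the recursion together with $I(E)$ to exclude accidental cancellations, one deduces $|x_{k+1}(E)|>1$ and hence $E\notin\sigma_{\alpha_{k+1}}(V)$. This is the delicate step; the book-keeping splits according to the signs of $x_{k-1},x_k$ and the parity of $c_{k+1}$, and amounts to the analysis carried out in \cite{Sut87,BIST89,BIT91}.

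An easy induction on $l\geq k$ then gives $\sigma_{\alpha_l}(V)\subseteq\sigma_{\alpha_k}(V)\cup\sigma_{\alpha_{k+1}}(V)$, so that the compact sets $A_k:=\sigma_{\alpha_k}(V)\cup\sigma_{\alpha_{k+1}}(V)$ form a decreasing sequence inside a fixed bounded interval; consequently $\lim_k A_k=\bigcap_k A_k$ in the Hausdorff metric on compact subsets of $\R$. To identify this intersection with $\sigma(\Ham)$, I would use the standard fact (a consequence of the continued-fraction/three-distance theorem) that the Sturmian potential for $\alpha$ and the periodic potential for $\alpha_k$ agree on an interval of length at least $q_{k+1}$. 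For $\bigcap_k A_k\subseteq\sigma(\Ham)$, any $E\in A_k$ admits a Bloch eigenfunction of the corresponding periodic operator, and its truncation to the agreement window, suitably normalised, furnishes a Weyl sequence for $\Ham$ at $E$ as $k\to\infty$. For the reverse inclusion, the same windowed agreement yields strong resolvent convergence $H_{\alpha_k,V}\to\Ham$, and upper semicontinuity of the spectrum combined with the iterated monotonicity gives $\sigma(\Ham)\subseteq\limsup_k\sigma_{\alpha_k}(V)\subseteq A_k$ for every $k$.

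The main obstacle I expect is the monotonicity step: extracting $|x_{k+1}|>1$ from $|x_k|,|x_{k-1}|>1$ via the trace-map recursion is where the real work lies, since it relies on the precise arithmetic of the Chebyshev polynomials together with the conservation of $I(E)$. Once the nestedness is established, the Hausdorff convergence and the identification of the limit with $\sigma(\Ham)$ are essentially soft consequences of standard spectral-theoretic continuity.
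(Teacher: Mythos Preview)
The paper does not give its own proof of this proposition: it is quoted with attribution to \cite{Sut87,BIST89,BIT91} and used as a black box. So there is nothing in the paper to compare your argument against; your outline is in fact a faithful sketch of the route taken in those references, and the reduction to nested compacta plus the Weyl-sequence/strong-resolvent identification of the limit is standard.

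One correction: the Fricke--Vogt invariant as you wrote it,
\[
x_{k-1}^{2}+x_k^{2}+x_{k+1}^{2}-2x_{k-1}x_kx_{k+1}-1,
\]
is conserved along the trace orbit only in the Fibonacci case $c_j\equiv 1$. For general $\alpha$ the invariant lives on the triple $\big(\tr M_{k-1},\,\tr M_k,\,\tr(M_{k-1}M_k)\big)$ (equivalently, in the paper's notation, on $(t_{\co},t_{[\co,n]},t_{[\co,n+1]})$ with fixed $\co$; see Proposition~\ref{Prop-traceMaps}~(\ref{enu:.Prop-traceMaps - 2})), not on three consecutive $x_{\alpha_k}$'s. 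The recursion $M_{k+1}=M_{k-1}M_k^{c_{k+1}}$ makes $x_{k+1}$ a Chebyshev-polynomial expression in those three variables, and the escape argument ($|x_{k-1}|,|x_k|>1\Rightarrow|x_{k+1}|>1$) in \cite{BIST89} is carried out in the correct variables. Your contrapositive strategy is right; just run it with the genuine invariant. A second minor point: strong resolvent convergence gives \emph{lower} semicontinuity of spectra, which is exactly what you need for $\sigma(\Ham)\subseteq\bigcap_k A_k$ via $\sigma(\Ham)\subseteq\liminf_k\sigma_{\alpha_k}(V)\subseteq\bigcap_k A_k$; your phrase ``upper semicontinuity'' is a slip.
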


In addition, we have the following anti symmetric relation, $\sigk(V)=-\sigk(-V)$
\cite[Lem.~4.1]{BaBeLo_spec26}, which allows us to focus on $V>0$.

These spectral approximations, $\sigk(V)$, may be used to define
an ordered (directed) tree graph, $\tree$, whose boundary represents
the spectrum $\sigma(\Ham)$. After introducing this tree graph and
stating its properties, we are able to prove Theorem~\ref{thm: all gaps are open}.

\subsection{The spectral $\alpha$-tree \label{subsec: Spectral Approximants Tree}}

Next, we define the ordered (directed) tree graph, $\tree$. Towards
this, recall basic graph theory terminology. A \emph{directed graph}
$G$ consists of a countable set $\V$, called \emph{vertex set},
and a set $\mathcal{E}\subseteq\V\times\V$, called the \emph{edge
set}. There is an edge from $u\in\V$ to $w\in\V$ if $(u,w)\in\mathcal{E}$.
The underlying undirected graph of $G$ is the graph obtained by replacing
each directed edge $(u,w)\in\mathcal{E}$ by an undirected edge $\{u,w\}$.
An undirected graph without cycles is called a \emph{tree}. A directed
graph is called a directed tree if its underlying undirected graph
is a tree. A \emph{rooted tree }is a tree which has a single vertex
designated as a root. In the following, we consider an \emph{ordered
rooted directed tree}, which is a rooted directed tree with a strict
(i.e., irreflexive) partial order relation, $\prec$, defined on its
vertex set. The order we use in what follows and indicate by $\prec$
is not the order imposed by the edge directions, but a different one,
see Definition~\ref{def: spectral approximants tree}.

Fix $\alpha\in[0,1]\setminus\Q$ and let $\left(c_{k}\right)^{\infty}_{k=0}$
be the coefficients of its continued fraction expansion, (\ref{eq: infinite continued fraction expansion}).
We recursively describe in the following a specific ordered rooted
directed tree, $\tree$, whose edge and vertex sets are denoted by
$\mathcal{E}_{\alpha}$ and $\V_{\alpha}$, correspondingly. Figure~\ref{Fig: TreeData}
accompanies the tree description.

\begin{figure}[hbt]
\includegraphics[scale=0.82]{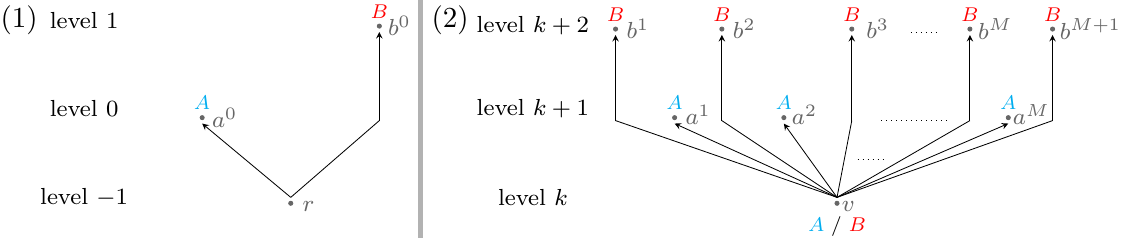}\caption{(1) The root of the tree graph $\protect\tree$ and two adjacent vertices.
(2) A vertex $v$ in level $k$ (for $k\protect\geq0$) and its outgoing
edges to level $k+1$ and $k+2$.\label{Fig: TreeData}}
\end{figure}

We start by designating a single vertex to be the root, $r$. We say
that the root belongs to level $k=-1$ of the tree. Starting from
the root, all other vertices belong to ascending levels $k$ in the
tree and in addition they carry one of the two labels: $A$ or $B$.
There are two vertices to which the root $r$ is connected, $(r,a^{0})\in\mathcal{E}_{\alpha}$
and $(r,b^{0})\in\mathcal{E}_{\alpha}$:
\begin{itemize}
\item We set the vertex $a^{0}$ to be in level $k=0$ and assign $a^{0}$
the label $A$. The vertex $a^{0}$ is the only vertex in level $k=0$.
\item We set the vertex $b^{0}$ to be in level $k=1$ and assign $b^{0}$
the label $B$. Note that there might be other vertices in level $k=1$,
see, e.g. Figure \ref{Fig: TreeData}~(1).
\item These two vertices are ordered $a^{0}\prec b^{0}$.
\end{itemize}
We continue defining the ordered tree $\tree$ recursively. For every
vertex $v$ in level $k$ ($k\geq0$), denote 
\[
M:=\begin{cases}
c_{k+1}-1,\qquad & \text{ if }v\text{ has the label }A,\\
c_{k+1},\qquad & \text{ if }v\text{ has the label }B,
\end{cases}
\]
and (as depicted in Figure \ref{Fig: TreeData}~(2))
\begin{itemize}
\item connect the vertex $v$ to $M$ vertices, $a^{1},\ldots,a^{M}$, all
of which are assigned the label $A$ and they are in level $k+1$,
namely, $(v,a^{i})\in\mathcal{E}_{\alpha}$ for $1\leq i\leq M$.
\item connect the vertex $v$ to $M+1$ vertices, $b^{1},\ldots,b^{M+1}$,
all of which are assigned the label $B$ and they are in level $k+2$,
namely, $(v,b^{j})\in\mathcal{E}_{\alpha}$ for $1\leq j\leq M+1$.
\item These vertices are ordered $b^{1}\prec a^{1}\prec b^{2}\prec\ldots\prec a^{M}\prec b^{M+1}$.
\end{itemize}
\begin{defn}
\label{def: spectral approximants tree} For $\alpha\in[0,1]\setminus\Q$,
the previously described ordered tree, $\tree$, is called the \emph{spectral
$\alpha$-tree}. The following two strict (i.e., irreflexive) partially
order relations are defined on the vertex set $\V_{\alpha}$ of $\tree$:
\begin{itemize}
\item We denote $u\rightarrow w$ whenever there is a directed path connecting
$u$ to $w$.
\item If $u_{1},u_{2}\in\V_{\alpha}$ satisfy $u_{1}\prec u_{2}$, then
we define $w_{1}\prec w_{2}$ for all $w_{1},w_{2}\in\V$ satisfying
$\left(u_{1}\rightarrow w_{1}~\mathrm{or}~u_{1}=w_{1}\right)$ and
$\left(u_{2}\rightarrow w_{2}~\mathrm{or}~u_{2}=w_{2}\right)$.
\end{itemize}
\end{defn}

\begin{rem*}
~
\begin{itemize}
\item We note that the relation $\prec$ is not a total order. But for any
two vertices $u,w\in\V$ with no directed path between them, either
$u\prec w$ or $w\prec u$.
\item We emphasize that the level of a vertex in $\tree$ is not necessarily
its combinatorial distance from the root. This is since the $B$ vertices
are connected by a single edge to a vertex which is two levels below.
\end{itemize}
\end{rem*}
In order to connect the spectral $\alpha$-tree in Definition~\ref{def: spectral approximants tree}
to the spectral approximations, $\sigk(V)$, we introduce the following
conventions. By Proposition~\ref{prop: Basic spectral prop periodic},
for $k\geq0$ and $V\neq0$, the spectrum $\sigk(V)$ consists of
exactly $q_{k}$ intervals (recalling that $\alpha_{k}=\frac{p_{k}}{q_{k}}$).
This leads to the following definition.
\begin{defn}
\label{def: A spectral band is continuous}For $\alpha\in[0,1]\setminus\Q$,
$k\geq0$ and $\alpha_{k}$ as in (\ref{eq: finite continued fraction expansion}).
A map $I:V\mapsto I(V),~V>0,$ is called a \emph{spectral band} in
$\sigk$ if there is a $0\leq j<q_{k}$, such that for all $V>0$,
$I(V)$ is the $j$-th interval (counted from the left) of $\sigk(V)$.
\end{defn}

\begin{rem*}
In the following, we will abuse terminology and also refer to the
evaluation of that map, i.e., $I(V)$, as a spectral band. This is
a common terminology in the literature. Whether a spectral band means
the map itself or its evaluation will be either understood from the
context or explicitly mentioned.
\end{rem*}
Next, we introduce order relations for spectral bands and use these
in Theorem~\ref{thm: vertices are spectral bands} to connect them
to vertices of the ordered tree $\tree$ (see Figure~\ref{fig: basic tree example}
for a demonstration).
\begin{defn}
\label{def: order relations on spectral bands - as maps}Let $I:V\mapsto\left[L(I(V)),R(I(V))\right]$
and $J:V\mapsto\left[L(J(V)),R(J(V))\right]$ be two spectral bands.
We define the following strict (i.e., irreflexive) order relations.
\begin{enumerate}
\item The spectral band \emph{$I$ is }strictly contained \emph{in $J$}:
\[
I\strict J\quad\Leftrightarrow\quad\forall V>0:\quad L(J(V))<L(I(V))\mathrm{<R(I(V))<R(J(V))}.
\]
\item The spectral band \emph{$I$ is} to the left of \emph{$J$ (respectively
$J$} \emph{is} to the right of\emph{ $I$):}
\[
I\prec J\quad\Leftrightarrow\quad\forall V>0:\quad L(I(V))<L(J(V))\mathrm{~and~R(I(V))<R(J(V)).}
\]
\end{enumerate}
Note that it is possible that $I$ is to the left of $J$ even if
$I(V)\cap J(V)\neq\emptyset$ for some value of $V$. We use these
notations also for the evaluation of the spectral bands, i.e., $I(V)\strict J(V)$
and $I(V)\prec J(V)$. Definition~\ref{def: order relations on spectral bands - as maps}
deliberately reuses the notation $\prec$, already introduced in Definition~\ref{def: spectral approximants tree}
for the order relation between vertices of $\tree$. This notational
overlap is intentional and is explained by Theorem~\ref{thm: vertices are spectral bands}\,(\ref{enu: thm-vertices are spectral bands - left-right relation}).
\end{defn}

Following \cite{BaBeLo_spec26} every spectral band has a fixed type
for all $V>0$, which is defined via the relation $\strict$, see
Proposition~\ref{prop: A-B-types-from-algebra-paper} for details.
\begin{defn}
\label{def: A-B-types}Let $\alpha\in[0,1]\backslash\Q$, $V>0$ and
$k\in\Nz$. A spectral band $I(V)$ of $\sigk(V)$ is called
\begin{itemize}
\item \begin{flushleft}
\emph{of type $\tA$} \\
if there exists a spectral band $J(V)$ in $\sigkm(V)$ such that
$\mbox{\ensuremath{I(V)\strict J(V)}}$.
\par\end{flushleft}
\item \begin{flushleft}
\emph{of type $\tB$} \\
if there exists a spectral band $J(V)$ in $\sigkmm(V)$ such that
$\mbox{\ensuremath{I(V)\strict J(V)}}$ and $I(V)\not\subseteq\sigkm(V)$.
\par\end{flushleft}

\end{itemize}
\end{defn}

The connection between the spectral bands of all $\sigk$ and all
the vertices of $\tree$ is given in the following theorem, whose
proof appears in Section~\ref{Sec: proofs of graph related theorems}.
\begin{thm}
\label{thm: vertices are spectral bands}

Let $\alpha\in\left[0,1\right]\backslash\Q$. Let $\tree$ be the
spectral $\alpha$-tree. Then there exists a unique bijection $\Psi$
between the vertices $\mathcal{V}_{\alpha}$ of $\tree$ and all spectral
bands of $\left\{ \sigk\right\} _{k\in\Nmo}$ for $V>0$, such that:
\begin{enumerate}
\item \label{enu: thm-vertices are spectral bands - preserve level k} For
each $k\in\Nmo$, the bijection $\Psi$ maps each vertex in level
$k$ of $\tree$ to a spectral band of $\sigk$.
\item \label{enu: thm-vertices are spectral bands - inclusion} For every
two vertices $u,w$, if $u\to w$ then $\Psi(w)\strict\Psi(u)$.
\item \label{enu: thm-vertices are spectral bands - left-right relation}
If $u_{1},u_{2}$ are vertices in levels $k_{1},k_{2}$ (respectively)
such that $\left|k_{1}-k_{2}\right|\leq1$, then
\[
u_{1}\prec u_{2}\quad\Longleftrightarrow\quad\Psi(u_{1})\prec\Psi(u_{2}).
\]
\item \label{enu: thm-vertices are spectral bands - A-B-types-preserved}A
vertex $u$ is labeled $A$ (respectively $B$) if and only if the
spectral band $\left(\Psi(u)\right)(V)$ is of type $A$ (correspondingly
$B$) for all $V>0$.
\end{enumerate}
\end{thm}

A similar version of Theorem~\ref{thm: vertices are spectral bands}
holds for $V<0$, but for this sake one needs to adjust the definition
of the tree $\tree$ (see discussion in Remark~\ref{rem: tree construction V<0}).
Figure~\ref{fig: basic tree example} demonstrates the bijection
between the graph vertices and the corresponding spectral bands, for
$\tree$ if $\alpha$ has continued fraction expansion $\left(c_{k}\right)^{\infty}_{k=0}$
starting with $0,1,2,3$.

\begin{figure}
\includegraphics[scale=0.9]{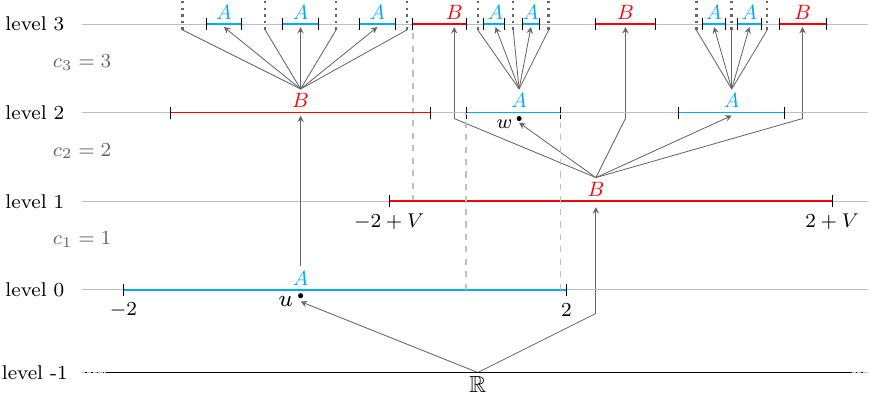}

\caption{\label{fig: basic tree example} An example of a spectral $\alpha$-tree
is sketched if $\alpha$ has continued fraction expansion $\left(c_{k}\right)^{\infty}_{k=0}$
starting with $0,1,2,3$, see Definition~\ref{def: spectral approximants tree}.
The vertices of the graph are drawn as the spectral bands to which
they are mapped by $\Psi$. The two vertices $u,w$ which are marked
satisfy $u\prec w$, but their corresponding spectral bands satisfy
$\Psi(u)\protect\not\prec\Psi(w)$.}
\end{figure}

\begin{example}
\label{exa: left-right order relation only for adjacent levels}Theorem~\ref{thm: vertices are spectral bands}~(\ref{enu: thm-vertices are spectral bands - left-right relation})
claims that $\Psi$ preserves the order relation $\prec$ only for
vertices that are in the same level or in consecutive levels. We note
that this order relation might not be preserved for vertices in levels
which are farther apart. This is demonstrated in Figure~\ref{fig: basic tree example};
the vertices $u$ in level $0$ and $w$ in level $2$ satisfy $u\prec w$
by the order relation defined on $\tree$ but $\Psi(u)\not\prec\Psi(w)$
as sketched in the figure since $\left(\Psi(w)\right)(V)\strict\left(\Psi(u)\right)(V)$
for some values of $V$ (e.g., $V=1$).
\end{example}

\subsection{Connecting $\partial\protect\tree$ with $\sigma(\protect\Ham)$
and proving Theorem~\ref{thm: all gaps are open}~\label{subsec: proof that all gaps are there}}

Theorem~\ref{thm: vertices are spectral bands} connects the spectra
of the approximant operators $H_{\alpha_{k},V}$ with the vertices
of $\tree$. Our next step is to connect the spectrum of $\Ham$ with
the boundary of $\tree$, and use this connection to express the IDS
value. This will allow us to prove Theorem~\ref{thm: all gaps are open}. 

Denote the boundary of $\tree$ by 
\[
\partial\tree:=\set{\gamma=\left(u_{0},u_{1},u_{2}\ldots\right)}{u_{0}\mathrm{~is~the~root~of}~\tree~\textrm{and }(u_{m-1},u_{m})\in\mathcal{E}_{\alpha}~\textrm{for all }m\in\N},
\]
i.e. the set of all infinite paths which start from the root. This
boundary $\partial\tree$ inherits a natural total order from the
partial order $\prec$ on the vertex set $\V$. Specifically, let
$\gamma_{1}=\left(u_{0},u_{1},\ldots\right)$ and $\gamma_{2}=\left(w_{0},w_{1},\ldots\right)$
be in $\partial\tree$. If $\gamma_{1}=\gamma_{2}$, we set $\gamma_{1}\preceq\gamma_{2}$
and $\gamma_{2}\preceq\gamma_{1}$ (so that the order is reflexive).
Otherwise, there exists a unique $k\ge0$ such that $u_{k-1}=w_{k-1}$
and $u_{k}\neq w_{k}$. By construction (see Definition~\ref{def: spectral approximants tree}),
either $u_{m}\prec w_{m}$ for all $m\geq k$ or $w_{m}\prec u_{m}$
for all $m\geq k$. In the former case, we set $\gamma_{1}\preceq\gamma_{2}$
and in the latter case, we set $\gamma_{2}\preceq\gamma_{1}$.

Given an infinite path $\gamma=\left(u_{0},u_{1},\ldots\right)\in\partial\tree$
and $V>0$, Theorem~\ref{thm: vertices are spectral bands}~(\ref{enu: thm-vertices are spectral bands - inclusion})
implies $\Psi(u_{m})\strict\Psi(u_{m-1})$ for all $m\in\N$. Thus,
for all values $V>0$, the intersection $\cap_{m\in\N}\left(\Psi(u_{m})\right)(V)$
of nested compact intervals is non-empty and connected. By Proposition~\ref{prop: Monotonicity and limit of spectral approximants},
this intersection is contained in the spectrum $\sigma(\Ham)$. Furthermore,
the Lebesgue measure of the spectral bands $\Psi(u_{m})$ is bounded
from above by $2\pi/q_{m}$, \cite[Thm. 7.5.1]{DaFi24-book_2}. Hence,
the intersection $\cap_{m\in\N}\left(\Psi(u_{m})\right)(V)$ consists
of a single point. We denote this point by $E_{\alpha}(\gamma;V)$,
i.e., $\cap_{m\in\N}\left(\Psi(u_{m})\right)(V)=\left\{ E_{\alpha}(\gamma;V)\right\} $.
This defines a map
\[
\emap(~\cdot~;V):\partial\tree\rightarrow\sigma(\Ham)
\]
satisfying the following properties.
\begin{thm}
\label{thm: paths and spectra}~ Let $\alpha\in[0,1]\setminus\Q$
and $V>0$.
\begin{enumerate}
\item \label{enu: thm-paths and spectra - bijection}The map $\emap(~\cdot~;V):\partial\tree\rightarrow\sigma(\Ham)$
is a bijection.
\item \label{enu: thm-paths and spectra - order preserving}The map $\emap(~\cdot~;V):\partial\tree\rightarrow\sigma(\Ham)$
is order preserving, i.e. $\gamma_{1}\preceq\gamma_{2}$ implies $\emap(\gamma_{1}~;V)\leq\emap(\gamma_{2}~;V)$.
\item \label{enu: thm-paths and spectra - coninuity in V} For all $\gamma\in\partial\tree$,
the map $\emap(\gamma;~\cdot~):\left(0,\infty\right)\rightarrow\R$
is Lipschitz continuous.
\item \label{enu: thm-paths and spectra - IDOS} There exists a function
$N_{\alpha}:\partial\tree\rightarrow[0,1]$ such that for all $V>0$,
\[
\IDS\left(\emap(\gamma;V)\right)=N_{\alpha}(\gamma).
\]
\item \label{enu: thm-paths and spectra - negative V}We have $\sigma(\Ham)=-\sigma(H_{\alpha,-V})$.
Furthermore, for all $\gamma\in\partial\tree$ and $V<0$, 
\[
N_{\alpha,V}\left(-\emap(\gamma;-V)\right)=1-N_{\alpha}(\gamma).
\]
\end{enumerate}
\end{thm}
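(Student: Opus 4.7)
The plan is to establish the five parts in order, building on Theorem~\ref{thm: vertices are spectral bands} and Proposition~\ref{prop: Monotonicity and limit of spectral approximants}. The main obstacle will be part~(\ref{enu: thm-paths and spectra - bijection}), in particular showing bijectivity of $\emap(\cdot;V):\partial\tree\to\sigma(\Ham)$: injectivity has to cope with the possibility that two diverging sibling bands sit at consecutive tree levels and therefore cannot be separated by purely Floquet disjointness, while surjectivity requires the inductive selection of child bands to exhaust the spectrum. Once these structural issues are in hand, the remaining items follow from comparatively direct continuity and symmetry arguments.

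For injectivity in~(\ref{enu: thm-paths and spectra - bijection}), let $\gamma_1=(u_0,u_1,\ldots)$ and $\gamma_2=(w_0,w_1,\ldots)$ first diverge at position $k$; then $u_k\neq w_k$ are siblings at tree levels differing by at most one. Item~(\ref{enu: thm-vertices are spectral bands - left-right relation}) of Theorem~\ref{thm: vertices are spectral bands} gives $\Psi(u_k)\prec\Psi(w_k)$, and combined with disjointness of distinct spectral bands of a common periodic approximant and with the interleaving $w^1\prec u^1\prec w^2\prec\ldots$ of A- and B-children built into Definition~\ref{def: spectral approximants tree}, the intervals $\Psi(u_k)(V)$ and $\Psi(w_k)(V)$ are disjoint up to a possibly shared endpoint. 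The strict containment in item~(\ref{enu: thm-vertices are spectral bands - inclusion}) then places $\emap(\gamma_1;V)$ and $\emap(\gamma_2;V)$ in the open interiors of the respective nested sequences, so they are distinct. For surjectivity, given $E\in\sigma(\Ham)$, I build $\gamma$ recursively: at step $m$, a vertex $u_m$ at level $\ell_m$ has been chosen with $E\in\Psi(u_m)(V)$, and Proposition~\ref{prop: Monotonicity and limit of spectral approximants} forces $E\in\sigma_{\alpha_{\ell_m+1}}(V)\cup\sigma_{\alpha_{\ell_m+2}}(V)$; the covering built into the tree (the children of $u_m$ exhaust the bands of these two approximants inside $\Psi(u_m)(V)$) guarantees a unique child whose band contains $E$, which is chosen as $u_{m+1}$.

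Part~(\ref{enu: thm-paths and spectra - order preserving}) is immediate from the same divergence analysis: $\Psi(u_k)\prec\Psi(w_k)$ together with the nested inclusions for the descendants yields the weak inequality $\emap(\gamma_1;V)\le\emap(\gamma_2;V)$, and injectivity from~(\ref{enu: thm-paths and spectra - bijection}) upgrades this to strict. For part~(\ref{enu: thm-paths and spectra - coninuity in V}), the Weyl perturbation bound $\|H_{\alpha_k,V}-H_{\alpha_k,V'}\|\le|V-V'|$ implies that each band map $V\mapsto[L_m(V),R_m(V)]$ is Lipschitz with constant $1$, matching Corollary~\ref{cor: Lipschitz spectral edges}. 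The sandwich $L_m(V)\le\emap(\gamma;V)\le R_m(V)$ then yields
\[
|\emap(\gamma;V)-\emap(\gamma;V')|\le|V-V'|+|\Psi(u_m)(V)|+|\Psi(u_m)(V')|,
\]
and letting $m\to\infty$, so that $|\Psi(u_m)(V)|\to 0$ by the single-point intersection, delivers the Lipschitz bound with constant $1$.

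For part~(\ref{enu: thm-paths and spectra - IDOS}), the IDS of each periodic approximant $H_{\alpha_k,V}$ assumes at the right endpoint of its $j$-th band the value $j/q_k$, a purely combinatorial quantity determined by the vertex $u_m$ alone. Since $N_{\alpha_k,V}\to N_{\alpha,V}$ as $k\to\infty$ (a standard feature of periodic approximations, e.g.\ via weak convergence of the spectral measures) and $R(\Psi(u_m)(V))\to\emap(\gamma;V)$ by the nesting, continuity of the IDS gives $\IDS(\emap(\gamma;V))=\lim_{m\to\infty}j(u_m)/q_{\ell_m}$, a $V$-independent limit that we set as $N_\alpha(\gamma)$. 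Finally, part~(\ref{enu: thm-paths and spectra - negative V}) follows from the unitary equivalence $U^{-1}\Ham U=-H_{\alpha,-V}$ induced by $(U\psi)(n):=(-1)^n\psi(n)$, which yields $\sigma(\Ham)=-\sigma(H_{\alpha,-V})$ and, by standard spectral counting, $N_{\alpha,V}(-E)=1-N_{\alpha,-V}(E)$; applying this with $E=\emap(\gamma;-V)$ and invoking part~(\ref{enu: thm-paths and spectra - IDOS}) for $-V>0$ gives the claimed identity.
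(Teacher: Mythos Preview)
Your injectivity argument in part~(\ref{enu: thm-paths and spectra - bijection}) has a genuine gap. You claim that the sibling bands $\Psi(u_k)(V)$ and $\Psi(w_k)(V)$ are ``disjoint up to a possibly shared endpoint'', but this is precisely what fails for $V\le 4$. When $u_k$ and $w_k$ lie at \emph{different} (consecutive) levels --- one an $A$-child at level $\ell+1$, the other a $B$-child at level $\ell+2$ of the same parent --- the only information Theorem~\ref{thm: vertices are spectral bands}~(\ref{enu: thm-vertices are spectral bands - left-right relation}) gives is $\Psi(u_k)\prec\Psi(w_k)$ in the sense of Definition~\ref{def: order relations on spectral bands - as maps}, and that relation explicitly permits overlap (only the \emph{left} endpoints and the \emph{right} endpoints are ordered). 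The remark following Theorem~\ref{thm: V>4 Type A =000026 B} says exactly this: for $V\le 4$ the interlacing in property~\ref{enu: I property} is only $\prec$, not $\precd$. Consequently your final step --- that $E_\alpha(\gamma_1;V)$ and $E_\alpha(\gamma_2;V)$ lie in the interiors of disjoint intervals --- does not follow; the interiors can intersect. The paper handles this via an entirely different mechanism (Lemmas~\ref{lem: monotonicity of t_=00007Bc,m,n=00007D}--\ref{lem: map boundary-spectrum is injective}): one constructs extremal neighbouring paths $\gamma_L,\gamma_R$ between $\gamma_1,\gamma_2$, and then uses the Fricke--Vogt invariant and a monotonicity-in-$n$ estimate for $|t_{[\co,m,n]}|$ along the overlapping tower to show that if the bands along $\gamma_L$ and $\gamma_R$ intersect for $m$ consecutive steps then $m<2/|V|$, forcing eventual separation. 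This trace-map input is essential and is absent from your sketch.

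Your surjectivity argument also has a gap, though a less fundamental one. You assert that ``the children of $u_m$ exhaust the bands of these two approximants inside $\Psi(u_m)(V)$'', so that one of them must contain $E$. But a type-$B$ band of $\sigma_{\alpha_{\ell_m+1}}(V)$ is \emph{not} a child of $u_m$ (it is a child of some vertex at level $\ell_m-1$), and for small $V$ it can overlap $\Psi(u_m)(V)$ without being contained in it; likewise a type-$A$ band at level $\ell_m+2$ need not be a grandchild of $u_m$. So it is not immediate that some child of $u_m$ contains $E$. The paper sidesteps this by a compactness argument (Lemma~\ref{lem: map boundary-spectrum is surjective}): one only needs, for each level $k$, \emph{some} vertex $w_j$ with $E\in\Psi(w_j)(V)$ (guaranteed by Proposition~\ref{prop: Monotonicity and limit of spectral approximants}), chooses an arbitrary path through $w_j$, and extracts a convergent subsequence in the compact space $\partial\tree$. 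Your arguments for (\ref{enu: thm-paths and spectra - order preserving}), (\ref{enu: thm-paths and spectra - coninuity in V}), (\ref{enu: thm-paths and spectra - IDOS}) and (\ref{enu: thm-paths and spectra - negative V}) are essentially correct and close to the paper's; in fact your unitary $(U\psi)(n)=(-1)^n\psi(n)$ for part~(\ref{enu: thm-paths and spectra - negative V}) is cleaner than the paper's finite-volume computation.
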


Note that the tree graph $\tree$ as well as the function $N_{\alpha}$
are $V$-independent. Furthermore, one can explicitly describe the
function $N_{\alpha}:\partial\tree\rightarrow[0,1]$ by the local
tree structure, see Appendix~\ref{App: IDS-explicit-formula}.
\begin{rem*}
A classification of spectral bands of periodic approximations into
types with respect to their nested structure is an approach initiated
by Casdagli \cite{Casdagli1986}. Raymond substantially developed
it (and called it a coding scheme) in order to resolve the Sturmian
DTMP for $V>4$ \cite{Raym95}. In \cite{BaBeLo_spec26} we established
a modified classification valid for all $V\neq0$. We also refer to
\cite{BaBeBiTh22} for a detailed discussion of the differences. In
contrast to the coding approach of \cite{Raym95}, we adopt here a
graph-based viewpoint. The spectral $\alpha$-tree is $V$-independent
and encodes important spectral properties (as stated in Theroems~\ref{thm: vertices are spectral bands} and \ref{thm: paths and spectra}).
These properties of the spectral $\alpha$-tree are proven with the
aid of new methods and concepts such as admissibility of band edges
and the space of all finite continued fraction expansions (see Sections~\ref{sec: Tools for Injectivity proof}
and \ref{sec: Injectivity-Proof}). Using Theroems~\ref{thm: vertices are spectral bands} and \ref{thm: paths and spectra}
the proof of Theorem~\ref{thm: all gaps are open}, which fully solves
the Sturmian DTMP is rather short and follows next.
\end{rem*}
\begin{proof}
[Proof of Theorem \ref{thm: all gaps are open}]

Let $\alpha\in[0,1]\setminus\Q$ and $V\in\R\setminus\{0\}$. The
inclusion 
\[
\big\{\IDS(E)\,|\,E\in\R\backslash\sigma(H_{\alpha,V})\big\}\subseteq\big\{ l\alpha\mod 1\,|\,l\in\Z\big\}\cup\{1\}
\]
is part of the gap labelling theorem \cite{BelBovGhe92,DamFill23-GapLabel}.
We need only to show the other inclusion. Clearly, the values of the
IDS at the two unbounded spectral gaps are $0$ and $1$. More precisely,
we have $\IDS(E)=0$ for $E<\inf\sigma\left(H_{\alpha,V}\right)$
and $\IDS(E)=1$ for $E>\sup\sigma\left(H_{\alpha,V}\right)$. Thus,
the gap labels $0$ ($l=0$) and $1$ are contained in $\big\{\IDS(E)\,|\,E\in\R\backslash\sigma(H_{\alpha,V})\big\}$.

Let $l\in\Z\setminus\left\{ 0\right\} $ and $V>0$. By \cite{Raym95}
(see also \cite[Thm.~5.25]{BaBeBiTh22}), there exists a $\widetilde{V}>4$
and two different values $\widetilde{E_{1}},\widetilde{E_{2}}\in\sigma(H_{\alpha,\widetilde{V}})$
such that
\[
N_{\alpha,\widetilde{V}}(\widetilde{E_{1}})=N_{\alpha,\widetilde{V}}(\widetilde{E_{2}})=l\alpha\mod 1.
\]
By the surjectivity of the map $\emap(~\cdot~;\widetilde{V})$ (Theorem~\ref{thm: paths and spectra}~(\ref{enu: thm-paths and spectra - bijection})),
we have two different infinite paths $\gamma_{1},\gamma_{2}\in\partial\tree$
such that $\widetilde{E_{1}}=\emap(\gamma_{1};\widetilde{V})$ and
$\widetilde{E_{2}}=\emap(\gamma_{2};\widetilde{V})$.

We use these paths $\gamma_{1}$,$\gamma_{2}$ to designate another
pair of energy values $E_{1}:=\emap(\gamma_{1};V)$ and $E_{2}:=\emap(\gamma_{2};V)$.
By the injectivity of the map $E_{\alpha}(~\cdot~;V)$ (Theorem~\ref{thm: paths and spectra}~(\ref{enu: thm-paths and spectra - bijection})),
we get that $E_{1}\neq E_{2}$. Applying Theorem~\ref{thm: paths and spectra}~(\ref{enu: thm-paths and spectra - IDOS})
yields
\[
\IDS(E_{i})=N_{\alpha}(\gamma_{i})=N_{\alpha,\widetilde{V}}(\widetilde{E_{i}})=l\alpha\mod 1,\qquad i\in\left\{ 1,2\right\} .
\]
Thus, we have identified two different spectral values $E_{1},E_{2}\in\sigma(\Ham)$
such that $\IDS(E_{1})=\IDS(E_{2})=l\alpha\mod 1$. By the monotonicity
of the IDS (see \ref{enu: IDS-property-1}) we get that $\IDS$ is
constant on the interval $\left(E_{1},E_{2}\right)$. By \ref{enu: IDS-property-2},
the interval $\left(E_{1},E_{2}\right)$ is a spectral gap with the
required gap label $l\alpha\mod 1$. We have thus proven the equality
in (\ref{eq: thm-all gaps are open}) for all $V>0$.

If $V<0$, the proof follows similarly as above with the following
slight modifications. Let $l\in\Z$ and $V<0$. By \cite{Raym95}
(see also \cite[Thm.~5.25]{BaBeBiTh22}), there exists a $\widetilde{V}>4$
and two different values $\widetilde{E_{1}},\widetilde{E_{2}}\in\sigma(H_{\alpha,\widetilde{V}})$
such that 
\[
N_{\alpha,\widetilde{V}}(\widetilde{E_{1}})=N_{\alpha,\widetilde{V}}(\widetilde{E_{2}})=(-l)\alpha\mod 1.
\]
Now we proceed as before, defining $\gamma_{1},\gamma_{2}\in\partial\tree$
such that $\widetilde{E_{1}}=\emap(\gamma_{1};\widetilde{V})$ and
$\widetilde{E_{2}}=\emap(\gamma_{2};\widetilde{V})$. Let $E_{1}:=-\emap(\gamma_{1};-V)$
and $E_{2}:=-\emap(\gamma_{2};-V)$, which are different by Theorem~\ref{thm: paths and spectra}~(\ref{enu: thm-paths and spectra - bijection}).
Then Theorem~\ref{thm: paths and spectra}~(\ref{enu: thm-paths and spectra - IDOS})
and (\ref{enu: thm-paths and spectra - negative V}) imply $E_{1},E_{2}\in\sigma(H_{\alpha,V})$
and for $i\in\left\{ 1,2\right\} $,
\[
\IDS(E_{i})=1-N_{\alpha}(\gamma_{i})=1-N_{\alpha,\widetilde{V}}(\widetilde{E_{i}})=1-(-l)\alpha\mod 1=l\alpha\mod 1.
\]
Exactly as above we conclude that $E_{1},E_{2}$ are the edges of
a spectral gap at which the IDS attains the required gap label $l\alpha\mod 1$.
\end{proof}

\begin{rem*}
We note that we have actually proven above that if a gap label appears
for some $V\neq0$ (that is, it is attained by the IDS), then it appears
for all $V\neq0$. Here, we use that all gap labels appear for $V>4$
proven in \cite{Raym95}.
\end{rem*}
The injectivity of the map $E_{\alpha}$ in Theorem~\ref{thm: paths and spectra}
is a crucial ingredient in the previous proof. For $V>4$, it follows
rather straightforwardly since three consecutive spectra cannot intersect,
see Proposition~\ref{prop: traceMaps}. For $V\leq4$, however, spectral
bands overlap, making the proof of injectivity substantially more
difficult. We overcome this difficulty using several new tools and
concepts:
\begin{itemize}
\item a classification of spectral bands for all $V\neq0$ which is independent
of $V>0$ (respectively $V<0$), and a $V$-independent description
via the spectral tree,
\item the space of finite continued fraction expansions (Sections~\ref{sec: Tools for Injectivity proof}
and \ref{sec: Injectivity-Proof}),
\item admissibility of spectral band edges (Subsection~\ref{subsec: admissibility-and-triple-trace-product}).
\end{itemize}
We expect that these methods will also be useful for estimating Hausdorff
dimensions and studying self-similar properties of the Kohmoto butterfly.
A starting point in this direction is the estimate in Remark~\ref{rem: Estimate Coupling - B-band overlap},
which bounds the number of overlapping spectral bands of type $\tB$
in terms of $V$.

\subsection{A bird's-eye view on the spectrum of Sturmian Hamiltonians}

Sturmian Hamiltonians belong to the class of dynamically defined Schrödinger
operators. Various characterizations of their spectra have been developed
in the literature, and these have proved useful in their spectral
analysis. For detailed accounts, we refer the reader to \cite{DaEmGo15-survey,Dam17-Survey,DaFi22-book_1,DaFi24-book_2};
here we recall only the main tools used in the analysis of Sturmian
Hamiltonians. Their spectrum admits several alternative characterizations:
\begin{itemize}
\item The energies for which the Lyapunov exponent vanishes (denoted by
$\mathcal{Z}$) are used to prove Cantor spectrum of Lebesgue measure
zero \cite{Sut89,BIST89,Len02,DaLe06_Boshernitzan,DaLe06_ZeroMeasure}
 by applying Kotani theory \cite{Kotani89}.
\item The energies for which the positive semiorbit under the trace map
stays bounded (denoted by $\mathcal{B}$) are used to estimate the
fractal dimension of the spectrum \cite{Casdagli1986,Raym95,Damanik2008,DaGoYe16}.
\item The energies described in terms of a coding scheme (denoted here by
$\Pi$). This is an approach influenced by Casdagli \cite{Casdagli1986}
and fully developed by Raymond \cite{Raym95} to show that all the
gaps are there for $V>4$. This coding scheme turns also to be useful
for studying the transport exponents and the fractal dimensions of
the spectrum \cite{Raym95,KiKiLa03,Damanik2008,DamanikGorodetski2011,Liu2014,Damanik2015,CaoQu_arXiv23}.
Similar coding scheme appears also for the Period doubling sequence
\cite{LiQuYa22-PeriodDoubl}.
\end{itemize}
In the present paper, we shift the viewpoint from this coding scheme,
which applies for $V>4$, to the spectral $\alpha$-tree, its boundary
$\partial\tree$ and the map $\emap$. We discussed the relationship
between these two viewpoints in \cite{BaBeBiTh22}. With this additional
perspective, the different representations of the spectrum may be
summarized as
\[
\sigma(\Ham)=\mathcal{B}=\mathcal{Z}=\Pi=\emap\left(\partial\tree;V\right).
\]
This last perspective provides a substantial step toward the solution
of the dry ten Martini problem for Sturmian Hamiltonians.

This paper is organized as follows. In Section~\ref{Sec: proofs of graph related theorems},
we prove Theorem~\ref{thm: vertices are spectral bands} and Theorem~\ref{thm: paths and spectra}
except for the injectivity of the map $\emap(~\cdot~;V):\partial\tree\rightarrow\sigma(\Ham)$.
In Section~\ref{sec: Tools for Injectivity proof}, we develop some
tools and methods, which are then used in Section~\ref{sec: Injectivity-Proof}
to establish the injectivity of $E_{\alpha}$.

\section{The spectral $\alpha$-tree and its connection to the Integrated
Density of States \label{Sec: proofs of graph related theorems}}

The initial ingredients for the proofs of Theorem~\ref{thm: vertices are spectral bands}
and Theorem~\ref{thm: paths and spectra} were developed in our preceding\footnote{This article originates from the manuscript first posted as arXiv:2402.16703v1,
which has since been split into two parts. The part needed for proving
Proposition~\ref{prop: A-B-types-from-algebra-paper} now appears
as \cite{BaBeLo_spec26}.} work \cite{BaBeLo_spec26} and are recalled next. Building on these
results, we prove in this section Theorems~\ref{thm: vertices are spectral bands} and \ref{thm: paths and spectra},
except for the injectivity of $E_{\alpha}$. 
\begin{prop}
\label{prop: A-B-types-from-algebra-paper}\cite{BaBeLo_spec26} Let
$\alpha\in[0,1]\backslash\Q$, $V>0$ and let $k\in\N$. Every spectral
band of $\sigk(V)$ is of type $\tA$ or $\tB$ and its type is independent
of the value of $V>0$.

Moreover, let $I:V\mapsto I(V)$, $V>0$, be a spectral band of $\sigk$
and denote
\[
M:=\begin{cases}
c_{k+1}-1 & I~\textrm{is of type }\tA\\
c_{k+1} & I~\textrm{is of type }\tB
\end{cases}.
\]
The spectral band $I$ satisfies the following properties:
\begin{enumerate}
\item \label{enu: forward-A} There exist exactly $M$ spectral bands $A^{1},\ldots,A^{M}$
in $\sigkp$ which are all strictly contained in $I$. In particular,
these spectral bands are all of type $\tA$ for all $V>0$.
\item \label{enu: forward-B} There exist exactly $M+1$ spectral bands
$B^{1},\ldots,B^{M+1}$ in $\sigkpp$ which are all strictly contained
in $I$. In particular, these spectral bands are all of type $\tB$
for all $V>0$.
\item \label{enu: interlacing} The two sets of spectral bands mentioned
above interlace, i.e., 
\[
B^{1}\prec A^{1}\prec B^{2}\prec A^{2}\prec\ldots\prec A^{M}\prec B^{M+1}\quad\mathrm{for\,all\,}V>0.
\]
\end{enumerate}
\end{prop}

\begin{proof}
By \cite[Prop.~3.4]{BaBeLo_spec26}, our definition of $\tA$ and
$\tB$ types (Definition~\ref{def: order relations on spectral bands - as maps})
is equivalent to the one in \cite{BaBeLo_spec26}. Therefore \cite[Thm.~1.1]{BaBeLo_spec26}
applies here and yields that each spectral band in $\sigk(V)$ is
of type $\tA$ or $\tB$ and its type is independent of the value
of $V>0$. The same theorem also implies the existence part of the
spectral bands in properties (\ref{enu: forward-A}),(\ref{enu: forward-B}),(\ref{enu: interlacing}).
The uniqueness part in properties (\ref{enu: forward-A}) and (\ref{enu: forward-B})
(which is reflected by the word `exactly') is not part of that theorem,
but it follows from \cite[Cor.~3.6]{BaBeLo_spec26}.
\end{proof}

\subsection{Proof of Theorem~\ref{thm: vertices are spectral bands}}

Theorem~\ref{thm: vertices are spectral bands} connects the spectra
of all the rational approximants $\sigk$ to all the vertices of the
spectral $\alpha$-tree. 

\begin{proof}[Proof of Theorem~\ref{thm: vertices are spectral bands}]
 We will omit in the following the $V$ dependence unless it is needed.
Let $\alpha\in\left[0,1\right]\backslash\Q$ with an infinite continued
fraction expansion $\left(0,c_{1},c_{2},c_{3},\ldots\right)$. For
each $k\in\Nz$, denote 
\[
\alpha_{k}:=c_{0}+\frac{1}{c_{1}+\frac{1}{\ddots+\frac{1}{c_{k}}}}=\frac{p_{k}}{q_{k}}
\]
 with coprime $p_{k},q_{k}$, as in (\ref{eq: finite continued fraction expansion}).
By standard properties of continued fractions \cite[Thm.~1]{Khinchin_book64},
we have for $k\in\N$,
\begin{equation}
q_{-1}=0,~q_{0}=1,~q_{k}=c_{k}q_{k-1}+q_{k-2},\qquad k\in\N.\label{eq: Proof vertices are spectral bands - recursion q_k}
\end{equation}
Let $\tree$ be the spectral $\alpha$-tree with edge set $\mathcal{E}_{\alpha}$.
Observe that any bijection satisfying the properties (\ref{enu: thm-vertices are spectral bands - preserve level k})
and (\ref{enu: thm-vertices are spectral bands - left-right relation})
of the theorem must be unique since the vertices within a level $k$
are totally ordered by $\prec$, and correspondingly the spectral
bands within $\sigk$ are totally ordered by $\prec$.

We start by constructing the bijection $\Psi$ between all vertices
of $\tree$ and the spectral bands of all $\left\{ \sigk\right\} _{k\in\Nmo}$
(recalling that by convention $\sigma_{-1}(V)=\R$). We do so inductively
in $k$ and start by handling the levels $k\in\{-1,0,1\}$ in the
induction base, sketched in Figure~\ref{Fig: Proof of vertices are spectral bands - Construction Psi}
(3).

\paragraph*{\uline{Induction base}}

Level $k=-1$ of the graph $\tree$ contains only its root. The corresponding
spectrum is $\sigma_{-1}=\R$. We set $\Psi$ to map the root of the
graph to the ``spectral band'' $\R$ in $\sigma_{-1}$.

In level $k=0$, the graph $\tree$ has a single vertex (connected
by an edge to the root), which has the label $A$ (it is the vertex
$a^{0}$ in Figure~\ref{Fig: Proof of vertices are spectral bands - Construction Psi}~(3)).
It is easy to compute that the corresponding spectrum $\sigma_{0}$
consists of a single spectral band $I_{0}=[-2,2]$ (see also \cite[Lem.~7.4]{BaBeLo_spec26})
and $I_{0}\strict\sigma_{-1}=\R$, so $I_{0}$ is of type $A$ by
Definition~\ref{def: A-B-types}. In level $k=1$, the graph $\tree$
has $c_{1}$ vertices which are totally ordered by the order relation
$\prec$. The rightmost vertex has the label $B$ (it is the vertex
$b^{0}$ in Figure~\ref{Fig: Proof of vertices are spectral bands - Construction Psi}~(3)).
This vertex is directly connected to the root. By construction, the
left-most $c_{1}-1$ vertices $a^{1}\prec\ldots\prec a^{c_{1}-1}$
are labeled $A$ and they are directly connected to the vertex $a^{0}$
in level $k=0$. One can compute that the corresponding spectrum $\sigma_{1}$
consists of $c_{1}$ spectral bands (see details in \cite[Lem.~7.4]{BaBeLo_spec26}).
The left-most $c_{1}-1$ spectral bands of $\sigma_{1}$, denoted
$A^{1},\ldots,A^{c_{1}-1}$, are all strictly contained in $I_{0}$
so by Definition~\ref{def: A-B-types} they are of type $\tA$ (these
are actually the bands mentioned in Proposition~\ref{prop: A-B-types-from-algebra-paper}
(\ref{enu: forward-A})). We denote the rightmost spectral band of
$\sigma_{1}$ by $K_{1}$, i.e., $A^{1}\prec\ldots\prec A^{c_{1}-1}\prec K_{1}$.
A computation (see details in \cite[Lem.~7.4]{BaBeLo_spec26}) shows
that $I_{0}\prec K_{1}\strict\R$, namely $K_{1}$ is strictly contained
in $\sigma_{-1}$ but not in $\sigma_{0}$. Therefore, $K_{1}$ is
of type $\tB$ by Definition~\ref{def: A-B-types}. Using the classification
above, we define $\Psi$ on level $k=1$ by $\Psi(a^{j})=A^{j}$ and
$\Psi(b^{0})=K_{1}$, sketched in Figure~\ref{Fig: Proof of vertices are spectral bands - Construction Psi}~(3).
By construction, $\Psi$ satisfies all the claimed properties (\ref{enu: thm-vertices are spectral bands - preserve level k}),
(\ref{enu: thm-vertices are spectral bands - inclusion}), (\ref{enu: thm-vertices are spectral bands - left-right relation}),
(\ref{enu: thm-vertices are spectral bands - A-B-types-preserved})
of Theorem~\ref{thm: vertices are spectral bands}, for the levels
$k\in\{-1,0,1\}$. We continue recursively constructing the map $\Psi$
and inductively proving that all properties of the theorem hold for
all $k>1$.
\begin{figure}
\includegraphics[scale=1.1]{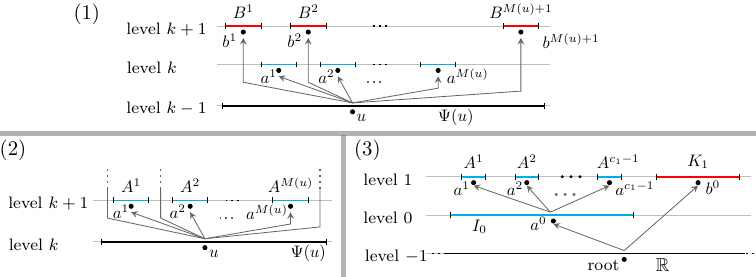}\caption{In (1) and (2), the recursive construction of $\Psi$ is sketched.
In (3), the map $\Psi$ is sketched for the levels $k\in\left\{ -1,0,1\right\} $.
The plotted vertices are mapped to the spectral bands next to them.
}
\label{Fig: Proof of vertices are spectral bands - Construction Psi}
\end{figure}

\uline{Induction step for constructing \mbox{$\Psi$} and proving
its properties}

Let $k\geq1$ be such that there is a map $\Psi$ satisfying up to
(and including) level $k$ all the claimed properties (\ref{enu: thm-vertices are spectral bands - preserve level k}),
(\ref{enu: thm-vertices are spectral bands - inclusion}), (\ref{enu: thm-vertices are spectral bands - left-right relation}),
(\ref{enu: thm-vertices are spectral bands - A-B-types-preserved})
of Theorem~\ref{thm: vertices are spectral bands}. We show that
we can extend $\Psi$ to a map satisfying these properties also in
level $k+1$.

For each vertex $w$ in level $k+1$ with label $A$, there is a vertex
$u$ in $k$ such that there is an edge from $u$ to $w$, i.e. $(u,w)\in\mathcal{E}_{\alpha}$.
The associated spectral band $\Psi(u)$ in $\sigk$ has (by the induction
hypothesis) the label of $u$. Define
\begin{equation}
M(u):=\begin{cases}
c_{k+1}-1 & u\textrm{ is of label }A,\\
c_{k+1} & u\textrm{ is of label }B.
\end{cases}\label{eq: proof of prop: vertices are spectral bands - M notation}
\end{equation}
By the definition of $\tree$ (see Section~\ref{subsec: Spectral Approximants Tree}),
$u$ is connected to $M(u)$ vertices $a^{1}\prec\ldots\prec a^{M(u)}$
of label $A$ in level $k+1$. Similarly, by Proposition~\ref{prop: A-B-types-from-algebra-paper}
(\ref{enu: forward-A}), $\Psi(u)\subseteq\sigk$ strictly contains
exactly $M(u)$ spectral bands $A^{1}\prec\ldots\prec A^{M(u)}$ of
type $A$ in $\sigkp$. We define $\Psi$ to map these $M(u)$ vertices
to these $M(u)$ spectral bands such that the $\prec$ order between
them is preserved (i.e. $\Psi(a^{j})=A^{j}$), see a sketch in Figure~\ref{Fig: Proof of vertices are spectral bands - Construction Psi}~(2).
We repeat this for all vertices $u$ in level $k$. Doing so, we have
bijectively mapped all $A$ labeled vertices in level $k+1$ to all
spectral bands in $\sigkp$ of type $A$. The surjectivity of the
map $\Psi$ in level $k+1$ is guaranteed since Proposition~\ref{prop: A-B-types-from-algebra-paper}
(\ref{enu: forward-A}) ensures that every $\tA$-type spectral band
in $\sigkp$ is one of the spectral bands $\{A^{j}\}^{M(u)}_{j=1}$
which are contained in a spectral band of $\sigk$.

The vertices with label $B$ are treated similarly: For each vertex
$w$ in level $k+1$ with label $B$, there is a vertex $u$ in $k-1$
such that there is an edge from $u$ to $w$. Then the associated
spectral band $\Psi(u)$ in $\sigkm$ has (by the induction hypothesis)
the label of $u$. As before $u$ is connected to $M(u)+1$ vertices
$b^{1}\prec\ldots\prec b^{M(u)+1}$ with label $B$ and $\Psi(u)\subseteq\sigkm$
strictly contains exactly $M(u)+1$ spectral bands $B^{1}\prec\ldots\prec B^{M(u)+1}$
of type $B$ in $\sigkp$. We define $\Psi$ to map these $M(u)+1$
vertices to these $M(u)+1$ spectral bands such that the $\prec$
order between them is preserved (i.e. $\Psi(b^{j})=B^{j}$), see a
sketch in Figure~\ref{Fig: Proof of vertices are spectral bands - Construction Psi}~(1).
We repeat this for all vertices $u$ in level $k-1$. By construction
$\Psi$ maps bijectively all $B$ labeled vertices in level $k+1$
to all spectral bands in $\sigkp$ of type $B$. The surjectivity
here follows exactly as the surjectivity argument written above for
the type $\tA$ bands.

We emphasize at this point that by construction and the interlacing
property (Proposition~\ref{prop: A-B-types-from-algebra-paper}~(\ref{enu: interlacing}))
of the spectral band $\Psi(u)$, we have
\begin{equation}
\Psi(b^{1})\prec\Psi(a^{1})\prec\ldots\prec\Psi(b^{M(u)})\prec\Psi(a^{M(u)})\prec\Psi(b^{M(u)+1}),\label{eq: proof of prop: vertices are spectral bands - interlacing}
\end{equation}
borrowing the notation of Figure~\ref{Fig: Proof of vertices are spectral bands - Construction Psi}~(1).

We conclude that $\Psi$ satisfies the properties (\ref{enu: thm-vertices are spectral bands - preserve level k})
and (\ref{enu: thm-vertices are spectral bands - A-B-types-preserved})
of the theorem. Property (\ref{enu: thm-vertices are spectral bands - inclusion})
follows also immediately from the construction and the induction hypothesis. 

By induction, it remains to prove that if $k\geq1$ and $\Psi$ satisfies
(\ref{enu: thm-vertices are spectral bands - left-right relation})
for all vertices up to (and including) level $k$, then it also satisfies
(\ref{enu: thm-vertices are spectral bands - left-right relation})
for all vertices up to and including level $k+1$. Explicitly, let
$w$ and $\tilde{w}$ be two different vertices, which are either
both in level $k+1$, or such that one of them is in level $k$ and
the other in level $k+1$. We need to show that
\begin{equation}
w\prec\tilde{w}\quad\Longleftrightarrow\quad\Psi(w)\prec\Psi(\tilde{w}).\label{eq: equivalence of prec}
\end{equation}
First, note that we may assume that $w$ and $\tilde{w}$ are not
connected by an edge. Otherwise, by (\ref{enu: thm-vertices are spectral bands - inclusion}),
$w\rightarrow\tilde{w}$ would imply $\Psi(\tilde{w})\strict\Psi(w)$
so that $\Psi(\tilde{w})\not\prec\Psi(w)$ and $\Psi(w)\not\prec\Psi(\tilde{w})$.
But $w\rightarrow\tilde{w}$ implies also $w\not\prec\tilde{w}$ and
$\tilde{w}\not\prec w$, and so the equivalence (\ref{eq: equivalence of prec})
is valid in this case.

We show next that either $w\prec\tilde{w}$ or $\tilde{w}\prec w$
(see also the remark after Definition~\ref{def: spectral approximants tree}).
To see this, choose a path $\gamma$ from the root to $w$ and a path
$\tilde{\gamma}$ from the root to $\tilde{w}$ (both paths are unique).
We denote by $v$ the vertex of maximal level which appears in both
paths (such a vertex exists since the root appears in both paths).
We denote by $u$ the vertex to which $v$ is connected in $\gamma$
and denote by $\tilde{u}$ the vertex to which $v$ is connected in
$\tilde{\gamma}$. By the construction of the tree and the order relation
$\prec$ on its vertices (Definition~\ref{def: spectral approximants tree}
and Figure~\ref{Fig: TreeData}) we have that either $u\prec\tilde{u}$
or $\tilde{u}\prec u$. Since $u\rightarrow w$ and $\tilde{u}\rightarrow\tilde{w}$
we get (by Definition~\ref{def: spectral approximants tree}) that
either $w\prec\tilde{w}$ or $\tilde{w}\prec w$.

We will now show that $w\prec\tilde{w}$ implies $\Psi(w)\prec\Psi(\tilde{w})$.
This actually proves (\ref{eq: equivalence of prec}) which can be
seen as follows. Assume by contradiction that $\Psi(w)\prec\Psi(\tilde{w})$
and $\tilde{w}\prec w$ hold. Then the previous implication leads
to $\Psi(\tilde{w})\prec\Psi(w)$ contradicting $\Psi(w)\prec\Psi(\tilde{w})$.

Suppose $w\prec\tilde{w}$ holds. We now go over all the possible
configurations in which $w\prec\tilde{w}$ and they are either both
in level $k+1$, or one of them is in level $k$ and the other in
level $k+1$. There are 11 such cases (as depicted in Figure~\ref{Fig: Proof of vertices are spectral bands - order preserving})
and we verify that $\Psi(w)\prec\Psi(\tilde{w})$ holds in all these
cases.

We start by checking the cases in which $w$ and $\tilde{w}$ are
different vertices in level $k+1$. Since $w\neq\tilde{w}$, the injectivity
of $\Psi$ implies $\Psi(w)\neq\Psi(\tilde{w})$. Thus, $\left(\Psi(w)\right)(V)\cap\left(\Psi(\tilde{w})\right)(V)=\emptyset$
follows from Proposition~\ref{prop: Basic spectral prop periodic}
(different spectral bands in the same level do not touch). Hence,
we have 
\begin{equation}
\textrm{either }\quad\Psi(w)\prec\Psi(\tilde{w})\quad\textrm{ or }\quad\Psi(\tilde{w})\prec\Psi(w).\label{eq: proof of prop: vertices are spectral bands - prec}
\end{equation}
 By the construction of the tree, there are vertices $u$ and $\tilde{u}$
in level $k$ or $k-1$, which are connected to $w$ and $\tilde{w}$,
i.e., $(u,w),(\tilde{u},\tilde{w})\in\mathcal{E}_{\alpha}$. Then
property (\ref{enu: thm-vertices are spectral bands - inclusion})
implies $\Psi(\tilde{w})\strict\Psi(\tilde{u})$ and $\Psi(w)\strict\Psi(u)$.
To show $\Psi(w)\prec\Psi(\tilde{w})$, one needs to treat 8 different
cases. These cases are plotted in Figure~\ref{Fig: Proof of vertices are spectral bands - order preserving},
(1) to (8) and analyzed below. 
\begin{figure}
\includegraphics[scale=0.98]{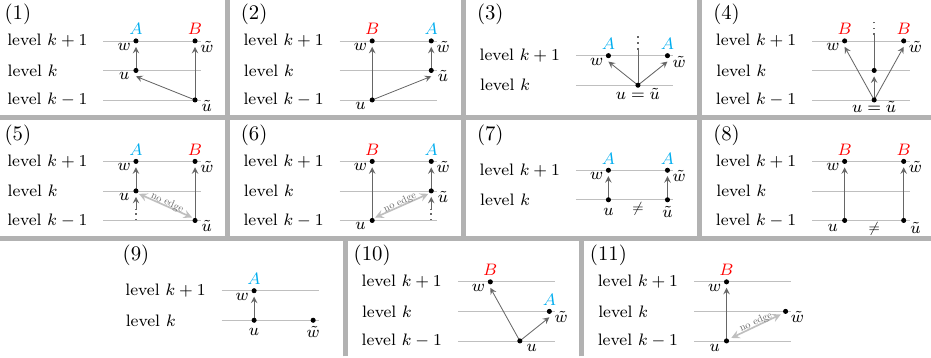}\caption{Outlining the different cases in the proof of property (\ref{enu: thm-vertices are spectral bands - left-right relation})
in the induction step.}
\label{Fig: Proof of vertices are spectral bands - order preserving}
\end{figure}

In case (1), $u\prec\tilde{w}$ follows from $w\prec\tilde{w}$. Recall
that $\Psi$ is defined by sending the vertices which are connected
to $\tilde{u}$ to the spectral bands which are contained in $\Psi(\tilde{u})$
according to Proposition~\ref{prop: A-B-types-from-algebra-paper}~(\ref{enu: forward-A}),(\ref{enu: forward-B}).
In particular, the map $\Psi$ preserves the interlacing property
(Proposition~\ref{prop: A-B-types-from-algebra-paper}~(\ref{enu: interlacing}))
as is indicated in Equation~(\ref{eq: proof of prop: vertices are spectral bands - interlacing}))
which means in our case that $\Psi(u)\prec\Psi(\tilde{w})$. By $\Psi(w)\strict\Psi(u)$
and (\ref{eq: proof of prop: vertices are spectral bands - prec}),
we conclude $\Psi(w)\prec\Psi(\tilde{w})$. The case (2) is treated
similarly. The cases (3) and (4), also follow directly from the recursive
definition of $\Psi$.

In the cases (5) to (8), there is neither an edge between $u$ and
$\tilde{u}$ nor these vertices coincide. By the definition of the
order on the tree $\tree$, we conclude $u\prec\tilde{u}$ since $w\prec\tilde{w}$,
$(u,w)\in\mathcal{E}_{\alpha}$ and $(\text{\ensuremath{\tilde{u},\tilde{w})}\ensuremath{\in\mathcal{E}_{\alpha}}}$.
Since in these cases $u$ and $\tilde{u}$ are in level $k$ or $k-1$,
we get by the induction hypothesis that $u\prec\tilde{u}$ implies
$\Psi(u)\prec\Psi(\tilde{u})$. In the cases (7) and (8), since $u,\tilde{u}$
are in the same level, so are $\Psi(u),\Psi(\tilde{u})$ and hence
we have even $\Psi(u)(V)\cap\Psi(\tilde{u})(V)=\emptyset$ for all
$V>0$. Since $\Psi(w)\strict\Psi(u)$ and $\Psi(\tilde{w})\strict\Psi(\tilde{u})$,
we conclude $\Psi(w)\prec\Psi(\tilde{w})$.  However, in cases (5)
and (6), we cannot directly conclude from $\Psi(u)\prec\Psi(\tilde{u})$,
$\Psi(\tilde{w})\strict\Psi(\tilde{u})$ and $\Psi(w)\strict\Psi(u)$
that $\Psi(w)\prec\Psi(\tilde{w})$. A-priori, it is possible in these
conditions that $\Psi(\tilde{w})\prec\Psi(w)$ if $\Psi(w)$ and $\Psi(\tilde{w})$
are both contained in $\Psi(u)\cap\Psi(\tilde{u})$. However, this
leads to a contradiction as we explain now. Either $\tilde{w}$ is
of label $\tB$ (in case (5)) or $w$ is of label $\tB$ (in case
(6)). Since $\Psi$ preserves the labels, either $\Psi(\tilde{w})$
(in case (5)) or $\Psi(w)$ (in case (6)) is of type $B$. A spectral
band which is of type $\tB$ of $\sigma_{k+1}$ cannot be contained
in a spectral band of $\sigma_{k}$ (Definition~\ref{def: A-B-types}).
This contradicts that both $\Psi(\tilde{w})$ and $\Psi(w)$ are both
contained in $\Psi(u)\cap\Psi(\tilde{u})$. This finishes the proof
for the cases (5) and (6).

Next, let $w$ be a vertex in level $k+1$ and $\tilde{w}$ be a vertex
in level $k$ (which are cases (9),(10),(11)). The symmetric case
of $w$ being in level $k$ and $\tilde{w}$ being in level $k+1$
follows similarly. Recall that we want to show that $w\prec\tilde{w}$
implies $\Psi(w)\prec\Psi(\tilde{w})$. Let $u$ be the vertex in
level $k$ (if $w$ has label $A$) or $k-1$ (if $w$ has label $B$)
such that there is an edge $(u,w)\in\mathcal{E}_{\alpha}$.

If $w$ has label $A$, then $u\neq\tilde{w}$ follows since we assume
that $w,\tilde{w}$ are not connected by an edge. Since $w\prec\tilde{w}$,
we conclude $u\prec\tilde{w}$, which are both in level $k$, see
Figure~\ref{Fig: Proof of vertices are spectral bands - order preserving}~(9).
Thus, the induction hypothesis asserts that $u\prec\tilde{w}$ implies
$\Psi(u)\prec\Psi(\tilde{w})$. Since $\Psi(u)$ and $\Psi(\tilde{w})$
are both spectral bands in $\sigma_{\alpha_{k}}$, they cannot intersect,
i.e. $\left(\Psi(u)\right)(V)\cap\left(\Psi(\tilde{w})\right)(V)=\emptyset$
for all $V>0$. Since $(u,w)\in\mathcal{E}_{\alpha}$ implies $\Psi(w)\strict\Psi(u)$
by property (\ref{enu: thm-vertices are spectral bands - inclusion}),
the previous considerations imply $\Psi(w)\prec\Psi(\tilde{w})$.
This proves case (9).

If $w$ has label $B$, then we have two cases sketched in Figure~\ref{Fig: Proof of vertices are spectral bands - order preserving}
(10) and (11). In case (10), $\Psi(w)\prec\Psi(\tilde{w})$ follows
from $w\prec\tilde{w}$ since $\Psi$ preserves the interlacing property,
(\ref{eq: proof of prop: vertices are spectral bands - interlacing}),
of the spectral band $\Psi(u)$ by construction. In case (11), there
is no edge between $u$ and $\tilde{w}$. By the definition of the
order on the tree $\tree$, we conclude $u\prec\tilde{w}$ since $w\prec\tilde{w}$
and $(u,w)\in\mathcal{E}_{\alpha}$. Thus, the induction hypothesis
($u$ and $\tilde{w}$ are in level $k$ and $k-1$) asserts that
$u\prec\tilde{w}$ implies $\Psi(u)\prec\Psi(\tilde{w})$. Thus, either
$\Psi(w)\subseteq\Psi(\tilde{w})$ or $\Psi(w)\prec\Psi(\tilde{w})$.
However, $\Psi(w)$ is of type $B$ since $w$ has label $B$ and
$\Psi$ preserves the labels. Thus, the spectral band $\Psi(w)$ of
$\sigma_{k+1}$ cannot be contained in a spectral band of $\sigma_{k}$
by Definition~\ref{def: A-B-types}. In particular, $\Psi(w)$ is
not contained in $\Psi(\tilde{w})$ implying $\Psi(w)\prec\Psi(\tilde{w})$
by the previous considerations.
\end{proof}

\subsection{Proof of Theorem \ref{thm: paths and spectra} without the injectivity
of $\protect\emap$}

We first recall the terminology used in Theorem~\ref{thm: paths and spectra}.
The boundary $\partial\tree$ of the ordered tree $\tree$ with edge
set $\mathcal{E}_{\alpha}$ is 
\[
\partial\tree:=\set{\gamma=\left(u_{0},u_{1},u_{2}\ldots\right)}{u_{0}\mathrm{~is~the~root~of}~\tree~\textrm{and }(u_{m-1},u_{m})\in\mathcal{E}_{\alpha}~\textrm{for all }m\in\N},
\]
i.e. the set of all infinite paths which start from the root. Given
$\gamma\in\partial\tree$, we consider the (infinite) intersection
of all spectral bands which correspond to the vertices of $\gamma$,
i.e., $\cap_{m\in\N}\left(\Psi(u_{m})\right)(V)$. By Theorem~\ref{thm: vertices are spectral bands}~(\ref{enu: thm-vertices are spectral bands - inclusion}),
we have $\left(\Psi(u_{m+1})\right)(V)\strict\left(\Psi(u_{m})\right)(V)$.
With this at hand, we have argued in Section~\ref{subsec: proof that all gaps are there}
that this intersection $\cap_{m\in\N}\left(\Psi(u_{m})\right)(V)$
contains a single point in $\sigma(\Ham)$, which we denoted by $E_{\alpha}(\gamma;V)$
(see also \cite[Lem.~5.11]{BaBeBiTh22} for the case $V>4$). This
defines the map 
\[
\emap(~\cdot~;V):\partial\tree\rightarrow\sigma(\Ham),
\]
whose properties are in the focus of Theorem~\ref{thm: paths and spectra}.

We split the statement of Theorem~\ref{thm: paths and spectra} into
various lemmas: Lemmas~\ref{lem: Lipschitz continuous limiting spectrum},
\ref{lem: IDOS is independent of V}, \ref{lem: map boundary-spectrum is surjective},
\ref{lem: order preserving E_=00005Calpha} are proven next in this
section, and Lemma~\ref{lem: map boundary-spectrum is injective}
(establishing the injectivity of $E_{\alpha}$) is proven in Section~\ref{sec: Injectivity-Proof}.
\begin{lem}
\label{lem: Lipschitz continuous limiting spectrum} [also Theorem \ref{thm: paths and spectra}~(\ref{enu: thm-paths and spectra - coninuity in V})]
Let $\alpha\in[0,1]\setminus\Q$. For all $\gamma\in\partial\tree$,
the map $\emap(\gamma;~\cdot~):\left(0,\infty\right)\rightarrow\R$
is Lipschitz continuous.
\end{lem}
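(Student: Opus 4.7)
The plan is to show that $E_\alpha(\gamma;\cdot)$ is in fact $1$-Lipschitz, by bootstrapping from the uniform Lipschitz bound on the spectral band edges (Corollary~\ref{cor: Lipschitz spectral edges}). Write $\gamma = (u_0, u_1, u_2, \ldots)$ and set $I_m(V) := (\Psi(u_m))(V)$, so that $\{E_\alpha(\gamma;V)\} = \bigcap_{m \in \N} I_m(V)$ by definition. For each fixed $m$, the Lipschitz bound gives
\[
|L(I_m(V_1)) - L(I_m(V_2))| \le |V_1 - V_2|
\]
uniformly in the choice of band. Since $E_\alpha(\gamma;V_i) \in I_m(V_i)$ for $i=1,2$, the triangle inequality yields
\[
|E_\alpha(\gamma;V_1) - E_\alpha(\gamma;V_2)| \le |I_m(V_1)| + |V_1 - V_2| + |I_m(V_2)|,
\]
where $|I_m(V)|$ denotes the length of the spectral band $I_m(V)$.

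The main (and essentially only) step is then to verify that $|I_m(V)| \to 0$ as $m \to \infty$ for each fixed $V > 0$. This is a direct consequence of the construction: the intervals $I_m(V)$ are strictly nested by Proposition~\ref{prop: vertices are spectral bands}~(\ref{enu: prop-vertices are spectral bands - preserves inclusion}) (equivalently, Theorem~\ref{thm: vertices are spectral bands}~(\ref{enu: thm-vertices are spectral bands - inclusion})), and their intersection is the singleton $\{E_\alpha(\gamma;V)\}$ by the very definition of $E_\alpha$ (which in turn rests on the Lebesgue-measure-zero fact from~\cite{BIST89} recalled in Section~\ref{subsec: proof that all gaps are there}). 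Once $|I_m(V_i)| \to 0$ for $i = 1, 2$, letting $m \to \infty$ in the displayed inequality gives
\[
|E_\alpha(\gamma;V_1) - E_\alpha(\gamma;V_2)| \le |V_1 - V_2|,
\]
as required.

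I expect no serious obstacle here: the difficulty of defining $E_\alpha(\gamma;V)$ in the first place was the injectivity/collapse of the nested intersection (a nontrivial point only for small $V$), but this has already been absorbed in Section~\ref{subsec: proof that all gaps are there}. Once that is in hand, the Lipschitz property is a one-line triangle-inequality argument combined with the $V$-uniformity of the band-edge Lipschitz constant, which is precisely the content of the remark following Corollary~\ref{cor: Lipschitz spectral edges}. The only thing worth a brief comment in the write-up is why the Lipschitz constant $1$ produced by this argument does not depend on $\gamma$ or on the level $m$ at which one truncates.
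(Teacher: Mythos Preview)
Your proof is correct and follows essentially the same approach as the paper: both exploit the uniform $1$-Lipschitz bound on spectral band edges (Corollary~\ref{cor: Lipschitz spectral edges}) together with the fact that the nested intervals $(\Psi(u_m))(V)$ shrink to the singleton $\{E_\alpha(\gamma;V)\}$. The paper's version is marginally slicker in that it uses $L(I_m(V))\to E_\alpha(\gamma;V)$ directly, avoiding the triangle-inequality detour through $|I_m(V_1)|+|I_m(V_2)|$, but the content is identical.
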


\begin{proof}
Fix $\gamma\in\partial\tree$. To prove the Lipschitz continuity (in
$V$) of $\emap(\gamma;V):\partial\tree\rightarrow\sigma(\Ham)$,
we first argue that the spectral bands are Lipschitz continuous in
$V$. This follows by standard arguments using the operator norm estimate
$\|H_{\alk,V}-H_{\alk,V'}\|\leq|V-V'|$. Therefore if $I$ is a spectral
band of $\sigk$, then for all $V,V'>0$ 
\begin{equation}
\max\big\{|L\left(I(V)\right)-L\left(I(V')\right)|,|R\left(I(V)\right)-R\left(I(V')\right)|\}\leq|V-V'|.\label{eq: Lipschitz_spectral_bands}
\end{equation}
Denoting $\gamma=\left(u_{0},u_{1},u_{2},\ldots\right)$ we have that
$\left\{ \left(\Psi(u_{m})\right)(V)\right\} _{m\in\N}$ is an infinite
nested sequence of compact intervals such that $\bigcap_{m\in\N}\left(\Psi(u_{m})\right)(V)=\left\{ \emap(\gamma;V)\right\} $.
In particular, this means that $\lim_{m\rightarrow\infty}L\left(\left(\Psi(u_{m})\right)(V)\right)=\emap(\gamma;V)$.
Hence, we get for all $V_{1},V_{2}>0$ that 
\begin{align*}
\left|\emap(\gamma;V_{1})-\emap(\gamma;V_{2})\right| & =\lim_{m\rightarrow\infty}\left|L\left(\left(\Psi(u_{m})\right)(V_{1})\right)-L\left(\left(\Psi(u_{m})\right)(V_{2})\right)\right|\leq\left|V_{1}-V_{2}\right|,
\end{align*}
where the last estimate follows from (\ref{eq: Lipschitz_spectral_bands}).
\end{proof}

\begin{lem}
\label{lem: IDOS is independent of V} [also Theorem \ref{thm: paths and spectra}~(\ref{enu: thm-paths and spectra - IDOS}),~(\ref{enu: thm-paths and spectra - negative V})]
Let $\alpha\in[0,1]\setminus\Q$. Then there exists a function $N_{\alpha}:\partial\tree\rightarrow[0,1]$
such that for all $\gamma\in\partial\tree$ and all $V>0$,
\[
\IDS\left(\emap(\gamma;V)\right)=N_{\alpha}(\gamma).
\]
In addition, we have the following spectral properties connecting
between negative and positive values of $V$:
\end{lem}

\begin{enumerate}
\item \label{enu: lem: IDOS is independent of V - 2} For all $V\in\R$,
\[
\sigma(\Ham)=-\sigma(H_{\alpha,-V}).
\]
\item \label{enu: lem: IDOS is independent of V - 3} For all $\gamma\in\partial\tree$
and all $V<0$,
\[
N_{\alpha,V}\left(-\emap(\gamma;-V)\right)=1-N_{\alpha}(\gamma).
\]
\end{enumerate}
\begin{proof}
We start by proving the existence of the function $N_{\alpha}$. In
order to prove this, we argue that the value of the IDS $N_{\alpha,V}(E)$
is related to the spectral bands of the associated periodic approximations.
In the following we consider $\left\{ E\right\} =[E,E]$ as an interval
and so we can use the notation $I\prec\left\{ E\right\} $ if $I$
is another interval. First, observe that we can evaluate the limit
in the definition of the IDS, (\ref{eq: definition of DOS}), by 
\begin{equation}
\IDS(E)=\lim_{k\to\infty}\frac{\#\set{I~\textrm{is a spectral band of}~\sigma(\Halk)}{I\prec\left\{ E\right\} }}{q_{k}}.\label{eq: IDS_by_counting_bands}
\end{equation}
That is, instead of counting eigenvalues of the truncated operator,
we can count spectral bands of the corresponding periodic operator,
see \cite[Prop.~5.2]{BaBeBiTh22}.  The argument justifying this is
based on standard Floquet-Bloch theory combined with finite rank perturbations
of the operators (which may change the numerator above by a bounded
value, not affecting the limit). Let $\gamma\in\partial\tree$ and
$V>0$. Denote $\gamma=\left(u_{0},u_{1},u_{2},\ldots\right)$, and
for each $j\geq0$ denote by $k(u_{j})$ the level of the vertex $u_{j}$.
For example, $k(u_{0})=-1$, since $u_{0}$ is the root of $\tree$.
Note that $\left\{ k(u_{j})\right\} _{j\geq0}$ is an increasing sequence
and that $1\leq k(u_{j+1})-k(u_{j})\leq2$. Specializing (\ref{eq: IDS_by_counting_bands})
for $E=\emap(\gamma;V)$, we get

\begin{align}
\IDS(\emap(\gamma;V)) & =\lim_{k\to\infty}\frac{\#\set{I~\textrm{is a spectral band of}~\sigma(\Halk)}{I\prec\left\{ \emap(\gamma;V)\right\} }}{q_{k}}\label{eq: lem - IDOS is independent of V - IDOS calculation}\\
 & =\lim_{j\to\infty}\frac{\#\set{I~\textrm{is a spectral band of}~\sigma(H_{\alpha_{k(u_{j})},V})}{I\prec\Psi(u_{j})(V)}}{q_{k(u_{j})}}\nonumber \\
\quad & =\lim_{j\to\infty}\frac{\#\set{w~\textrm{is in level }k(u_{j})}{w\prec u_{j}}}{q_{k(u_{j})}}=:N_{\alpha}(\gamma),\nonumber 
\end{align}
where, moving to the second line we use that the limit exists and
henceforth may take a subsequence, $\left\{ k(u_{j})\right\} _{j\in\N}$
of $\left\{ k\right\} _{k\in\N}$. In addition, we used $\emap(\gamma;V)\in\left(\Psi(u_{j})\right)(V)$
for $j\in\N$, which holds by construction of $E_{\alpha}$. Moving
to the last line uses the order-preserving bijection between spectral
bands and the tree vertices (Theorem~\ref{thm: vertices are spectral bands}~(\ref{enu: thm-vertices are spectral bands - left-right relation})).
Finally, noting that the last line clearly is independent of $V$
and depends only on the embedding of the path $\gamma$ within $\tree$,
we may denote it by $N_{\alpha}(\gamma)$.

We continue with treating the case $V<0$ and proving the second part
of the lemma. By \cite[Lem. 4.1]{BaBeLo_spec26} we have that for
all $V\in\R$ and all $k\in\Nz$, $\sigma(\Halk)=-\sigma(H_{\alk,-V})$.
By Proposition~\ref{prop: Monotonicity and limit of spectral approximants}
we have
\[
\sigma(\Ham)=\lim_{k\rightarrow\infty}\left(\sigma(\Halk)\cup\sigma(H_{\alpha_{k+1},V})\right).
\]
Combining this with $\sigma(\Halk)=-\sigma(H_{\alk,-V})$ yields $\sigma(\Ham)=-\sigma(H_{\alpha,-V})$
and proves property (\ref{enu: lem: IDOS is independent of V - 2}).
In order to prove (\ref{enu: lem: IDOS is independent of V - 3}),
we return to the calculation (\ref{eq: lem - IDOS is independent of V - IDOS calculation})
and write, for all $V<0$,
\begin{align*}
\IDS(E) & =\lim_{k\to\infty}\frac{\#\set{I~\textrm{is a spectral band of}~\sigma(\Halk)}{I\prec\left\{ E\right\} }}{q_{k}}\\
 & =\lim_{k\to\infty}\frac{\#\set{I~\textrm{is a spectral band of}~\sigma(H_{\alk,-V})}{I\succ\left\{ -E\right\} }}{q_{k}}\\
 & =\lim_{k\to\infty}\frac{q_{k}-\#\set{I~\textrm{is a spectral band of}~\sigma(H_{\alk,-V})}{I\prec\left\{ -E\right\} }}{q_{k}}\\
 & =1-\lim_{k\to\infty}\frac{\#\set{I~\textrm{is a spectral band of}~\sigma(H_{\alk,-V})}{I\prec\left\{ -E\right\} }}{q_{k}}=1-N_{\alpha,-V}(-E),
\end{align*}
where moving to the second line is justified by $\sigma(\Halk)=-\sigma(H_{\alk,-V})$
and in moving to the third line the numerators might differ by at
most one, but this does not affect the limit. Now, for $V<0$ and
$\gamma\in\partial\tree$, we substitute above $E=-E_{\alpha}(\gamma;-V)$
and so Equation~(\ref{eq: lem - IDOS is independent of V - IDOS calculation})
implies statement (\ref{enu: lem: IDOS is independent of V - 3}).
\end{proof}

\begin{rem}
\label{rem: tree construction V<0}We note that the tree $\tree$
and the map $E_{\alpha}$ were constructed to encode the spectral
information for positive values of $V$ (with the exception of Theorem~\ref{thm: paths and spectra}
(\ref{enu: thm-paths and spectra - negative V})). In light of Lemma~\ref{lem: IDOS is independent of V}
one may wonder whether it is possible to provide an analogous tree
graph to reflect the spectral properties of $\sigma(\Halk)$ and $\sigma(\Ham)$
for $V<0$. Indeed, Lemma~\ref{lem: IDOS is independent of V} may
be used to show that such a tree will be a reflection of the original
tree graph, $\tree$. Returning to the construction of $\tree$ as
described in Section~\ref{subsec: Spectral Approximants Tree}, one
may construct the analogous $V<0$ tree by just replacing the order
of the two vertices $a^{0},b^{0}$ which are connected to the root,
such that $b^{0}\prec a^{0}$. Obviously, the bijection $\Psi$ in
Theorem~\ref{thm: vertices are spectral bands} should be reflected
accordingly.
\end{rem}

The representation of the the IDS $\IDS$ via $\partial\tree$ allows
to provide an explicit expression of the IDS. This is not needed for
the proofs of our results, so we defer the presentation of the expression
and the exact details to Appendix~\ref{App: IDS-explicit-formula}.
\begin{lem}
\label{lem: map boundary-spectrum is surjective} [the surjectivity part of Theorem \ref{thm: paths and spectra}~(\ref{enu: thm-paths and spectra - bijection})]
Let $\alpha\in[0,1]\setminus\Q$ and $V>0$. The map $\emap(~\cdot~;V):\partial\tree\rightarrow\sigma(\Ham)$
is surjective.
\end{lem}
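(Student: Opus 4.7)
The plan is to realize any $E \in \sigma(\Ham)$ as $\emap(\gamma;V)$ for some $\gamma \in \partial\tree$, via a simple application of König's lemma to the subtree of $\tree$ consisting of those vertices whose associated spectral band at the fixed $V$ contains $E$.

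Concretely, I would set $S := \{v \in \mathcal{V}_\alpha : E \in \Psi(v)(V)\}$. The root $r$ lies in $S$ because $\Psi(r)(V) = \mathbb{R}$, and $S$ is closed under taking ancestors: if $v \in S$ and $u \to v$ in $\tree$, then Theorem~\ref{thm: vertices are spectral bands}~(\ref{enu: thm-vertices are spectral bands - inclusion}) gives $\Psi(v) \strict \Psi(u)$, hence $E \in \Psi(v)(V) \subseteq \Psi(u)(V)$ and $u \in S$. Consequently $S$ forms a rooted subtree of $\tree$, and it inherits finite branching from $\tree$ since every vertex of $\tree$ has only finitely many children.

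Next, for every $k$ Proposition~\ref{prop: Monotonicity and limit of spectral approximants} yields $E \in \sigma_{\alpha_k}(V) \cup \sigma_{\alpha_{k+1}}(V)$, so $E$ lies in some spectral band of $\sigma_{\alpha_{\ell_k}}(V)$ with $\ell_k \in \{k,k+1\}$; by Theorem~\ref{thm: vertices are spectral bands} this band equals $\Psi(w_k)(V)$ for a unique vertex $w_k$ at level $\ell_k$, exhibiting vertices of $S$ at arbitrarily high levels, so $S$ is infinite. König's lemma applied to the infinite, finitely branching rooted subtree $S$ produces an infinite path $\gamma = (u_0,u_1,u_2,\ldots) \in \partial\tree$ with every $u_m \in S$, giving $E \in \bigcap_m \Psi(u_m)(V)$. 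Because this nested intersection has already been shown (via the Lebesgue-null property of $\sigma(\Ham)$ from \cite{BIST89}) to consist of a single point, namely $\emap(\gamma;V)$, the conclusion $\emap(\gamma;V)=E$ follows.

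There is no substantive obstacle once Theorem~\ref{thm: vertices are spectral bands} and Proposition~\ref{prop: Monotonicity and limit of spectral approximants} are available: the construction is purely combinatorial, and the only analytic input — the singleton-ness of $\bigcap_m \Psi(u_m)(V)$ — enters indirectly through the Cantor structure of $\sigma(\Ham)$ proved in \cite{BIST89}. In particular, the argument is uniform in $V > 0$ and does not require the strict interlacing that is available only for $V > 4$.
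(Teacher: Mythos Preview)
Your proof is correct and follows essentially the same idea as the paper's: both arguments locate, for a given $E\in\sigma(\Ham)$, an infinite rooted path all of whose vertices $u$ satisfy $E\in\Psi(u)(V)$. The only difference is packaging---you apply K\"onig's lemma directly to the ancestor-closed subtree $S=\{v:E\in\Psi(v)(V)\}$, whereas the paper picks paths $\gamma_j$ through vertices $w_j\in S$ at increasing levels and extracts a convergent subsequence using compactness of $\partial\tree$; these are equivalent formulations, and your version is arguably the cleaner one.
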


\begin{proof}
Let $V>0$ and $E\in\sigma(\Ham)$. By Proposition~\ref{prop: Monotonicity and limit of spectral approximants},
we have $E\in\sigk(V)\cup\sigkp(V)$ for all $k\in\N$ and so $E$
is contained in a spectral band of $\sigk(V)$ or $\sigkp(V)$. By
Theorem~\ref{thm: vertices are spectral bands}, there is a bijection
$\Psi$ between spectral bands and vertices of the spectral $\alpha$-tree
$\tree$. We conclude that we can choose a sequence of vertices, $\left\{ w_{j}\right\} _{j\in\N}$
in $\tree$, such that $E\in\Psi(w_{j})(V)$. In particular this sequence
of vertices may be chosen such that for all $j\in\N$, $w_{j}$ is
in level $k_{j}$ and $k_{j}<k_{j+1}$.

Since $\tree$ is a directed connected tree, we can choose for each
$j\in\N$ an infinite path $\gamma_{j}\in\partial\tree$ passing through
the vertex $w_{j}$, see a sketch in Figure~\ref{Fig: order preserving}~(1).
We note that $\partial\tree$ is a compact space (using the product
topology on spheres from the root, or equivalently noticing that this
space is a Gromov boundary). Thus, $\left\{ \gamma_{j}\right\} _{j\in\N}$
admits a convergent subsequence $\left\{ \gamma_{j_{l}}\right\} _{l\in\N}$
with limit $\gamma=(u_{0},u_{1},u_{2},\ldots)\in\partial\tree$. We
claim that $\emap(\gamma~;V)=E$, which proves surjectivity.

By the convergence of $\left\{ \gamma_{j_{l}}\right\} _{l\in\N}$
to $\gamma$, there is for each $N\in\N$, an $l_{N}\in\N$ such that
for all $l\geq l_{N}$, we have $\gamma|_{[0,N]}=\gamma_{j_{l}}|_{[0,N]}$
and $k_{j_{l}}>N$ (i.e. the vertex $w_{j_{l}}$ does not appear on
the first $N$ vertices of the path $\gamma_{j_{l}}$). Thus, for
each $N\in\N$ and $l\geq l_{N}$, we have $u_{N}\to w_{j_{l}}$ (a
path from $u_{N}$ to the vertex $w_{j_{l}}$). Hence, Theorem~\ref{thm: vertices are spectral bands}~(\ref{enu: thm-vertices are spectral bands - inclusion})
implies $\Psi(w_{j_{l}})(V)\strict\Psi(u_{N})(V)$. Since by construction
$E\in\Psi(w_{j_{l}})(V)$, we conclude $E\in\Psi(u_{N})(V)$ for all
$N\in\N$. Hence, $E=\emap(\gamma~;V)$ follows from the definition
of $\emap$ by the intersection of all $\Psi(u_{N})(V)$.
\end{proof}

Next, we show that the map $\emap(~\cdot~;V):\partial\tree\rightarrow\sigma(\Ham)$
is order preserving. Towards this, recall the order relation on $\partial\tree$.
Let $\gamma_{1}=\left(u_{0},u_{1},\ldots\right),\gamma_{2}=\left(w_{0},w_{1},\ldots\right)\in\partial\tree$.
If $\gamma_{1}=\gamma_{2}$, we set both $\gamma_{1}\preceq\gamma_{2}$
and $\gamma_{2}\preceq\gamma_{1}$ (so that the order is reflexive).
Otherwise, there exists a unique $k\in\N_{0}$ such that $u_{k-1}=w_{k-1}$
and $u_{k}\neq w_{k}$. By Definition~\ref{def: spectral approximants tree},
either
\begin{itemize}
\item $u_{l}\prec w_{l}$ for all $l\geq k$ and so $\gamma_{1}\preceq\gamma_{2}$,
or
\item $w_{l}\prec u_{l}$ for all $l\geq k$ and so $\gamma_{2}\preceq\gamma_{1}$.
\end{itemize}
\begin{figure}[hbt]
\includegraphics[scale=0.86]{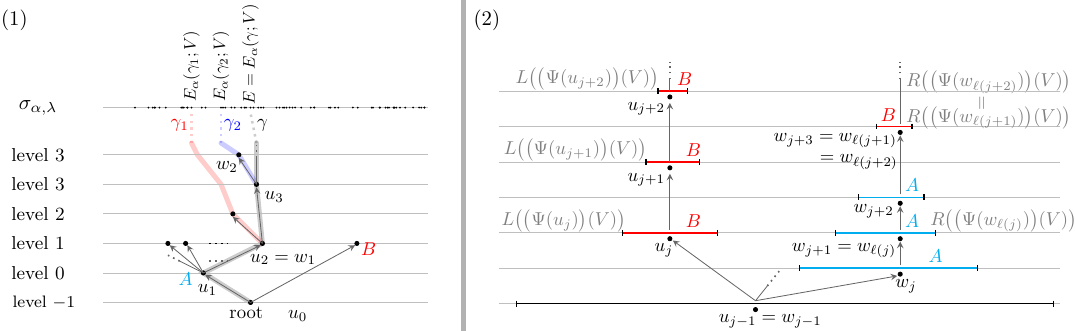}\caption{In (1), the sequence of paths and vertices constructed in Lemma~\ref{lem: map boundary-spectrum is surjective}
are outlined. In (2), the paths $\gamma_{1}$ and $\gamma_{2}$ and
their associated spectral bands in the proof of Lemma~\ref{lem: order preserving E_=00005Calpha}
are sketched.}
\label{Fig: order preserving}
\end{figure}

\begin{lem}
\label{lem: order preserving E_=00005Calpha} [also Theorem \ref{thm: paths and spectra}~(\ref{enu: thm-paths and spectra - order preserving})]
Let $\alpha\in[0,1]\setminus\Q$. If $\gamma_{1},\gamma_{2}\in\partial\tree$
satisfy $\gamma_{1}\preceq\gamma_{2}$, then $E_{\alpha}(\gamma_{1};V)\leq E_{\alpha}(\gamma_{2};V)$
for all $V>0$.
\end{lem}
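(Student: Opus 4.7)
The plan is to reduce the claim to Proposition~\ref{prop: vertices are spectral bands}(\ref{enu: prop-vertices are spectral bands - preserves  left-right}), which preserves the order $\prec$ under $\Psi$ only for pairs of vertices whose levels differ by at most one. The case $\gamma_{1}=\gamma_{2}$ is immediate, so I would assume that $\gamma_{1}=(u_{0},u_{1},\ldots)$ and $\gamma_{2}=(w_{0},w_{1},\ldots)$ are distinct and let $K\geq1$ be the first index with $u_{K}\neq w_{K}$. By the definition of $\preceq$ on $\partial\tree$, this forces $u_{K}\prec w_{K}$ in $\V_{\alpha}$; and the propagation clause in Definition~\ref{def: spectral approximants tree} then passes this order to all descendants, so that $u_{K}\to v_{1}$ and $w_{K}\to v_{2}$ implies $v_{1}\prec v_{2}$. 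Throughout, I would write $k(\cdot)$ for the level of a vertex.

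The main obstacle is that at a common index $l$ the vertices $u_{l}$ and $w_{l}$ may lie at very different levels of $\tree$ (recall Example~\ref{exa: left-right order relation only for adjacent levels}, where $\Psi$ fails to preserve $\prec$ across non-adjacent levels), so a direct comparison of $\Psi(u_{l})$ and $\Psi(w_{l})$ is unavailable. To circumvent this, for each $l\geq K$ I would pair $u_{l}$ with $w_{l'(l)}$, where $l'(l)$ is the smallest index with $k(w_{l'(l)})\geq k(u_{l})$. Because consecutive vertices along any path in $\tree$ ascend by exactly one or two levels, this forces
\[
k(w_{l'(l)})\in\{k(u_{l}),\,k(u_{l})+1\}.
\]
Moreover, for $j<K$ one has $w_{j}=u_{j}$ and hence $k(w_{j})<k(u_{K})\leq k(u_{l})$, so $l'(l)\geq K$ as well.

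The propagation rule from the previous paragraph then gives $u_{l}\prec w_{l'(l)}$ in $\V_{\alpha}$. Since the levels of these two vertices differ by at most one, Proposition~\ref{prop: vertices are spectral bands}(\ref{enu: prop-vertices are spectral bands - preserves  left-right}) delivers $\Psi(u_{l})\prec\Psi(w_{l'(l)})$ in the map sense of Definition~\ref{def: order relations on spectral bands - as maps}, which in particular yields
\[
R\bigl(\Psi(u_{l})(V)\bigr)<R\bigl(\Psi(w_{l'(l)})(V)\bigr)\qquad\text{for every }V>0.
\]

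To finish, I would pass to the limit $l\to\infty$. The nested compact intervals $\Psi(u_{l})(V)$ shrink to the singleton $\{E_{\alpha}(\gamma_{1};V)\}$, so $R(\Psi(u_{l})(V))\to E_{\alpha}(\gamma_{1};V)$. Because $k(u_{l})\geq l-1\to\infty$ and hence $l'(l)\to\infty$, the analogous reasoning gives $R(\Psi(w_{l'(l)})(V))\to E_{\alpha}(\gamma_{2};V)$. Passing the strict inequality above to the limit yields the desired weak inequality $E_{\alpha}(\gamma_{1};V)\leq E_{\alpha}(\gamma_{2};V)$.
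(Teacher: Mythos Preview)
Your argument is correct and follows essentially the same approach as the paper: both proofs pair $u_l$ with a vertex $w_{\ell(l)}$ on $\gamma_2$ whose level differs by at most one, invoke Proposition~\ref{prop: vertices are spectral bands}(\ref{enu: prop-vertices are spectral bands - preserves  left-right}) to compare the corresponding spectral bands, and pass to the limit. Your construction of the pairing (the least index $l'(l)$ with $k(w_{l'(l)})\geq k(u_l)$) is just a more explicit realization of what the paper leaves as ``we can choose a map $\ell$''.
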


\begin{proof}
Let $V>0$ and $\gamma_{1}=\left(u_{0},u_{1},\ldots\right),\gamma_{2}=\left(w_{0},w_{1},\ldots\right)\in\partial\tree$
be such that $\gamma_{1}\preceq\gamma_{2}$. If $\gamma_{1}=\gamma_{2}$
then $E_{\alpha}(\gamma_{1};V)=E_{\alpha}(\gamma_{2};V)$ by definition,
and we may proceed assuming $\gamma_{1}\neq\gamma_{2}$. Let $k\in\N_{0}$
be such that $u_{i}\prec w_{j}$ for all $i,j\geq k$ . It is worth
pointing out that neither the vertex $u_{j}$ (respectively $w_{j}$)
is necessarily in level $j$ nor both $u_{j},w_{j}$ are in the same
level. 

However, since two vertices connected by an edge differ at most by
two levels, we conclude that we can choose a map $\ell:\N\to\N$ with
$\lim_{j\to\infty}\ell(j)=\infty$ such that $u_{j}$ and $w_{\ell(j)}$
are at most one level apart (so, they are either in the same level
or in consecutive levels). Such a map is in general not unique - an
example is depicted in Figure~\ref{Fig: order preserving}~(2). 

For $k$ as above, there is a $j_{0}\in\N$ such that for $j\geq j_{0}$,
we have $j\geq k$ and $\ell(j)\geq k$. Thus, $u_{j}\prec w_{\ell(j)}$
holds for $j\geq j_{0}$. Then Theorem~\ref{thm: vertices are spectral bands}~(\ref{enu: thm-vertices are spectral bands - left-right relation})
implies $\left(\Psi(u_{j})\right)(V)\prec\left(\Psi(w_{\ell(j)})\right)(V)$
for $j\geq j_{0}$ using that $u_{j}$ and $w_{\ell(j)}$ are at most
one level apart. Hence, $L\left(\left(\Psi(u_{j})\right)(V)\right)<R\left(\left(\Psi(w_{\ell(j)})\right)(V)\right)$
for $j\geq j_{0}$ follows from Definition~\ref{def: order relations on spectral bands - as maps}.
By construction of $E_{\alpha}$, we have 
\[
E_{\alpha}(\gamma_{1};V)=\lim_{j\to\infty}L\left(\left(\Psi(u_{j})\right)(V)\right)\quad\textrm{and}\quad E_{\alpha}(\gamma_{2};V)=\lim_{j\to\infty}R\left(\left(\Psi(w_{\ell(j)})\right)(V)\right),
\]
and so we conclude $E_{\alpha}(\gamma_{1};V)\leq E_{\alpha}(\gamma_{2};V)$.
\end{proof}

We note that the statement of Lemma~\ref{lem: order preserving E_=00005Calpha}
can be strengthened: If $\gamma_{1}\preceq\gamma_{2}$ and $\gamma_{1}\neq\gamma_{2}$,
then $E_{\alpha}(\gamma_{1};V)<E_{\alpha}(\gamma_{2};V)$ (not equal!)
follows for all $V>0$. This is an immediate consequence of the injectivity
of $\emap(~\cdot~;V):\partial\tree\rightarrow\sigma(\Ham)$, which
is proven in Section~\ref{sec: Injectivity-Proof}.

\section{Tools for proving the injectivity of $\protect\emap$\label{sec: Tools for Injectivity proof}}

In order to complete the proof of Theorem~\ref{thm: paths and spectra},
it remains to establish the injectivity of the map $\emap(~\cdot~;V):\partial\tree\rightarrow\sigma(\Ham)$.
This requires a change of perspective together with some additional
techniques. Both are developed in the present section, while the actual
injectivity proof is postponed to Section~\ref{sec: Injectivity-Proof}.

The change of viewpoint is introduced in Subsection~\ref{subsec: Finite Continued Fraction}
through the space of finite continued fraction expansions $\Co$.
In Subsection~\ref{subsec: spectra via traces}, this $\Co$-space
is connected to the classical formalism of transfer matrices and traces.
Next, Subsection~\ref{subsec: A-B-types-via-C-space} uses the $\Co$-space
to further develop the hierarchical structure of spectral bands.

In Subsection~\ref{subsec: admissibility-and-triple-trace-product},
we prove positivity properties of certain trace products using a new
approach for the admissibility concept (which we introduced in \cite{BaBeLo_spec26}).
When $V$ tends to zero, spectral bands start to overlap and trace
values may change sign. Therefore, in Subsection~\ref{subsec: trace-monotonicity},
we use the tools developed throughout this section to show that these
signs remain unchanged.

For a first reading, one may read only the first two subsections and
then proceed directly to Section~\ref{sec: Injectivity-Proof}. The
later subsections are needed for the proof of Lemma~\ref{lem: estimating t_=00007Bc.m.n=00007D},
which may initially be viewed as a black box in order to focus on
the key ideas in the proof the injectivity of $E_{\alpha}$ (Lemma~\ref{lem: map boundary-spectrum is injective}).

\subsection{The space $\protect\Co$ of augmented finite continued fraction expansions\label{subsec: Finite Continued Fraction}}

We have already recognized the importance of the rational approximations,
$\Halk$ and their spectra $\sigma_{k}(V)$. We also observed the
significant role played by the continued fraction expansion of $\alk$,
\[
\alk=c_{0}+\frac{1}{c_{1}+\frac{1}{\ddots+\frac{1}{c_{k}}}}\in\left[0,1\right].
\]
We write the sequence of coefficients above as a tuple $\co(k):=[0,c_{0},c_{1},\ldots,c_{k}]$.
Writing it in this form implies $c_{-1}=c_{0}=0$. That $c_{0}=0$
is clear from $\alk\in[0,1]$, whereas the additional entry $c_{-1}=0$
corresponds to level $k=-1$ (i.e., the root vertex) of the spectral
$\alpha$-tree $\tree$. Next, we change our perspective. Rather than
focusing on a single sequence of rational approximations, $\left\{ \alk\right\} $,
of a particular $\alpha\notin\Q$, we consider the whole space of
augmented finite continued fraction expansions, and draw connections
between its elements, as defined next.

Define the space of \emph{augmented finite continued fraction expansions}
by 
\[
\Co:=\left\{ [0],~[0,0]\right\} \cup\bigcup_{k\in\N}\set{[0,0,c_{1},\ldots,c_{k}]}{c_{1},\ldots,c_{k-1}\in\N,~c_{k}\in\N_{-1}},
\]
where we are using the convention that the two first entries of all
$\co\in\Co$, satisfy $c_{-1}=c_{0}=0$. A special emphasis is given
to the non-standard choice of allowing $c_{k}$ attain the values
$0$ and $-1$. Additionally, denote 
\[
[\co,m]:=[0,0,c_{1},\ldots,c_{k},m],\qquad m\in\N_{-1}.
\]
This notation is used only when $[\co,m]\in\Co$. To assure this,
we assume when using the notation $[\co,m]$ that either $\co=[0,0]$
or $\co=[0,0,c_{1},\ldots,c_{k}]\in\Co$ with $c_{k}\not\in\{-1,0\}$.

The connection between the finite continued fraction expansions and
the rational numbers is done via the \emph{evaluation map} $\varphi:\Co\to\R\cup\left\{ \infty\right\} $.
This map is defined for all $\co=[0,c_{0},c_{1},\dots,c_{k}]\in\Co\backslash\left\{ [0]\right\} $
by
\begin{equation}
\varphi([0,c_{0},c_{1},\dots,c_{k}]):=\begin{cases}
\varphi([0,c_{0},c_{1},\dots,c_{k-2}]), & k\in\N\textrm{ and }c_{k}=0,\\[0.1cm]
c_{0}+\frac{1}{c_{1}+\frac{1}{\ddots+\frac{1}{c_{k}}}}, & \text{otherwise},
\end{cases}\label{eq: phi map from C to Q}
\end{equation}
and $\varphi([0]):=\infty$. Note that $\varphi([0,c_{0},c_{1},\dots,c_{k},-1])=\varphi([0,c_{0},c_{1},\dots,c_{k}-1])$
for $k\geq1$.

In all statements and proofs so far, we fixed some $\alpha\in\left[0,1\right]\backslash\Q$
and considered all its finite continued fraction expansions, giving
rise to $\alk=\frac{p_{k}}{q_{k}}$. From this point and later on
we consider the space of all rational numbers represented by their
augmented finite continued fraction expansions, $\co\in\Co$.

\subsection{The spectra $\left\{ \protect\sigc\right\} _{\protect\co\in\protect\Co}$
via traces of transfer matrices\label{subsec: spectra via traces}}

We present the well-known formalism for transfer matrices, though
adapted to the $\Co$-space introduced above. This describes the rational
approximants spectra $\sigma_{k}(V)$. For $V\in\mathbb{R}$, define
\[
M_{[0]}(E,V):=\begin{pmatrix}1 & -V\\
0 & 1
\end{pmatrix},\qquad M_{[0,0]}(E,V):=\begin{pmatrix}E & -1\\
1 & 0
\end{pmatrix}
\]
and recursively define the transfer matrices for $\co=[0,0,c_{1},\ldots,c_{k}]\in\Co$
(where $k\in\N$) by
\[
M_{\co}(E,V):=M_{[0,0,c_{1},\ldots,c_{k-2}]}(E,V)M_{[0,0,c_{1},\ldots,c_{k-1}]}(E,V)^{c_{k}}.
\]
Consequently, denote the traces of the transfer matrices by
\begin{equation}
\tc(E,V):=\tr(M_{\co}(E,V)).\label{eq: traces}
\end{equation}

Representing the spectra via the traces of these transfer matrices
is a classical approach \cite{Casdagli1986,Sut87,BIST89,Raym95}.
Our description only slightly deviates from the conventional one,
by referring to all the elements of $\Co$ (within the literature
above we take a route which is the closest to \cite{Raym95}). This
approach is expressed in the next definition and proposition.
\begin{defn}
\label{def: sigma_c}For all $V\in\R$, and $\co\in\Co$ denote 
\[
\sigc(V):=\set{E\in\R}{\left|\tc(E,V)\right|\leq2}.
\]
\end{defn}

The connection between the evaluation map $\varphi$, the $\Co$-space,
the traces and the spectra is as follows.
\begin{lem}
\label{lem: Equal-co-values}\cite[Prop.~3.5, Lem.~3.6]{BaBeBiTh22}For
all $\co,\widetilde{\co}\in\Co$ with $\varphi(\widetilde{\co})=\varphi(\co)$,
we have
\[
\sigma_{\cop}(V)=\sigc(V)\quad\textrm{and}\quad t_{\widetilde{\co}}(E,V)=\tc(E,V),\quad\textrm{for all}\;E,V\in\R.
\]
Let $\alpha\in[0,1]\setminus\Q$ with infinite continued fraction
expansion $(0,c_{1},c_{2},\ldots)$ and $k\in\N$. Then the spectrum
$\sigma_{k}(V)$ of the operator $H_{\alk,V}$ satisfies
\[
\sigma_{\co(k)}(V)=\sigma_{k}(V),\qquad\text{where }\co(k):=[0,c_{0},c_{1},\ldots,c_{k}].
\]
\end{lem}

We collect in the following proposition some well-known identities
of the traces, see e.g. \cite{Raym95,BIST89,Simon2011,BaBeBiTh22,DaFi24-book_2}.
Recall that for $\co\in\Co$, $\tc$ is a function of $E,V\in\R$,
but we abbreviate notation and suppress these dependencies in the
following.
\begin{prop}
\label{prop: traceMaps} Let $\co\in\Co$ and $m\in\Nz$ such that
$[\co,m-1]\in\Co$. Then the following holds.
\begin{enumerate}
\item \label{enu:.Prop-traceMaps-Fricke-Vogt}We have for all $V\in\R$
\textup{(the }\textup{Fricke}\textup{--}\textup{Vogt}\textup{ invariant)}
\[
V^{2}+4=t^{2}_{\co}+t^{2}_{[\co,m]}+t^{2}_{[\co,m-1]}-\tc t_{[\co,m]}t_{[\co,m-1]}
\]
\item \label{enu:.Prop-traceMaps-Recursive Relation}We have
\[
t_{[\co,m+1]}=\tc t_{[\co,m]}-t_{[\co,m-1]}.
\]
\item \label{enu:.Prop-traceMaps-3 intersection property}For $V>4$, we
have
\[
\sigc(V)\cap\sigcm(V)\cap\sigma_{[\co,m-1]}(V)=\emptyset.
\]
\item \label{enu:.Prop-traceMaps-band edges}For $E\in\R$ and $\co\in\Co$
with $\varphi(\co)\in[0,1]$, we have $\left|t_{\co}(E,V)\right|=2$,
if and only if $E\in\left\{ L\left(\Ic(V)\right),R\left(\Ic(V)\right)\right\} $
for some spectral band $\Ic(V)$ in $\sigc(V)$.
\item \label{enu:.Prop-traceMaps-trace-monotone} For a spectral band $\Ic(V)$
in $\sigc(V)$ the map $\left.\tc\right|_{\Ic}:\Ic\to[-2,2]$ is strictly
monotone, continuous and onto.
\end{enumerate}
\end{prop}

\begin{proof}
The statements above appear in: (\ref{enu:.Prop-traceMaps-Fricke-Vogt})
\cite[Prop.~3.13]{BaBeBiTh22}, (\ref{enu:.Prop-traceMaps-Recursive Relation})
\cite[Lem.~3.8]{BaBeBiTh22}, (\ref{enu:.Prop-traceMaps-3 intersection property})
\cite[Prop.~4.7]{BaBeBiTh22}, (\ref{enu:.Prop-traceMaps-band edges})
\cite[Prop.~3.5, Prop.~4.1]{BaBeBiTh22} and (\ref{enu:.Prop-traceMaps-trace-monotone})
\cite[Prop.~3.5]{BaBeBiTh22}.
\end{proof}

\subsection{The tower property. \label{subsec: A-B-types-via-C-space}}

We present here an additional hierarchical property of the spectral
bands of $\sigc$. This property is related to the spectral band structure
stated in Proposition~\ref{prop: A-B-types-from-algebra-paper} and
we call it the tower property (see also \cite[Def.~2.9]{BaBeLo_spec26}).
To describe it we first fix $\alpha\in[0,1]\backslash\Q$. and its
continued fraction expansion $(0,c_{1},c_{2},\ldots)$, but then vary
one digit in this expansion.

In the following, we study the spectral bands $I:V\mapsto I(V)$,
$V>0$, as maps in the sense of Definition~\ref{def: A spectral band is continuous}.
By Proposition~\ref{prop: A-B-types-from-algebra-paper}, their type
($\tA$ or $\tB$) is independent of $V>0$. Let $I_{k}$ be a spectral
band of $\sigma_{k}$ corresponding to $\co(k)=[0,0,c_{1},\ldots,c_{k}]$.
For $n\in\N$, consider the spectra $\sigma_{[\co(k+1),n]}$. By Proposition~\ref{prop: A-B-types-from-algebra-paper}\,(\ref{enu: forward-B}),
there are exactly $M+1$ spectral bands of type $\tB$, denoted here
by $B^{1}_{n},\ldots,B^{M+1}_{n}$ which are strictly contained in
$I_{k}$. We consider different values of $n\in\N$ (this is the expansion
digit which we vary) and note that $\varphi([\co(k+1),n])=\alpha_{k+2}$
if and only if $n=c_{k+2}$ (and otherwise $\varphi([\co(k+1),n])$
deviates from the standard rational approximations of $\alpha$).
The following proposition shows that the bands $B^{1}_{n},\ldots,B^{M+1}_{n}$
satisfy nested inclusion relations for varying $n$, which we call
the tower property.

\begin{figure}[hbt]
\includegraphics[scale=0.82]{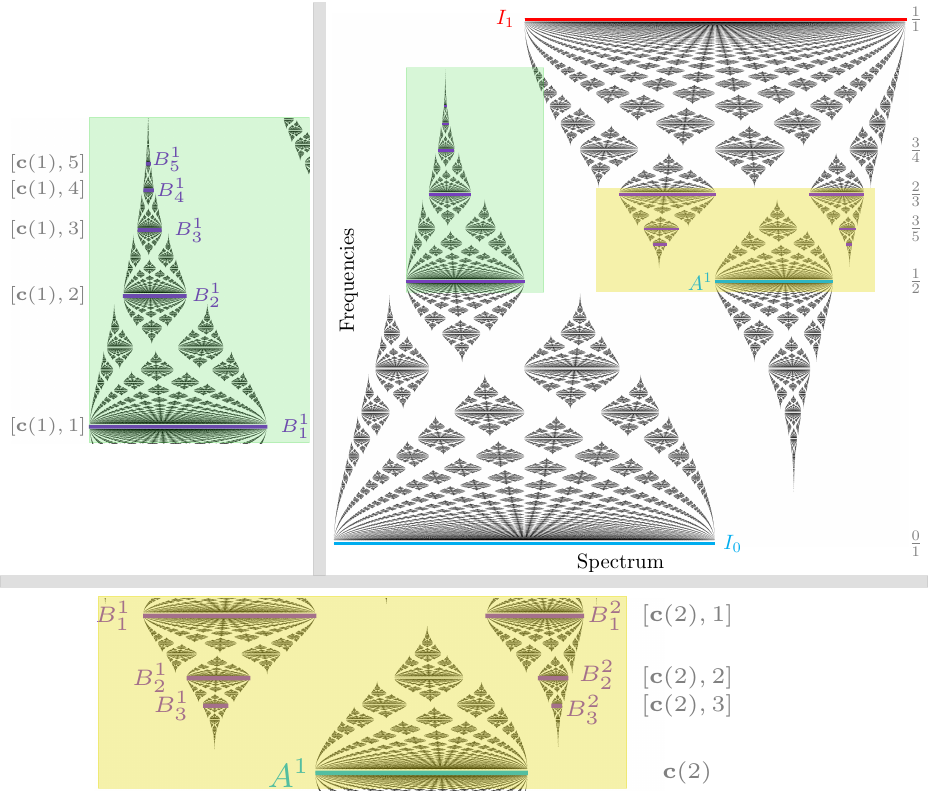}\caption{Plot of the Kohmoto butterfly. The highlighted spectral bands illustrate
the tower property from Proposition~\ref{prop: Tower Property} bands
form nested towers as the last continued-fraction digit is increased.
In particular, the figure shows the $\protect\tB$-towers over $I_{0}=[-2,2]$
(for $\protect\co(1)=[0,0,1]$) and $I_{1}=[-2+V,2+V]$ (for $\protect\co(2)=[0,0,1,1]$),
respectively. Zoomed-in insets are included.\label{Fig: TreeData-1}}
\end{figure}

\begin{prop}
\label{prop: Tower Property} Let $\alpha\in[0,1]\setminus\Q$ with
infinite continued fraction expansion $(0,c_{1},c_{2},\ldots)$ and
$k\in\N$. Let $I_{k}$ be a spectral band of $\sigma_{k}$. Denote
\[
M:=\begin{cases}
c_{k+1}-1 & I_{k}~\textrm{is of type }\tA\\
c_{k+1} & I_{k}~\textrm{is of type }\tB
\end{cases}.
\]
For each $n\in\N$, let $\left\{ B^{j}_{n}\right\} ^{M+1}_{j=1}$
be the unique type $\tB$ spectral bands in $\sigma_{[\co(k+1),n]}$
which are contained in $I_{k}$ as in Proposition~\ref{prop: A-B-types-from-algebra-paper}.
Then for all $1\leq j\leq M+1$
\[
B^{j}_{n+1}\strict B^{j}_{n}\strict I_{k}.
\]
\end{prop}

\begin{proof}
Let $n\in\N$ and $1\leq j\leq M+1$ be fixed. By Proposition~\ref{prop: A-B-types-from-algebra-paper}\,(\ref{enu: forward-B})
$B^{j}_{n}\strict I_{k}$, so we need to prove just that $B^{j}_{n+1}\strict B^{j}_{n}$.
By \cite[Prop.~3.4]{BaBeLo_spec26}, the spectral band $B^{j}_{n+1}$
in $\sigma_{[\co(k+1),n+1]}$ is of type $\tB$ if and only if there
is a spectral band $I^{j}$ in $\sigma_{[\co(k+1),n+1,-1]}=\sigma_{[\co(k+1),n]}$
such that $B^{j}_{n+1}\strict I^{j}$. We show that $I^{j}=B^{j}_{n}$. 

First $I^{j}$ must be of type $\tB$. Otherwise $I^{j}\strict\sigma_{\co(k+1)}=\sigma_{k+1}$,
which implies $B^{j}_{n+1}\strict\sigma_{k+1}$. This contradicts
that $B^{j}_{n+1}$ is of type $\tB$. 

Thus, $I^{j}$ is of type $\tB$ and hence strictly contained in $\sigma_{k}$.
Since $B^{j}_{n+1}\strict I^{j}$ and $B^{j}_{n+1}\strict I_{k}$,
we have $I^{j}(V)\cap I_{k}(V)\neq\emptyset$ for all $V>0$. But
this means that $I_{k}$ must be the spectral band of $\sigk$ which
contains $I^{j}$, i.e., $I^{j}\strict I_{k}$. By Proposition~\ref{prop: A-B-types-from-algebra-paper}\,(\ref{enu: forward-B})
there are exactly $M+1$ spectral bands of $\sigma_{[\co(k+1),n]}$
of type $\tB$ which are strictly contained in $I_{k}$, and so $I^{j}$
must be one of them, i.e. one of $\left\{ B^{j}_{n}\right\} ^{M+1}_{j=1}$.
The interlacing property in Proposition~\ref{prop: A-B-types-from-algebra-paper}\,(\ref{enu: interlacing}),
together with $B^{j}_{n+1}\strict I^{j}$ guarantees that $I^{j}=B^{j}_{n}$.
\end{proof}

\subsection{Admissibility and triple trace products\label{subsec: admissibility-and-triple-trace-product}}

We need to relate the hierarchical structure in Proposition~\ref{prop: A-B-types-from-algebra-paper}
and Proposition~\ref{prop: Tower Property} to various trace values
$\tc$. This is mainly done in the next subsection, but towards this
we establish in Lemmas~\ref{lem: Product traces - consecutive B-bands}
and \ref{lem: Product traces - along the tower} the positivity of
certain trace products. From here on we abbreviate the notation for
the trace functions $\tc(E,V)$ and write $\tc(E(V))$ or $\tc(E)$,
whenever $E:(0,\infty)\rightarrow\R$ is taken to be a $V$-dependent
map.

Let $\co=[0,0,c_{1},\ldots,c_{k},c_{k+1}]\in\Co$ such that $[\co,1]\in\Co$.
Let $I_{k}:V\mapsto I_{k}(V)$ be a spectral band in $\sigma_{\co(k)}$
and $E_{k}:V\mapsto E_{k}(V)$ be an edge of $I_{k}$, i.e., either
$E_{k}(V)=L(I_{k}(V))$ or $E_{k}(V)=R(I_{k}(V))$ for all $V>0$.
Similarly, let $E_{k+1}(V)$ (respectively $E_{k+2}(V)$) be an edge
of a spectral band of $\sigma_{\co(k+1)}(V)$ (respectively $\sigma_{[\co(k+1),1]}(V)$).
Following \cite[Prop. 4.8]{BaBeLo_spec26} we say that $E_{k}$, $E_{k+1}$,
$E_{k+2}$ is an admissible triple if the product of corresponding
traces is positive, i.e. if
\[
t_{\co(k)}(E_{k})\thinspace t_{\co(k+1)}(E_{k+1})\thinspace t_{[\co(k+1),1]}(E_{k+2})>0.
\]

The specific continued fractions, $\co(k)$, $\co(k+1)$, $[\co(k+1),1]$
in the definition of admissibility above are chosen so that they correspond
to those which appear in the Fricke--Vogt invariant (Proposition~\ref{prop: traceMaps}\,(\ref{enu:.Prop-traceMaps-Fricke-Vogt})).
Indeed, admissibility is applied in this paper in conjunction with
the Fricke--Vogt invariant. A somewhat more general definition of
admissibility is introduced and applied in \cite{BaBeLo_spec26}.
There, we also give an equivalent criterion for admissibility in terms
of the position of the spectral band in the spectrum. We provide here
a specialized version of this criterion which can be applied in the
current paper. The criterion is presented in the next definition and
lemma and it is then applied in the proofs of the two preceding lemmas.
\begin{defn}
\label{def: index of spectral band}[Index of a spectral band] Let
$I_{k}$ be a spectral band of $\sigma_{k}$. The \emph{index} of
$I_{k}$ (in $\sigma_{k}$) is defined by 
\[
\ind(I_{k}):=\left|\set{I\textrm{ is a spectral band of }\sigma_{k}}{I\prec I_{k}}\right|.
\]
\end{defn}

Note that the index counting starts from zero, namely $0\leq\ind(I_{k})\leq q_{k}-1$
where $\alpha_{k}=\frac{p_{k}}{q_{k}}$ with $p_{k},q_{k}$ coprime.
Moreover, we emphasize that $\ind(I_{k})$ is independent of $V>0$.
\begin{lem}
\label{lem: admissibility-criterion}[Admissibility criterion] Let
$\co=[0,0,c_{1},\ldots,c_{k},c_{k+1}]\in\Co$ such that $[\co,1]\in\Co$.
Let $E_{k}$, $E_{k+1}$, $E_{k+2}$ be edges of spectral bands $I_{k}$,
$I_{k+1}$, $I_{k+2}$ of $\sigma_{\co(k)}(V)$, $\sigma_{\co(k+1)}(V)$,
$\sigma_{[\co(k+1),1]}(V)$, correspondingly. Further assume that
two out of $E_{k}(V)$, $E_{k+1}(V)$, $E_{k+2}(V)$ are left edges
of spectral bands and one of them is a right edge of a spectral band.
Then $E_{k}$, $E_{k+1}$, $E_{k+2}$ are admissible if and only if
\begin{equation}
\ind(I_{k})+\ind(I_{k+1})+\ind(I_{k+2})\equiv1\mod 2.\label{eq: admissibility-criterion}
\end{equation}
\end{lem}

\begin{proof}
The lemma may be obtained as a corollary of \cite[Lem. 5.6]{BaBeLo_spec26}.
Alternatively, one may prove this directly. To do so, note first that
if $\sigma_{\tilde{\co}}$ has $q$ spectral bands then the sign of
$t_{\tilde{\co}}$ at an edge of a spectral band $I$ satisfies
\[
\sgn\left(t_{\tilde{\co}}(L(I))\right)=(-1)^{q-\ind(I)}\quad\textrm{and}\quad\sgn\left(t_{\tilde{\co}}(R(I))\right)=(-1)^{q-\ind(I)+1},
\]
as can be inferred for example from \cite[Lem. 4.6, Lem. 5.4]{BaBeLo_spec26}.
This, combined with the recursion $q_{[\co,m,1]}=\qcm+\qc$ from (\ref{eq: Proof vertices are spectral bands - recursion q_k})
allows to conclude (\ref{eq: admissibility-criterion}).
\end{proof}

\begin{lem}
\label{lem: Product traces - consecutive B-bands} Let $\alpha\in[0,1]\setminus\Q$
with infinite continued fraction expansion $(0,c_{1},c_{2},\ldots)$
and $k\in\N$. Consider a spectral band $I_{k}$ in $\sigma_{\co(k)}$
with the associated spectral band $B^{1}_{1}$ in $\sigma_{[\co(k+1),1]}$
introduced in Proposition~\ref{prop: A-B-types-from-algebra-paper}.
Moreover, let $J_{k+1}$ in $\sigma_{\co(k+1)}$ be the rightmost
spectral band of $\sigma_{\co(k+1)}$ for which $J_{k+1}\prec I_{k}$.
Then for all $V>0$,
\[
\sgn\left(t_{\co(k)}\left(L(I_{k}(V))\right)\cdot t_{\co(k+1)}\left(R(J_{k+1}(V))\right)\cdot t_{[\co(k+1),1]}\left(L(B^{1}_{1}(V))\right)\right)=+1.
\]
\end{lem}

\begin{proof}
First, we observe that the three spectral bands to which the statement
refers to satisfy the following index relation 
\begin{align}
\ind(B^{1}_{1}) & =\ind(I_{k})+\ind(J_{k+1})+1.\label{eq: index-relation-lem-1}
\end{align}
This can be verified using the spectral bands hierarchical structure
as given in Proposition~\ref{prop: A-B-types-from-algebra-paper},
or alternatively this relation can also be found in \cite[Eq. (5.11)]{BaBeLo_spec26}
(by substituting $\co=\co(k)$, $m=c_{k+1}$, $n=1$ there). With
(\ref{eq: index-relation-lem-1}), the lemma is obtained as a corollary
of Lemma\,\ref{lem: admissibility-criterion}.
\end{proof}

\begin{lem}
\label{lem: Product traces - along the tower} Let $\alpha\in[0,1]\setminus\Q$
with infinite continued fraction expansion $(0,c_{1},c_{2},\ldots)$
and $k,n\in\N$. Consider a spectral band $I_{k}$ in $\sigma_{k}$
with the associated spectral band $B^{1}_{n}$ in $\sigma_{[\co(k+1),n]}$
and $B^{1}_{n+1}$ in $\sigma_{[\co(k+1),n+1]}$ introduced in Proposition~\ref{prop: A-B-types-from-algebra-paper}.
Moreover, let $J_{k+1}$ in $\sigma_{\co(k+1)}$ be the rightmost
spectral band of $\sigma_{\co(k+1)}$ for which $J_{k+1}\prec I_{k}$.
Then for all $V>0$,
\[
\sgn\left(t_{\co(k+1)}\left(R(J_{k+1}(V))\right)\cdot t_{[\co(k+1),n]}\left(L(B^{1}_{n}(V))\right)\cdot t_{[\co(k+1),n+1]}\left(L(B^{1}_{n+1}(V))\right)\right)=+1.
\]
\end{lem}

\begin{proof}
As in the proof of Lemma~\ref{lem: Product traces - consecutive B-bands},
the indices of the relevant spectral bands satisfy a similar relation,
\begin{align}
\ind(B^{1}_{n+1}) & =\ind(B^{1}_{n})+\ind(J_{k+1})+1,\label{eq: index-relation-lem-2}
\end{align}
which can be either verified using the spectral bands hierarchical
structure as given in Proposition~\ref{prop: A-B-types-from-algebra-paper}
and Proposition~\ref{prop: Tower Property} , or alternatively this
relation can also be found in \cite[Eq.~(5.13)]{BaBeLo_spec26} (by
substituting $\co=[\co(k+1),n]$). With (\ref{eq: index-relation-lem-2}),
we may apply Lemma\,\ref{lem: admissibility-criterion} to finish
the proof. To see the correspondence with Lemma~\ref{lem: admissibility-criterion}
we may use the notation $\cop:=[\co(k+1),n]$, and $\tilde{k}:=k+1$
for which $\varphi(\co(k+1))=\varphi(\tilde{\co}(\tilde{k}))$, $\varphi([\co(k+1),n])=\varphi(\tilde{\co}(\tilde{k}+1)))$
and $\varphi([\co(k+1),n+1])=\varphi([\tilde{\co}(\tilde{k}+1),1])$;
the continued fractions at the right hand sides of these relations
correspond to those which appear in Lemma~\ref{lem: admissibility-criterion}.
\end{proof}

\subsection{Spectral band positions and trace identities\label{subsec: trace-monotonicity}}

We develop here useful identities on the relative positions of the
spectral bands and on the corresponding trace values $\tc$ - these
appear in Lemma~\ref{lem: monotonicity of t_=00007Bc.m.n=00007D}.

As in the previous subsection, we continue using here the abbreviated
notation $\tc(E(V))$ or $\tc(E)$ if $E:(0,\infty)\rightarrow\R$
is a $V$-dependent map. Adopting this notation, for any $\co\in\Co$
and $\Ic$ a spectral band of $\sigc$, we have that $\left.\tc\right|_{\Ic}:\Ic\to[-2,2]$
is strictly monotone, continuous and onto (see Proposition~\ref{prop: traceMaps}).
This  implies in particular that $\tc$ vanishes exactly once on $\Ic$.
Explicitly, for each $V\in(0,\infty)$, there is a unique $\Cen(\Ic(V))\in\R$
such that $\tc\left(\Cen\left(\Ic(V)\right)\right)=0$. We consider
$\Cen(\Ic)$ as a map $(0,\infty)\ni V\mapsto\Cen(\Ic(V))$, and call
it the \emph{Zentrum} of the spectral band $\Ic$. Note though that
$\Cen(\Ic)$ is only the center in terms of the image of $\tc$, and
not necessarily equal to $\frac{1}{2}(R(\Ic)-L(\Ic))$. Further observe
that $(0,\infty)\ni V\mapsto\Cen(\Ic(V))$ is continuous by construction.

The next lemma is accompanied by Figure~\ref{Fig: sign trace and central points}.
\begin{lem}
\label{lem: monotonicity of t_=00007Bc.m.n=00007D} Let $\alpha\in[0,1]\setminus\Q$
with infinite continued fraction expansion $(0,c_{1},c_{2},\ldots)$
and $k\in\N$. Consider a spectral band $I_{k}$ in $\sigma_{\co(k)}$
of type $B$. For all $n\in\N$, let $B^{1}_{n}$ be the left-most
type $B$ spectral band in $\sigma_{[\co(k+1),n]}$ which is strictly
contained in $I_{k}$ (as was introduced in Proposition~\ref{prop: A-B-types-from-algebra-paper}\,(\ref{enu: forward-B})).
Furthermore, let $J_{k+1}$ in $\sigma_{\co(k+1)}$ be the rightmost
spectral band of $\sigma_{\co(k+1)}$ for which $J_{k+1}\prec I_{k}$.
If $E(V):=R(J_{k+1}(V))\in I_{k}(V)$ for $V>0$, then the following
statements hold.
\begin{enumerate}
\item \label{enu: lem-monotonicity of t_=00007Bc.m.n=00007D - sign t_c}
We have $E(V)<\Cen(I_{k}(V))$ and $\sgn\left(t_{\co(k)}(E(V))\right)=\sgn\left(t_{\co(k)}\left(L(I_{k}(V))\right)\right)$.
\item \label{enu: lem-monotonicity of t_=00007Bc.m.n=00007D - sign t_=00005Bc.m.n=00005D}
For all $n\in\N$, we have $E(V)<\Cen\left(B^{1}_{n}(V)\right)$ and
\[
\sgn\left(t_{[\co(k+1),n]}(E(V))\right)=\sgn\left(t_{[\co(k+1),n]}\left(L(B^{1}_{n}(V))\right)\right).
\]
\item \label{enu: lem-monotonicity of t_=00007Bc.m.n=00007D - monotonicity t_=00005Bc.m.n=00005D}
If $E(V)\in B^{1}_{1}(V)$, then for all $n\in\Nz$, 
\begin{equation}
\left|t_{[\co(k+1),n+1]}(E(V)))\right|>\left|t_{[\co(k+1),n]}(E(V)))\right|>\ldots>\left|t_{[\co(k+1),0]}(E(V)))\right|>0.\label{eq: lem - monotonicity of t_=00007Bc.m.n=00007D-1-1}
\end{equation}
\item \label{enu: lem-monotonicity of t_=00007Bc.m.n=00007D - sign of product}
If $E(V)\in B^{1}_{1}(V)$, then for all $n\in\N$, 
\[
\sgn\left(t_{\co(k+1)}(E(V))t_{[\co(k+1),n-1]}(E(V))t_{[\co(k+1),n]}(E(V))\right)=+1.
\]
\end{enumerate}
\end{lem}

\begin{proof}
Let $V>0$ be such that $E(V)\in I_{k}(V)$. For brevity, we remove
the $V$ dependence from the notations unless we want to emphasize
its dependence. For convenience of the reader, a sketch of the involved
spectral bands is provided in Figure~\ref{Fig: sign trace and central points}.
The order in which we prove the sections of the lemma is (\ref{enu: lem-monotonicity of t_=00007Bc.m.n=00007D - sign t_c}),
(\ref{enu: lem-monotonicity of t_=00007Bc.m.n=00007D - monotonicity t_=00005Bc.m.n=00005D}),
(\ref{enu: lem-monotonicity of t_=00007Bc.m.n=00007D - sign t_=00005Bc.m.n=00005D})
and (\ref{enu: lem-monotonicity of t_=00007Bc.m.n=00007D - sign of product}).
\begin{figure}[hbt]
\includegraphics[scale=1.15]{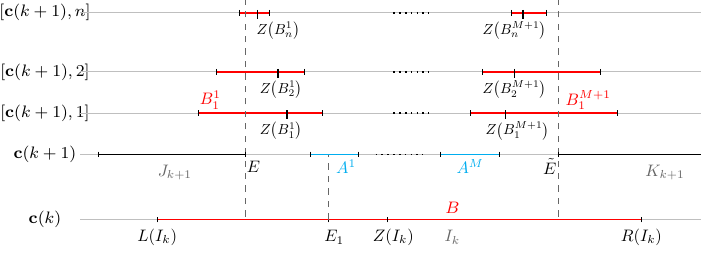}\caption{A sketch of the situation in Lemma~\ref{lem: monotonicity of t_=00007Bc.m.n=00007D}.\label{Fig: sign trace and central points}}
\end{figure}

(\ref{enu: lem-monotonicity of t_=00007Bc.m.n=00007D - sign t_c})
Since $I_{k}$ is of type $B$, it is proven in \cite[Proof of Lem.~3.3]{Raym95}
(see also \cite[Cor.~4.15]{BaBeBiTh22}), that there exists a unique
value $E_{1}\in I_{k}$ such that 
\begin{equation}
t_{\co(k)}(E_{1}):=2\cos\left(\frac{\pi}{c_{k+1}+1}\right)\sgn(t_{\co(k)}(L(I_{k})))\label{eq: proof of monotonicity - trace value in I_cm-1-1}
\end{equation}
and $E_{1}\in A^{1}$, where $A^{1}$ is the left-most spectral band
of type $A$ which is contained in $I_{k}$, see Proposition~\ref{prop: A-B-types-from-algebra-paper}\,(\ref{enu: forward-A}).
Note that it is essential to use here that $I_{k}$ is of type $B$
for the existence of such an $\tA$ band. Otherwise, if $I_{k}$ were
of type $A$ and $c_{k+1}=1$, then $I_{k}$ would not contain such
a spectral band $A^{1}$ and the argument would have failed; see counter-example
in Example~\ref{exa: CounterExample - passing middle point}.

Since $A^{1}\strict I_{k}$ and $J_{k+1}\prec I_{k}$ we get that
$J_{k+1}(V)\prec A^{1}(V)$ and $J_{k+1}(V)\cap A^{1}(V)=\emptyset$
since both spectral bands are in $\sigma_{\co(k+1)}$. This, together
with the notation $E:=R(J_{k+1})$, implies $E<E_{1}$. Recalling
that $\left.t_{\co(k)}\right|_{I_{k}}$ is strictly monotone and that
$t_{\co(k)}(\Cen(I_{k}))=0$, we get from (\ref{eq: proof of monotonicity - trace value in I_cm-1-1})
that if $c_{k+1}=1$, then $E_{1}=\Cen(I_{k})$ and if $c_{k+1}>1$,
then $E_{1}<\Cen(I_{k})$. In either of these cases, we conclude $E<\Cen(I_{k})$.
This estimate immediately implies $\sgn\left(t_{\co(k)}(E(V))\right)=\sgn\left(t_{\co(k)}\left(L(I_{k}(V))\right)\right)$
since $\left.t_{\co(k)}\right|_{I_{k}}$ is strictly monotone and
continuous and $E(V)\in I_{k}(V)$.

(\ref{enu: lem-monotonicity of t_=00007Bc.m.n=00007D - monotonicity t_=00005Bc.m.n=00005D})
The statement is proven by induction over $n\in\Nz$ simultaneously
for all $V>0$ satisfying $E(V)\in B^{1}_{1}(V)$. By (\ref{enu: lem-monotonicity of t_=00007Bc.m.n=00007D - sign t_c}),
we have $\left|t_{\co(k)}(E(V)))\right|>0$. Since $t_{[\co(k+1),0]}=t_{\co(k)}$
(see Lemma~\ref{lem: Equal-co-values}), the induction base is $\left|t_{[\co(k+1),1]}(E(V)))\right|>\left|t_{\co(k)}(E(V)))\right|$.
To show it we simplify notation, defining
\[
x(V):=t_{\co(k)}(E(V)),\quad y(V):=t_{\co(k+1)}(E(V))\quad\textrm{and}\quad z(V):=t_{[\co(k+1),1]}(E(V)),
\]
so that we aim to prove $\left|z(V)\right|>\left|x(V)\right|$. By
Proposition~\ref{prop: traceMaps}~(\ref{enu:.Prop-traceMaps-3 intersection property}),
we have $\sigma_{\co(k)}(V)\cap\sigma_{\co(k+1)}(V)\cap\sigma_{[\co(k+1),1]}(V)=\emptyset$
for $V>4$ implying in particular that $J_{k+1}(V)\cap I_{k}(V)=\emptyset$
and therefore $J_{k+1}(V)\prec I_{k}(V)$ for $V>4$. Using the continuity
of the spectral band edges (see e.g., (\ref{eq: Lipschitz_spectral_bands})),
we get that there is a $V_{1}>0$ such that $E(V_{1})=L(I_{k}(V_{1}))$
and for $V>V_{1}$, we have $E(V)<L(I_{k}(V))$ (and in particular
$E(V)\notin B^{1}_{1}(V)$). Since all spectral bands are either of
backward type $A$ or $B$ (strict inclusions), we conclude $E(V_{1})\not\in\sigma_{[\co(k+1),1]}(V_{1})$
implying
\begin{equation}
|z(V_{1})|=\left|t_{[\co(k+1),1]}(E(V_{1})))\right|>2=\left|t_{\co(k)}(E(V_{1}))\right|=|x(V_{1})|.\label{eq: proof of monotonicity - V_1 trace estimate-1}
\end{equation}

Recall that we aim to prove $|z(V)|>|x(V)|$ if $E(V)\in B^{1}_{1}(V)$
for $V>0$. Assume by contradiction that this is not true. Thus, by
continuity of these maps in $V$ and by (\ref{eq: proof of monotonicity - V_1 trace estimate-1}),
there is a $V_{0}>0$ satisfying 
\begin{equation}
E(V_{0})\in B^{1}_{1}(V_{0}),\quad|z(V_{0})|=|x(V_{0})|\quad\textrm{and}\quad|z(V)|>|x(V)|\quad\textrm{for }V>V_{0}.\label{eq: proof of monotonicity - V_0 assumptions-1}
\end{equation}
We will show that $V_{0}=0$ contradicting $V_{0}>0$. Since $E(V_{0})\in B^{1}_{1}(V_{0})\strict I_{k}(V_{0})$,
there is an $\varepsilon>0$ such that $E(V)\in I_{k}(V)$ for $V_{0}<V<V_{0}+\varepsilon$.
Since in this case $E(V)<\Cen(I_{k}(V))$ holds by (\ref{enu: lem-monotonicity of t_=00007Bc.m.n=00007D - sign t_c}),
we conclude $|x(V)|>0$ for $V_{0}<V<V_{0}+\varepsilon$. By the choice
of $V_{0}$, we have $|z(V)|>|x(V)|>0$ for $V_{0}<V<V_{0}+\varepsilon$
implying
\[
\sgn(z(V))=\sgn\left(t_{[\co(k+1),1]}\left(L(B^{1}_{1}(V))\right)\right).
\]
Hence, (\ref{enu: lem-monotonicity of t_=00007Bc.m.n=00007D - sign t_c})
together with Lemma~\ref{lem: Product traces - consecutive B-bands}
imply $\sgn\left(x(V)y(V)z(V)\right)=+1$. Moreover, $|y(V)|=\left|t_{\co(k+1)}(E(V))\right|=2$
follows from $E(V)=R\left(J_{k+1}(V)\right)$, by Proposition~\ref{prop: traceMaps}~(\ref{enu:.Prop-traceMaps-band edges}).
Combining these observations with the Fricke--Vogt invariant (see
Proposition~\ref{prop: traceMaps}~(\ref{enu:.Prop-traceMaps-Fricke-Vogt})),
we conclude
\begin{align*}
4+V^{2}=x(V)^{2}+y(V)^{2}+z(V)^{2}-x(V)y(V)z(V) & =x(V)^{2}+4+z(V)^{2}-2|x(V)z(V)|\\
 & =4+\left(|x(V)|-|z(V)|\right)^{2},
\end{align*}
for $V_{0}<V<V_{0}+\varepsilon$. If $V\searrow V_{0}$, then (\ref{eq: proof of monotonicity - V_0 assumptions-1})
leads to
\[
|V_{0}|^{2}=\lim_{V\searrow V_{0}}|V|^{2}=\lim_{V\searrow V_{0}}\left(|x(V)|-|z(V)|\right)^{2}=0,
\]
contradicting $V_{0}>0$. This proves the induction base.

For the induction step, suppose that $E:=R(J_{k+1})\in B^{1}_{1}$
and $\left|t_{[\co(k+1),n]}(E)\right|>\left|t_{[\co(k+1),n-1]}(E)\right|$
holds for some $n\geq1$. Using the recursive trace relation in Proposition~\ref{prop: traceMaps}~(\ref{enu:.Prop-traceMaps-Recursive Relation}),
we conclude
\begin{align*}
\left|t_{[\co(k+1),n+1]}(E)\right| & =\left|t_{\co(k+1)}(E)t_{[\co(k+1),n]}(E)-t_{[\co(k+1),n-1]}(E)\right|\\
 & \geq2\left|t_{[\co(k+1),n]}(E)\right|-\left|t_{[\co(k+1),n-1]}(E)\right|>\left|t_{[\co(k+1),n]}(E)\right|,
\end{align*}
where we used that $\left|t_{\co(k+1)}(E)\right|=2$ (since $E=R(J_{k+1})$)
and the induction assumption.

(\ref{enu: lem-monotonicity of t_=00007Bc.m.n=00007D - sign t_=00005Bc.m.n=00005D})
Like in (\ref{enu: lem-monotonicity of t_=00007Bc.m.n=00007D - sign t_c}),
it suffices to prove $E(V)<\Cen\left(B^{1}_{n}(V)\right)$. The statement
of the signs of the traces follows then directly. By \cite[Prop.~3.1 (iii)]{Raym95}
(see also \cite[Prop.~4.7]{BaBeBiTh22}), we have $\sigma_{\co(k)}(V)\cap\sigma_{\co(k+1)}(V)\cap\sigma_{[\co(k+1),1]}(V)=\emptyset$
for $V>4$ implying $J_{k+1}(V)\prec B^{1}_{1}(V)$ and $J_{k+1}(V)\cap B^{1}_{1}(V)=\emptyset$
for $V>4$. By the tower property $B^{1}_{n+1}\strict B^{1}_{n}$
(Proposition~\ref{prop: Tower Property}), we conclude
\[
E(V)<L\left(B^{1}_{1}(V)\right)<L\left(B^{1}_{n}(V)\right)<\Cen\left(B^{1}_{n}(V)\right),\qquad V>4.
\]
It is left to show that $E(V)<\Cen\left(B^{1}_{n}(V)\right)$ holds
for all $V>0$. By continuity we need to show that there exist no
$V'$ such that $E(V')=\Cen\left(B^{1}_{n}(V')\right)$. Assuming
by contradiction that there exists such $V'>0$ we have $E(V')\in B^{1}_{n}(V')$
and then $E(V')\in B^{1}_{1}(V')$ follows as $B^{1}_{n}\strict B^{1}_{1}$
holds by the tower property (Proposition~\ref{prop: Tower Property}).
By (\ref{enu: lem-monotonicity of t_=00007Bc.m.n=00007D - monotonicity t_=00005Bc.m.n=00005D})
and $E(V')\in B^{1}_{1}(V')$ we conclude $t_{[\co(k+1),n]}(E(V'))>0$
which contradicts $E(V')=\Cen\left(B^{1}_{n}(V')\right)$.

(\ref{enu: lem-monotonicity of t_=00007Bc.m.n=00007D - sign of product})
The case $n=1$ follows from (\ref{enu: lem-monotonicity of t_=00007Bc.m.n=00007D - sign t_c}),
(\ref{enu: lem-monotonicity of t_=00007Bc.m.n=00007D - sign t_=00005Bc.m.n=00005D})
and Lemma~\ref{lem: Product traces - consecutive B-bands}. The case
$n>1$ follows from (\ref{enu: lem-monotonicity of t_=00007Bc.m.n=00007D - sign t_=00005Bc.m.n=00005D})
and Lemma~\ref{lem: Product traces - along the tower}.
\end{proof}

We point out that the statement in Lemma~\ref{lem: monotonicity of t_=00007Bc.m.n=00007D}~(\ref{enu: lem-monotonicity of t_=00007Bc.m.n=00007D - sign t_c})
fails if $I_{k}$ is not of type $B$ and henceforth also the consecutive
statements are not necessarily true anymore, see Example~\ref{exa: CounterExample - passing middle point}
and the explanation in the beginning of the proof. Nevertheless, if
$c_{k+1}>1$, the statement extends verbatim to the case where $I_{k}$
is of type $A$, as is also explained within the proof (but we do
not need to apply this in the present work).

\section{Proof of the injectivity of $\protect\emap$ (in Theorem~\ref{thm: paths and spectra}
(\ref{enu: thm-paths and spectra - bijection}))\label{sec: Injectivity-Proof}}

In this section, we prove that the map $\emap(~\cdot~;V):\partial\tree\rightarrow\sigma(\Ham)$
is injective.
\begin{lem}
[trace estimates] \label{lem: estimating t_=00007Bc.m.n=00007D}
Let $\alpha\in[0,1]\setminus\Q$ with infinite continued fraction
expansion $(0,c_{1},c_{2},\ldots)$ and $k\in\N$. Consider a spectral
band $I$ in $\sigk$ of type $B$ with the unique associated spectral
band $B^{1}$ in $\sigma_{k+2}$ of type $B$ as in Proposition~\ref{prop: A-B-types-from-algebra-paper}.
Furthermore, let $J$ in $\sigkp$ be the rightmost spectral band
of $\sigkp$ for which $J\prec I$. 
\begin{enumerate}
\item \label{enu: lem-estimating t_=00007Bc.m.n=00007D-1}If $E(V):=R(J(V))\in B^{1}(V)$
for $V>0$, then 
\[
\left|t_{\co(k+2)}(E(V))\right|-\left|t_{\co(k)}(E(V))\right|=c_{k+2}V.
\]
\item \label{enu: lem-estimating t_=00007Bc.m.n=00007D-2}Let $\widetilde{J}$
be the spectral band in $\sigma_{k+3}$ such that $\widetilde{J}\strict J$,
and $\widetilde{J}$ is the rightmost band of $\sigma_{k+3}$ for
which $\widetilde{J}\prec B^{1}$. If $E(V):=R(J(V))\in B^{1}(V)$
and $\widetilde{E}(V):=R(\widetilde{J}(V))\in B^{1}(V)$ for $V>0$,
then 
\[
\left|t_{\co(k+2)}(\widetilde{E}(V))\right|>\left|t_{\co(k+2)}(E(V))\right|.
\]
\end{enumerate}
\end{lem}

\begin{proof}
By the tower property (Proposition~\ref{prop: Tower Property}),
$B^{1}$ may be written as $B^{1}_{n}$ with $n=c_{k+2}$ and we have
$E(V)\in B^{1}(V)\strict B^{1}_{1}(V)$ and so Lemma~\ref{lem: monotonicity of t_=00007Bc.m.n=00007D}~(\ref{enu: lem-monotonicity of t_=00007Bc.m.n=00007D - sign of product})
asserts that the product of the following traces is positive. Combining
this with the Fricke--Vogt invariant (Proposition~\ref{prop: traceMaps}~(b))
and $\left|t_{\co(k+1)}(E(V))\right|=2$, we conclude for $n\in\N$,
\begin{align*}
4+V^{2}= & \left(t_{\co(k+1)}(E(V))\right)^{2}+\left(t_{[\co(k+1),n-1]}(E(V))\right)^{2}+\left(t_{[\co(k+1),n]}(E(V))\right)^{2}\\
 & \quad\quad-t_{\co(k+1)}(E(V))t_{[\co(k+1),n-1]}(E(V))t_{[\co(k+1),n]}(E(V))\\
= & 4+\left(t_{[\co(k+1),n-1]}(E(V))\right)^{2}+\left(t_{[\co(k+1),n]}(E(V))\right)^{2}\\
 & \quad\quad-2\left|t_{[\co(k+1),n-1]}(E(V))t_{[\co(k+1),n]}(E(V))\right|\\
= & 4+\left(\left|t_{[\co(k+1),n]}(E(V))\right|-\left|t_{[\co(k+1),n-1]}(E(V))\right|\right)^{2}.
\end{align*}
Hence, Lemma~\ref{lem: monotonicity of t_=00007Bc.m.n=00007D}~(\ref{enu: lem-monotonicity of t_=00007Bc.m.n=00007D - monotonicity t_=00005Bc.m.n=00005D})
implies
\[
\left|t_{[\co(k+1),n]}(E(V))\right|-\left|t_{[\co(k+1),n-1]}(E(V))\right|=V.
\]
Summing the above for $n$ ranging from $1$ to $c_{k+2}$, then a
telescoping sum argument and $t_{[\co(k+1),0]}=t_{\co(k)}$ (Proposition~\ref{prop: traceMaps})
finishes the proof of the first part of the lemma.

To prove the second part of the lemma, let $\widetilde{E}(V):=R(\widetilde{J}(V))$
and note that $\widetilde{J}(V)\strict J_{k+1}(V)$ implies $\widetilde{E}(V)<E(V)$
for all $V>0$. Recalling that $\Cen(I)$ is the Zentrum of the interval
$I\subseteq\sigma_{k}$ (see Subsection~\ref{subsec: trace-monotonicity}),
Lemma~\ref{lem: monotonicity of t_=00007Bc.m.n=00007D}~(\ref{enu: lem-monotonicity of t_=00007Bc.m.n=00007D - sign t_c})
implies $E(V)<\Cen(B^{1}(V))$ and so $\widetilde{E}(V)<E(V)<\Cen(B^{1}(V))$.
Therefore we conclude that both $E(V)$ and $\widetilde{E}(V)$ are
contained in the ``left'' part of $B^{1}$, i.e., $E(V),\widetilde{E}(V)\in\left[L(B^{1}(V)),Z(B^{1}(V))\right]$.
This, together with $\widetilde{E}(V)<E(V)$ and with the strict monotonicity
of the trace $t_{\co(k+2)}$ on the spectral band $B^{1}$ yields
$\left|t_{\co(k+2)}(\widetilde{E}(V))\right|>\left|t_{\co(k+2)}(E(V))\right|.$
\end{proof}

\begin{rem}
\label{rem: estimating t_=00007Bc.m.n=00007D-1}One can prove the
symmetric cases of Lemma~\ref{lem: monotonicity of t_=00007Bc.m.n=00007D}
and Lemma~\ref{lem: estimating t_=00007Bc.m.n=00007D}: Let $I$
be a spectral band in $\sigma_{k}$ of type $\tB$ with the unique
associated spectral band $B^{M+1}_{1}$ as in Proposition~\ref{prop: Tower Property}.
Furthermore, let $K_{k+1}(V)$ in $\sigma_{\co(k+1)}(V)$ be the leftmost
spectral band of $\sigma_{\co(k+1)}(V)$ for which $I(V)\prec K_{k+1}(V)$
see an illustration in Figure~\ref{Fig: sign trace and central points}.

Set $E(V)=L(K_{k+1}(V))$. Following Lemma~\ref{lem: monotonicity of t_=00007Bc.m.n=00007D},
one can prove for all $n\in\N$,
\[
E(V)\in B^{M+1}_{1}(V)\quad\Rightarrow\quad\sgn\left(t_{\co(k+1)}(E(V))t_{[\co(k+1),n-1]}(E(V))t_{[\co(k+1),n]}(E(V))\right)=+1,
\]
and
\[
\left|t_{[\co(k+1),n]}(E(V)))\right|>\left|t_{[\co(k+1),n-1]}(E(V)))\right|>\ldots>\left|t_{[\co(k+1),1]}(E(V)))\right|>0.
\]

Let $\widetilde{K}(V)$ be the spectral band in $\sigma_{\co(k+3)}(V)$
such that $\widetilde{K}(V)\strict K_{k+1}(V)$, and $\widetilde{K}(V)$
is the leftmost band of $\sigma_{\co(k+3)}(V)$ for which $B^{M+1}_{1}(V)\prec\widetilde{K}(V)$.
With this, straightforward modifications of the proof of Lemma~\ref{lem: estimating t_=00007Bc.m.n=00007D}
lead to
\[
\left|t_{\co(k+2)}(E(V))\right|-\left|t_{\co(k)}(E(V))\right|=c_{k+2}V
\]
and 
\[
\widetilde{E}(V):=L(\widetilde{K}(V))\in B^{M+1}_{1}(V)\quad\Rightarrow\quad\left|t_{\co(k+2)}(\widetilde{E}(V))\right|>\left|t_{\co(k+2)}(E(V))\right|.
\]
These results can be used to improve the bound (\ref{eq: E_a injective - estimate coupling})
on the coupling constant $V$ which is used in the proof of Lemma~\ref{lem: map boundary-spectrum is injective}
(and may be of independent interest), see also Remark~\ref{rem: Estimate Coupling - B-band overlap}.
\end{rem}

\begin{example}
\label{exa: CounterExample - passing middle point} A spectral band
edge may pass the Zentrum of an adjacent spectral band one level below,
if this band is not of type $B$. Let $\co=[0,0]$ and $I_{k}(V)=[-2,2]$
in $\sigc(V)$. Then $I_{k}(V)$ is of type $A$ and the spectral
band $K(V)=[-2+V,2+V]$ in $\sigma_{[0,0,1]}(V)$ is of type $\tB$.
Since $t_{\co}(E,V)=E$, the Zentrum satisfies $\Cen(I_{k}(V))=0$.
Denoting $\tilde{E}(V):=L(K(V))=-2+V$, we have that $\tilde{E}(V)>\Cen(I_{k}(V))$
if $V>2$ and $\tilde{E}(V)<\Cen(I_{k}(V))$ if $V<2$. In particular,
$\sgn(R(I_{k}(V))\neq\sgn(\tilde{E}(V))$ whenever $V<2$. 
\end{example}

We finally prove the injectivity of the map $E_{\alpha}$.

\begin{lem}
\label{lem: map boundary-spectrum is injective} [the injectivity part of Theorem \ref{thm: paths and spectra}~(\ref{enu: thm-paths and spectra - bijection})]
Let $\alpha\in[0,1]\setminus\Q$ and $V>0$. Then the map $\emap(~\cdot~;V):\partial\tree\rightarrow\sigma(\Ham)$
is injective.
\end{lem}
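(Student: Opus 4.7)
Suppose for contradiction that there exist distinct $\gamma_1,\gamma_2\in\partial\tree$ with $\emap(\gamma_1;V)=\emap(\gamma_2;V)=E$; without loss of generality $\gamma_1\prec\gamma_2$, and let $k\geq 1$ be the smallest index at which they differ. Then $u_{k-1}=w_{k-1}$, while $u_k\prec w_k$ are two distinct children of $u_{k-1}$ in $\tree$, and $E\in\Psi(u_k)(V)\cap\Psi(w_k)(V)$.

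The key preliminary observation is that, combining the bijectivity of $\Psi$ (Proposition~\ref{prop: vertices are spectral bands}) with the disjointness of the connected components of $\sigma_{\alpha_j}(V)$ (Proposition~\ref{prop: Basic spectral prop periodic}), the paths $\gamma_1$ and $\gamma_2$ cannot both visit a common tree-level $j$ past the branching: the two distinct vertices they would have there (distinct because the unique root-to-vertex path in $\tree$ together with the assumption $k>$ that vertex's index would otherwise force them to agree up to and beyond index $k$) would be mapped by $\Psi$ to disjoint spectral bands of $\sigma_{\alpha_j}(V)$, which cannot both contain $E$. Hence the sets of tree-levels visited by $\gamma_1$ and by $\gamma_2$ past index $k$ are disjoint. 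A short induction on the possible step sizes ($+1$ for an $A$-child, $+2$ for a $B$-child) then forces both $\gamma_i$ to take exclusively $B$-children after the branching. In particular $u_k$ and $w_k$ sit in adjacent levels, so (up to the symmetric configuration handled via Remark~\ref{rem: estimating t_=00007Bc,m,n=00007D}) $u_k$ is $A$-labeled at level $l+1$ and $w_k$ is $B$-labeled at level $l+2$, where $l$ is the level of $u_{k-1}$, and both paths become $B$-towers past the branching.

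Writing $\co$ for the FCE of $u_{k-1}$ and $m=c_{l+1}$, $n=c_{l+2}$, the interlacing property \ref{enu: I property} forces $\Psi(u_k)=I^i_{[\co,m]}$ (of type $A$) and $\Psi(w_k)=I^{i+1}_{[\co,m,n]}$ (of type $B$) to be adjacent. Tracking the extremal positions of the two forced $B$-towers inside these two adjacent bands, I would show that $E$ is identified with the right endpoint $R(J_{[\co,m]}(V))$ of the $J$-band attached to an appropriate type-$B$ band of $\sigma_{\co}$ in the sense of Definition~\ref{def: left and right proceeded band}, and that $E\in I^1_{[\co,m,1]}(V)$ in the notation of Definition~\ref{def: Notion Icmi, Icmnj and M}. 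This places the hypotheses of Lemma~\ref{lem: estimating t_=00007Bc,m,n=00007D} at hand, yielding
\[
\bigl|t_{[\co,m,n']}(E,V)\bigr| \;=\; \bigl|t_{\co}(E,V)\bigr|+n'V,\qquad n'\in\N.
\]

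Since $V>0$, the right-hand side exceeds $2$ once $n'$ is large enough, so $E\notin\sigma_{[\co,m,n']}(V)$ by Definition~\ref{def: sigma_c}. To turn this into the desired contradiction, I would iterate the estimate along the $B$-tower of $\gamma_2$, re-applying Lemma~\ref{lem: estimating t_=00007Bc,m,n=00007D} to a shifted base $(\co'',m'')$ corresponding to each consecutive pair of tower levels; using that every continued-fraction coefficient satisfies $c_j\geq 1$, each iteration adds at least $V$ to the trace magnitude, so $|t_{\co_j}(E,V)|$ grows without bound along the levels visited by $\gamma_2$, contradicting the requirement $|t_{\co_j}(E,V)|\leq 2$ imposed by $E\in\sigma_{\alpha_j}(V)$. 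The principal obstacle, in my view, is precisely the identification of $E$ as a $J$-band edge lying inside the designated tower band $I^1_{[\co,m,1]}(V)$: for $V\leq 4$ the overlap geometry between adjacent bands becomes delicate, and the argument needs to combine the strict containments of Theorem~\ref{thm: Every band is A or B}, the Lipschitz continuity of band edges (Corollary~\ref{cor: Lipschitz spectral edges}), and the sign/monotonicity control along $B$-towers from Lemma~\ref{lem: monotonicity of t_=00007Bc,m,n=00007D}.
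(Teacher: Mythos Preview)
Your opening reduction is correct and in fact sharper than what the paper writes down explicitly: the disjoint-levels observation, together with the disjointness of spectral bands within a single $\sigma_{\alpha_j}(V)$, does force both paths to become $B$-towers past the branching, and the interlacing property \ref{enu: I property} does pin down the adjacency $\Psi(u_k)=I^{i}_{[\co,m]}$, $\Psi(w_k)=I^{i+1}_{[\co,m,n]}$. With a little more work along the same lines one can even show that $\gamma_1$ must take the \emph{rightmost} $B$-child at every step and $\gamma_2$ the \emph{leftmost} one, so that your $\gamma_1,\gamma_2$ coincide with the paper's auxiliary paths $\gamma_L,\gamma_R$.

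The genuine gap is the claim that ``$E$ is identified with the right endpoint $R(J_{[\co,m]}(V))$''. The point $E=E_\alpha(\gamma_1;V)=E_\alpha(\gamma_2;V)$ is an intersection of nested bands --- a generic point of the Cantor spectrum --- and there is no reason it should equal a band edge of any periodic approximant. This matters because Lemma~\ref{lem: estimating t_=00007Bc,m,n=00007D} (and the underlying Lemma~\ref{lem: monotonicity of t_=00007Bc,m,n=00007D}) have as standing hypothesis $E(V)=R(J_{[\co,m]}(V))$, equivalently $|t_{[\co,m]}(E,V)|=2$; without this the Fricke--Vogt computation that produces the increment $nV$ simply does not go through. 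The paper never applies these lemmas to $E$ itself. Instead it introduces the finite-level band edges $E_j:=R(J_j)$, where $J_j$ are the bands along $\gamma_L$ (identified, via the rightmost/leftmost choices, with the $J_{[\cdot,\cdot]}$ of Definition~\ref{def: left and right proceeded band} attached to the $\gamma_R$-bands $I_j$). If the two towers overlap through depth $m$ then each $E_j$ lies in $I_{j+1}$; the Zentrum inequalities of Lemma~\ref{lem: monotonicity of t_=00007Bc,m,n=00007D} then allow a telescoping of the increments from Lemma~\ref{lem: estimating t_=00007Bc,m,n=00007D} over $j=1,\dots,m$, yielding $\sum_{j=1}^{m}c_{k+2j}<2/|V|$ and hence a uniform bound on $m$. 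Your proposed iteration ``adds at least $V$ at each step'' is the right heuristic, but it must be run at the moving band-edge points $E_j$, not at the fixed limit point $E$.
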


\begin{proof}
Let $\gamma_{L}=(u^{L}_{0},u^{L}_{1},\ldots),\gamma_{R}=(w^{R}_{0},w^{R}_{1},\ldots)\in\partial\tree$
be different and without loss of generality assume $\gamma_{L}\preceq\gamma_{R}$.
We show that $\emap(\gamma_{L}~;V)\neq\emap(\gamma_{R}~;V)$. Note
that $u^{L}_{0}=w^{R}_{0}$ is the root of $\tree$. Since $\gamma_{1}\preceq\gamma_{2}$,
there is a $k_{0}\in\N_{0}$ such that $u^{L}_{j}=w^{R}_{j}$ for
$1\leq j\leq k_{0}$ and $u^{L}_{j}\prec w^{R}_{j}$ for $j>k_{0}$.

The proof is carried out in two steps. In step 1, we describe two
auxiliary paths $\gamma_{u}$ and $\gamma_{w}$ such that $\gamma_{L}\preceq\gamma_{u}\preceq\gamma_{w}\preceq\gamma_{R}$
and $\gamma_{u}\neq\gamma_{w}$. Hence, $E_{\alpha}(\gamma_{L};V)\leq E_{\alpha}(\gamma_{u};V)\leq E_{\alpha}(\gamma_{w};V)\leq E_{\alpha}(\gamma_{R};V)$
follows from Lemma~\ref{lem: order preserving E_=00005Calpha}. In
Step 2 we show that in fact $E_{\alpha}(\gamma_{u};V)\neq E_{\alpha}(\gamma_{w};V)$
finishing the proof.
\begin{figure}[hbt]
\includegraphics[scale=1.25]{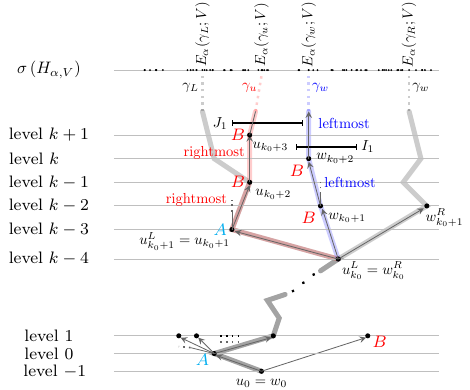}\caption{A sketch of the construction of the paths $\gamma_{u}$ and $\gamma_{w}$.}
\label{Fig: neighboring paths}
\end{figure}

\uline{Step 1:} The auxiliary paths $\gamma_{u}$ and $\gamma_{w}$
are recursively constructed, as described below (the main ideas are
sketched in Figure~\ref{Fig: neighboring paths}). Since $u^{L}_{k_{0}}=w^{R}_{k_{0}}$
and $u^{L}_{k_{0}+1}\prec w^{R}_{k_{0}+1}$, there is a unique vertex
$w_{k_{0}+1}$ satisfying
\begin{enumerate}[label=(\arabic*)]
\item  \label{enu: Proof of injective - choice first vertex gamma_L and gamma_R - 1}$u^{L}_{k_{0}+1}\prec w_{k_{0}+1}$
and there is an edge $u^{L}_{k_{0}}\to w_{k_{0}+1}$,
\item \label{enu: Proof of injective - choice first vertex gamma_L and gamma_R - 2}every
vertex $w$ satisfying \ref{enu: Proof of injective - choice first vertex gamma_L and gamma_R - 1}
fulfills either $w=w_{k_{0}+1}$ or $w_{k_{0}+1}\prec w$.
\end{enumerate}
Note that such a vertex exists since $w^{R}_{k_{0}+1}$ satisfies
\ref{enu: Proof of injective - choice first vertex gamma_L and gamma_R - 1}.
Define $u_{k_{0}+1}:=u^{L}_{k_{0}+1}$. By construction and the interlacing
property (Figure~\ref{Fig: TreeData}~(2)), we have
\begin{equation}
u_{k_{0}+1}\prec w_{k_{0}+1}\quad\textrm{and the vertices have different labels}.\label{eq: E_a injective - Property chosen vertex}
\end{equation}

Continue defining $\gamma_{u}$ as follows. For $j\in\N$, choose
$u_{k_{0}+j+1}$ to be the unique rightmost vertex such that there
is an edge $u_{k_{0}+j}\to u_{k_{0}+j+1}$, i.e. for any other vertex
$u$, for which there is an edge $u_{k_{0}+j}\to u$, we have $u\prec u_{k_{0}+j+1}$.
Then, $\gamma_{u}$ is defined by the path $(u^{L}_{0},\ldots,u^{L}_{k_{0}},u_{k_{0}+1},u_{k_{0}+2},\ldots)$.

Similarly, we define $\gamma_{w}$ but instead of choosing the rightmost
vertex, the leftmost is chosen. For $j\in\N$, $w_{k_{0}+j+1}$ is
the unique leftmost vertex such that there is an edge $w_{k_{0}+j}\to w_{k_{0}+j+1}$,
i.e. for any other vertex $w$, which admits an edge $w_{k_{0}+j}\to w$,
we have $w_{k_{0}+j+1}\prec w$. Then, $\gamma_{w}$ is defined by
the path $(w^{R}_{0},\ldots,w^{R}_{k_{0}},w_{k_{0}+1},w_{k_{0}+2},\ldots)$.

By construction, we get (as justified below) that for all $j\in\N$,
\begin{enumerate}
\item \label{enu: Proof of injective - prop 1}the vertices $u_{k_{0}+j+1}$
and $w_{k_{0}+j+1}$ are of type $B$,
\item \label{enu: Proof of injective - prop 2}the vertices $u_{k_{0}+j+1}$
and $w_{k_{0}+j+1}$ are in different but consecutive levels,
\item \label{enu: Proof of injective - prop 3}$u_{k_{0}+j+2}\prec w_{k_{0}+j+1}$
and there exists no other vertex $u$ with $u_{k_{0}+j+2}\prec u\prec w_{k_{0}+j+1}$.
\end{enumerate}
Statement (\ref{enu: Proof of injective - prop 1}) follows as the
leftmost (resp. rightmost) vertex $u$ connected to some vertex $w$
(except the root) is always labeled $B$ by definition of the branching,
see Figure~\ref{Fig: TreeData}~(2). By Equation~(\ref{eq: E_a injective - Property chosen vertex}),
the vertices $u_{k_{0}+1}$ and $w_{k_{0}+1}$ are in consecutive
levels. By (\ref{enu: Proof of injective - prop 1}), $u_{k_{0}+j+1}$
is two levels higher than $u_{k_{0}+j}$ and $w_{k_{0}+j+1}$ is two
levels higher than $w_{k_{0}+j}$. Thus, (\ref{enu: Proof of injective - prop 2})
follows inductively from (\ref{enu: Proof of injective - prop 1}).
Finally, (\ref{enu: Proof of injective - prop 3}) follows from construction
and the definition of the order.

From (\ref{eq: E_a injective - Property chosen vertex}) we get $\gamma_{u}\preceq\gamma_{w}$
and $\gamma_{u}\neq\gamma_{w}$. Since we choose for $\gamma_{u}$
the rightmost vertices (and for $\gamma_{w}$ the leftmost vertices)
in the construction, we have $\gamma_{L}\preceq\gamma_{u}$ and $\gamma_{w}\preceq\gamma_{R}$
(note that they can be equal). Thus, Lemma~\ref{lem: order preserving E_=00005Calpha}
implies
\[
E_{\alpha}(\gamma_{L};V)\leq E_{\alpha}(\gamma_{u};V)\leq E_{\alpha}(\gamma_{w};V)\leq E_{\alpha}(\gamma_{R};V).
\]

\uline{Step 2:} Let $V>0.$ We show $E_{\alpha}(\gamma_{u};V)\neq E_{\alpha}(\gamma_{w};V)$.
By definition of the map $E_{\alpha}$, it suffices to prove 
\begin{equation}
\left\{ E_{\alpha}(\gamma_{u};V)\right\} =\bigcap_{j\in\N_{0}}\left(\Psi(u_{k_{0}+1+j})\right)(V)\neq\bigcap_{j\in\N_{0}}\left(\Psi(w_{k_{0}+1+j})\right)(V)=\left\{ E_{\alpha}(\gamma_{w};V)\right\} .\label{eq: E_a injective - Intersection spectral bands - infinitely often}
\end{equation}
Let $(0,c_{1},c_{2},\ldots)$ be the continuous fraction expansion
of $\alpha$. Let $\ell\in\N$ be the level of the vertex $w_{k_{0}+2}$,
i.e. $\Psi\left(w_{k_{0}+2}\right)$ is a spectral band of $\sigma_{\ell}$.
We will prove that if for $m\in\N$, 
\begin{equation}
\left(\bigcap^{m}_{j=1}\left(\Psi(u_{k_{0}+j+1})\right)(V)\right)\bigcap\left(\bigcap^{m+1}_{j=1}\left(\Psi(w_{k_{0}+j+1})\right)(V)\right)\neq\emptyset\label{eq: E_a injective - Intersection spectral bands - m times}
\end{equation}
then
\begin{equation}
m\leq\sum^{m}_{j=1}c_{\ell+2j}<\frac{2}{V}.\label{eq: E_a injective - estimate coupling}
\end{equation}
Note that the first inequality is trivial since $c_{i}\geq1$ for
all $i\in\N$. After proving this statement, (\ref{eq: E_a injective - Intersection spectral bands - infinitely often})
implies (\ref{eq: E_a injective - Intersection spectral bands - m times})
for all $m\in\N$. Thus, $V>0$ implies that (\ref{eq: E_a injective - Intersection spectral bands - infinitely often})
is false by contraposition.  This part of the proof is based on Lemma~\ref{lem: estimating t_=00007Bc.m.n=00007D}.

Suppose (\ref{eq: E_a injective - Intersection spectral bands - m times})
holds for $m\in\N$. In order to simplify the notation, denote for
$j\in\N_{0}$, 
\[
J_{j}(V):=\left(\Psi(u_{k_{0}+j+2})\right)(V)\quad\textrm{and}\quad I_{j}(V):=\left(\Psi(w_{k_{0}+j+1})\right)(V),
\]
see Figure~\ref{Fig: Interlacing B bands}.

When using the notation above, we assume without loss of generality
that $u_{k_{0}+j+2}$ is one level higher than $w_{k_{0}+j+1}$ (those
levels are $\ell+2j-1$ and $\ell+2j-2$, correspondingly). Such a
case is depicted in Figure~\ref{Fig: neighboring paths}. It might
also be that $u_{k_{0}+j+2}$ is one level lower than $w_{k_{0}+j+1}$
(this would be the case if $u_{k_{0}+1}$ is of type $\tB$ and $w_{k_{0}+1}$
is of type $\tA$). Then we would adapt the notation above by writing
$J_{j-1}(V):=\left(\Psi(u_{k_{0}+j+2})\right)(V)$ and with no change
in the notation of $I_{j}(V)$. In either of these two cases, we get
that (\ref{enu: Proof of injective - prop 1}), (\ref{enu: Proof of injective - prop 2})
and (\ref{enu: Proof of injective - prop 3}) in step $1$ inductively
imply (using Theorem~\ref{thm: vertices are spectral bands}) that
for $j\in\N$,
\begin{itemize}
\item $I_{j}$ is a spectral band of $\sigma_{\co(\ell+2j-2)}$ and $I_{j+1}\strict I_{j}$,
\item $J_{j}$ is a spectral band of $\sigma_{\co(\ell+2j-1)}$ satisfying
$J_{j+1}\strict J_{j}$,
\item $J_{j}$ is the rightmost band $\sigma_{\co(\ell+2j-2)}$ satisfying
$J_{j}\prec I_{j}$,
\item $I_{j+1}$ equals $B^{1}$ where $B^{1}$ is the spectral band in
$\sigma_{\co(\ell+2j)}$ of type $B$ associated with $I_{j}$ as
in Proposition~\ref{prop: A-B-types-from-algebra-paper}.
\end{itemize}
\begin{figure}[hbt]
\includegraphics[scale=0.95]{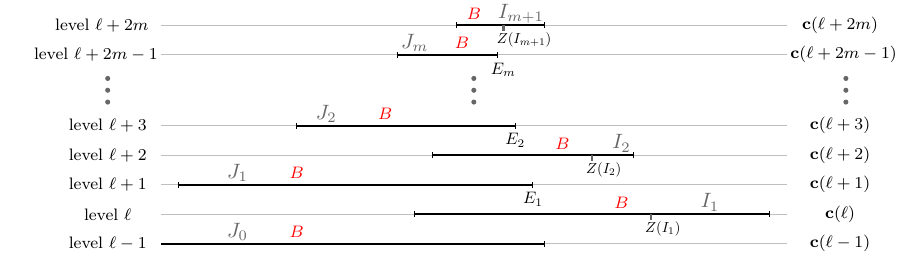}\caption{A sketch of the spectral bands $I_{1},\ldots,I_{m+1}$ and $J_{0},\ldots,J_{m}$
introduced in the proof of Lemma~\ref{lem: map boundary-spectrum is injective}.}
\label{Fig: Interlacing B bands}
\end{figure}

Set $E_{j}:=R(J_{j})$ to be the right edge of $J_{j}$. These properties
are sketched in Figure~\ref{Fig: Interlacing B bands} and allow
us to apply Lemma~\ref{lem: estimating t_=00007Bc.m.n=00007D}. Suppose
(\ref{eq: E_a injective - Intersection spectral bands - m times})
holds for $m\in\N$ and let $1\leq j\leq m$. In the notation of that
lemma, we set $J(V):=J_{j}(V)$, $\widetilde{J}(V):=J_{j+1}(V)$,
$I(V):=I_{j}(V)$ and $B^{1}(V):=I_{j+1}(V)$. Then (\ref{eq: E_a injective - Intersection spectral bands - m times})
implies $E_{j}(V)=R\left(J(V)\right)\in B^{1}(V)$ and $E_{j+1}(V)=R\left(\widetilde{J}(V)\right)\in B^{1}(V)$.
Thus, Lemma~\ref{lem: estimating t_=00007Bc.m.n=00007D} implies
\begin{equation}
c_{\ell+2j}V=\left|t_{\co(\ell+2j)}(E_{j}(V))\right|-\left|t_{\co(\ell+2j-2)}(E_{j}(V))\right|,\qquad1\leq j\leq m,\label{eq: E_a injective - difference trace =00003D c_k |V|}
\end{equation}

and
\begin{equation}
\left|t_{\co(\ell+2j)}(E_{j+1}(V))\right|>\left|t_{\co(\ell+2j)}(E_{j}(V))\right|\quad1\leq j\leq m-1.\label{eq: E_a injective - monotonicity trace E_j}
\end{equation}
Summing Equation~(\ref{eq: E_a injective - difference trace =00003D c_k |V|})
for all $1\leq j\leq m$ and reordering the summands leads to
\begin{align*}
\sum^{m}_{j=1}c_{\ell+2j}V & =\sum^{m}_{j=1}\left|t_{\co(\ell+2j)}(E_{j}(V))\right|-\left|t_{\co(\ell+2j-2)}(E_{j}(V))\right|\\
 & =\left|t_{\co(\ell+2m)}(E_{m}(V))\right|-\underbrace{\left|t_{\co(\ell)}(E_{1}(V))\right|}_{\geq0}+\sum^{m-1}_{j=1}\underbrace{\left|t_{\co(\ell+2j)}(E_{j}(V))\right|-\left|t_{\co(\ell+2j)}(E_{j+1}(V))\right|}_{<0\textrm{ by Equation~(\ref{eq: E_a injective - monotonicity trace E_j})}}\\
 & <\left|t_{\co(\ell+2m)}(E_{m}(V))\right|.
\end{align*}
Since Equation~(\ref{eq: E_a injective - Intersection spectral bands - m times})
holds for $m\in\N$, we conclude $E_{m}(V)\in I_{m+1}(V)\subseteq\sigma_{\ell+2m}(V)$
and use it in the inequality above to get 
\[
\sum^{m}_{j=1}c_{\ell+2j}V<\left|t_{\co(\ell+2m)}(E_{m}(V))\right|\leq2.
\]
This proves that (\ref{eq: E_a injective - Intersection spectral bands - m times})
implies (\ref{eq: E_a injective - estimate coupling}).
\end{proof}

\begin{rem}
\label{rem: map boundary-spectrum is injective}Note that if $V>4$,
then proving the injectivity of $\emap(\cdot;V)$ is substantially
shorter. Specifically, Proposition~\ref{prop: traceMaps}\ (\ref{enu:.Prop-traceMaps-3 intersection property})
asserts that if $V>4$ then $\sigc(V)\cap\sigcm(V)\cap\sigma_{[\co,m-1]}(V)=\emptyset$.
Together with the tower property (Proposition~\ref{prop: Tower Property}),
one can deduce that if $V>4$ and $u,w\in\tree$ are two vertices
which are not connected by a directed path then $\Psi(u)(V)\cap\Psi(w)(V)=\emptyset$
(see also \cite[Lem.~5.12]{BaBeBiTh22}). Now, (\ref{eq: E_a injective - Intersection spectral bands - infinitely often})
in the proof above follows immediately if $V>4$.
\end{rem}

\begin{rem}
\label{rem: Estimate Coupling - B-band overlap}We observe that the
upper bound in (\ref{eq: E_a injective - estimate coupling}) in the
proof of Lemma~\ref{lem: map boundary-spectrum is injective} may
be improved, using Remark~\ref{rem: estimating t_=00007Bc.m.n=00007D-1}.
Specifically, it can be shown that if (\ref{eq: E_a injective - Intersection spectral bands - m times})
holds for $m\in\N$ and $V>0$, then
\[
2m\leq\sum^{2m}_{j=1}c_{\ell+j}<\frac{2}{V}.
\]
As in the proof of Lemma~\ref{lem: map boundary-spectrum is injective},
$m$ dictates how many spectral bands of type $B$ overlap (at least
$2m+2$) and $\ell$ is the level of the vertex $w^{R}_{k_{0}+2}$.
For example, these bounds imply that that $4$ bands of type $B$
can overlap only if $V<1$; this bound on $V$ is even smaller if
the digits $c_{\ell+1},c_{\ell+2}>1$. The bounds we provide here
may be considered as a refinement of Proposition~\ref{prop: traceMaps}\ (\ref{enu:.Prop-traceMaps-3 intersection property})
about the empty intersection of three spectra if $V>4$, which turned
out very useful in earlier works. Further note that we only provided
here a rough estimate which is enough for our purpose to prove injectivity.
However, these estimates may be further refined. This might be useful
for obtaining estimates on the Hausdorff dimension of the spectrum.
\end{rem}

\begin{proof}[Proof of Theorem~\ref{thm: paths and spectra}]
 Theorem~\ref{thm: paths and spectra} is a combination of Lemma~\ref{lem: Lipschitz continuous limiting spectrum},
Lemma~\ref{lem: IDOS is independent of V}, Lemma~\ref{lem: map boundary-spectrum is surjective},
Lemma~\ref{lem: order preserving E_=00005Calpha} and Lemma~\ref{lem: map boundary-spectrum is injective}.
\end{proof}

\subsection*{Acknowledgments}

 We are grateful to David Damanik, Jake Fillman and Anton Gorodetski
for inspiring discussions on the developments of the DTMP for Sturmian
systems. We thank Barak Biber and Yannik Thomas for extensive discussions
on the work of Laurent Raymond helping us to get a deeper insights.
We are thankful to Michael Baake for organizing and hosting a joint
meeting with Laurent Raymond in Bielefeld on December 2023. We thank
Laurent Raymond for his support and our interesting discussions.

We thank the Israel Institute of Technology and the University of
Potsdam for providing excellent working conditions during our mutual
visits. The final part of this work was done during a joint visit
at the Simon center in Stony Brook on June 2022. This happened during
a Workshop on \emph{Ergodic Operators and Quantum Graphs} organized
by David Damanik, Jake Fillman and Selim Sukhtaiev. SB was partially
supported by the Deutsche Forschungsgemeinschaft {[}BE 6789/1-1 to
SB{]} and the Maria-Weber Grant 2022 offered by the Hans Böckler Stiftung.
RB was supported by the Israel Science Foundation (ISF Grants No.
844/19 and 2362/25).

\bibliographystyle{amsalpha}
\bibliography{references}

\appendix
\appendix
\renewcommand{\thesection}{\Roman{section}}

\section{Sturmian dynamical systems\label{App: Sturmian dynamical systems}}

This appendix contains a very short description of Sturmian dynamical
systems. A thorough background may be found in the books \cite{Fogg_book02,Loth02,DaFi24-book_2}.
The sequences
\[
\omega_{\alpha}(n):=\chi_{[1-\alpha,1[}(n\alpha\mod 1),\quad n\in\N,~\alpha\in[0,1],
\]
are called \emph{mechanical words} \cite[Sec.~2.1.2]{Loth02}. If
$\alpha\not\in\Q$, $\omega_{\alpha}$ is also called a \emph{Sturmian}\emph{
sequence}. They naturally define a dynamical system as follows. Let
$\Aa:=\{0,1\}$ be equipped with the discrete topology and $\Aa^{\Z}:=\{\omega:\Z\to\Aa\}$
be the compact metrizable space equipped with the product topology.
Consider the shift $T:\Aa^{\Z}\to\Aa^{\Z},~(T\omega)(n):=\omega(n-1),~n\in\Z,$
being a homeomorphism. This induces a continuous group action $\Z\curvearrowright\Aa^{\Z}$
via $(n,\omega)\mapsto T^{n}\omega$. For $\alpha\in[0,1]$, we have
$\omega_{\alpha}\in\{0,1\}^{\Z}$ and its associated orbit closure
(in the product topology)
\[
\Omega_{\alpha}:=\overline{\textrm{Orb}(\omega_{\alpha})}:=\overline{\set{T^{n}\omega_{\alpha}}{n\in\Z}}
\]
defines a dynamical system $\Z\curvearrowright\Omega_{\alpha}$. The
dynamical system $\Omega_{\alpha}$ is minimal (namely for all $\omega\in\Omega_{\alpha}$,
we have $\Omega_{\alpha}:=\overline{\textrm{Orb}(\omega)}$) and uniquely
ergodic (it admits a unique shift invariant probability measure).
Particular elements of $\Omega_{\alpha}$ are the sequences $\omega_{\alpha,\xi}\in\{0,1\}^{\Z}$
for $\xi\in[0,1]$ defined by $\omega_{\alpha,\xi}(n):=\chi_{[1-\alpha,1[}(\xi+n\alpha\mod 1)$,
which are also called \emph{mechanical words}. For $\alpha,\xi\in[0,1]$
and $V\in\R$, consider the self-adjoint operator $H_{\alpha,V,\xi}:\ell^{2}(\Z)\to\ell^{2}(\Z)$
defined by
\[
(H_{\alpha,V,\xi}\psi)(n):=\psi(n+1)+\psi(n-1)+V\omega_{\alpha,\xi}(n)\thinspace\psi(n).
\]
If $\xi=0$, this operator coincides with $H_{\alpha,V}$ defined
in Equation~(\ref{eq: Hamiltonian defined}). Let $\alpha\in[0,1]$
and $V\in\R$ be fixed. Since $\omega_{\alpha,\xi}\in\Omega_{\alpha}$
and $\Omega_{\alpha}$ is minimal \cite[Cor.~10.2.17]{DaFi24-book_2},
the spectrum $\sigma(H_{\alpha,V,\xi})$ is independent of $\xi\in[0,1]$
and coincides with $\sigma(H_{\alpha,V}).$ Therefore, we set $\xi=0$
throughout this work.

Moreover, since the dynamical system $\Omega_{\alpha}$ is uniquely
ergodic, the limit 
\[
\lim_{n\to\infty}\frac{\#\set{\lambda\in\sigma\left(H_{\alpha,V,\xi}|_{[0,n-1]}\right)}{\lambda\leq E}}{n}
\]

exists and is independent of $\xi\in[0,1]$, see \cite[Thm.~4.9.11]{DaFi22-book_1},
\cite[Cor.~10.2.17]{DaFi24-book_2}. Therefore the IDS (\ref{eq: definition of DOS})
exists and we can set $\xi=0$.

\section{An explicit formula of the IDS via the spectral tree \label{App: IDS-explicit-formula}}

In Theorem~\ref{thm: paths and spectra}~(\ref{enu: thm-paths and spectra - IDOS})
it is stated that there exists a function $N_{\alpha}:\partial\tree\rightarrow[0,1]$
such that for all $V>0$, $\IDS\left(\emap(\gamma;V)\right)=N_{\alpha}(\gamma)$.
This statement is proven in Lemma~\ref{lem: IDOS is independent of V}.
Furthermore, one may provide an explicit expression of this function.
This is done in \cite{Raym95} for $V>4$ using a coding scheme and
we shortly present here an adaptation of this expression using infinite
paths, i.e., elements of $\partial\tree$. Towards this, we define
for $\gamma=(u_{0},u_{1},\ldots)\in\partial\tree$, the \emph{level
function} $\ell:\N_{0}\to\N_{-1}$ by setting $\ell(j)$ to be the
level of $u_{j}$. Next define the \emph{relative $A$-index} $\relA(\gamma,j)$
to be the number of vertices $w$ in level $\ell(j)+1$ admitting
an edge $(u_{j},w)\in\mathcal{E}_{\alpha}$ (i.e. $w$ has label $A$)
such that $w\prec u_{j+1}$, see Figure~\ref{fig: IDS pi's}. Note
that by the tree construction
\[
\relA(\gamma,j)\in\begin{cases}
\left\{ 0,\ldots,c_{\ell(j)+1}-1\right\}  & \textrm{if }u_{j}\textrm{ has label }A,\\
\left\{ 0,\ldots,c_{\ell(j)+1}\right\}  & \textrm{if }u_{j}\textrm{ has label }B,
\end{cases}
\]
and $\relA(\gamma,j)=0$ if and only if $u_{j+1}$ is either the leftmost
vertex with label $B$ connected to $u_{j}$ or the leftmost vertex
with label $A$ connected to $u_{j}$. Finally, set

\[
\delta_{A}(\gamma,j):=\begin{cases}
1 & \textrm{if }u_{j}\textrm{ has label }A,\\
0 & \textrm{if }u_{j}\textrm{ has label }B.
\end{cases}
\]
\begin{figure}[tb]
\includegraphics[scale=0.9]{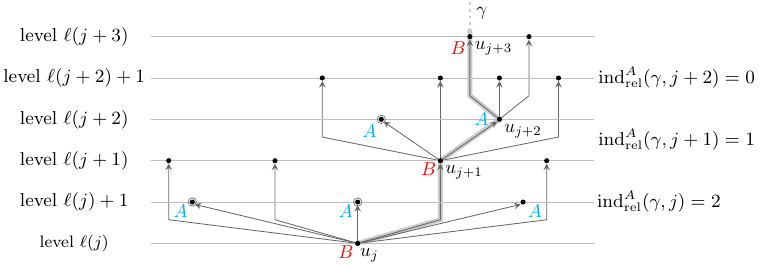} \caption{The figure demonstrating the notation introduced in the appendix.
The circled vertices $\odot$ are the ones that are counted for the
corresponding relative $A$-index. Note that not all vertices are
plotted in each level.}
\label{fig: IDS pi's}
\end{figure}
 With these notations at hand, the function $N_{\alpha}:\partial\tree\rightarrow[0,1]$
is written explicitly as 
\[
N_{\alpha}(\gamma)=-\alpha+\sum_{j\in\N_{0}}(-1)^{\ell(j)}\left(\relA(\gamma,j)+\delta_{A}(\gamma,j)\right)\thinspace\left(q_{\ell(j)}\alpha-p_{\ell(j)}\right),
\]
where $p_{k},q_{k}$ are coprime such that $\alpha_{k}=\frac{p_{k}}{q_{k}}$,
see Equation~(\ref{eq: finite continued fraction expansion}). This
equality is an immediate consequence of Lemma~\ref{lem: IDOS is independent of V}
and \cite[Thm. 4.7]{Raym95}. Note that this explicit representation
is a crucial ingredient in Raymond's work to prove that all gaps are
there for $V>4$. We refer the reader also a more detailed discussion
on that in \cite[Sec.~5.3, Prop.~5.21]{BaBeBiTh22}.
\end{document}